\newcommand{\ubar}[1]{\underaccent{\bar}{#1}}
\DeclareDocumentCommand{\publicBelief}{O{\mu}}{#1}
\DeclareDocumentCommand{\privateBelief}{O{p}}{#1}
\DeclareDocumentCommand{\signalupdate}{O{q}}{#1}
\DeclareDocumentCommand{\signal}{O{s}}{#1}
\DeclareDocumentCommand{\signalsSet}{O{S}}{#1}
\DeclareDocumentCommand{\state}{O{\omega}}{#1}
\DeclareDocumentCommand{\statesSet}{O{\Omega}}{#1}
\DeclareDocumentCommand{\price}{O{\tau}}{#1}
\DeclareDocumentCommand{\action}{O{a}}{#1}
\DeclareDocumentCommand{\limitParam}{O{\alpha}}{#1}
\DeclareDocumentCommand{\deterrencePrice}{O{\price} O{d}}{#1^#2}
\DeclareDocumentCommand{\LLR}{O{x}}{\log(\frac{#1}{1-#1})}
\DeclareDocumentCommand{\llr}{O{x}}{l\left(#1\right)}
\DeclareDocumentCommand{\lBound}{O{\limitParam} O{\publicBelief}}{\ubar{#1}_{#2}}
\DeclareDocumentCommand{\uBound}{O{\limitParam} O{\publicBelief}}{\bar{#1}_{#2}}
\DeclareDocumentCommand{\eqPrice}{O{\price}}{#1^{*}}
\newcommand{\posterior}{\privateBelief_{\mu}(s)}
\newcommand{\Diff}[2]{\frac{\partial #1}{\partial #2}}
\newcommand{\gor}{\rightarrow}
\newtheorem{lemma}{Lemma}
\newtheorem{proposition}{Proposition}
\newtheorem{observation}{Observation}
\newtheorem{theorem}{Theorem}
\newtheorem{corollary}{Corollary}
\newtheorem{assumption}{Assumption}
\theoremstyle{definition}
\newtheorem{definition}{Definition}
\newtheorem{remark}{Remark}[section]
\newtheorem*{cor1}{Corollary \ref{cor:asymptotic_learning}}
\newtheorem*{cor2}{Corollary \ref{lem:AL_consumer_buys_superior}}
\newlist{secenum}{enumerate}{10}
\setlist[secenum]{label=\thesection.\arabic*,leftmargin=*}
\providecommand{\customgenericname}{}
\newcommand{\newcustomtheorem}[2]{%
	\newenvironment{#1}[1]
	{%
		\renewcommand\customgenericname{#2}%
		\renewcommand\theinnercustomgeneric{##1}%
		\innercustomgeneric
	}
	{\endinnercustomgeneric}
}
\newcommand{\blocktheorem}[1]{%
	\csletcs{old#1}{#1}% Store \begin
	\csletcs{endold#1}{end#1}% Store \end
	\RenewDocumentEnvironment{#1}{o}
	{\par\addvspace{1.5ex}
		\noindent\begin{minipage}{\textwidth}
			\IfNoValueTF{##1}
			{\csuse{old#1}}
			{\csuse{old#1}[##1]}}
		{\csuse{endold#1}
		\end{minipage}
		\par\addvspace{1.5ex}}
}
	\definecolor{ao}{rgb}{0.0, 0.5, 0.0}
\begin{document}
\title{The Implications of Pricing on Social Learning.}
\author{Itai Arieli }
\author{Moran Koren}
\author{Rann Smorodinsky}

\affil{\small Faculty of Industrial Engineering, Technion\textemdash Israel Institute of Technology. \thanks{Research supported by GIF research grant no. I-1419-118.4/2017, ISF grant 2018889, Technion VPR grants, the joint Microsoft-Technion e-Commerce Lab, the Bernard M. Gordon Center for Systems Engineering at the Technion, and the TASP Center at the Technion.}}

\renewcommand\Authands{, and }
\renewcommand\footnotemark{}
\maketitle
%\begin{center}\colorbox{yellow}{Preliminary and incomplete draft - please do not circulate}
%\end{center}
\abstract{\noindent

We study the implications of endogenous pricing for learning and welfare in the classic herding model (\cite{Banerjee1992,Bikhchandani1992}). When prices are determined exogenously, it is known that learning occurs if and only if signals are unbounded (see Smith and S\o rensen \cite{Smith2012}). By contrast, we show that learning can occur when signals are bounded as long as non-conformism among consumers is scarce. More formally, learning happens if and only if signals exhibit the  \textit{vanishing likelihood} property introduced bellow.
We discuss the implications of our results for potential market failure in the context of Schumpeterian growth with uncertainty over the value of innovations.

%Two firms produce substitute goods of unknown quality.  At each stage the firms set prices and a consumer with private information and unit demand buys from one of the firms. Both firms and consumers see the entire history of prices and purchases. Will such markets aggregate information? Will the superior firm necessarily prevail? We adapt the classic social learning model by introducing strategic dynamic pricing. We provide necessary and sufficient conditions for learning. In contrast to previous results, learning can occur when signals are bounded. This happens when signals exhibit the newly introduced \textit{vanishing likelihood} property.}

\medskip

\noindent \textit{JEL classification}: D43, D83, L13.

\medskip
\section{Introduction}
In many markets of substitute products,
the value of the various alternatives may depend on some unknown variable. This may take the form of some future change in regulation, a technological shock, environmental developments, or prices in related upstream markets. Although this information is unknown, individual consumers may receive some private information about these fundamentals. We ask whether, in such an environment, markets aggregate information correctly and the ex-post superior product  eventually dominates the market.

%\blue{Moran: Do we still need the following paragraph? We have a better example in the next segment...}
%\blue{ I've removed the paragraph on propulsion technologies - it is commented out here }

%For a concrete example, one could think of competing propulsion technologies in the automotive industry, say, electric cars versus  traditional fuel-powered cars. The value of a car strongly depends, inter alia, on the future cost of using it, which is primarily driven by its propulsion costs. Which technology allows for lower costs depends on many unknown factors such as oil prices, regulation, and more. Consumers decide which of these technologies to purchase based on the (limited) information they have about these factors, coupled with the price of each alternative. Consumers who anticipate a possible decrease in oil prices might opt for the fuel option, while others who anticipate ``green'' subsidies for electric cars might opt for those. Firms, anticipating the possibility of some private information becoming available to the consumers, set prices accordingly.

In this work we focus on the role of  social learning in such environments. We study whether the learning process guarantees an efficient outcome.
%blue{To study such markets we adopt the framework of the classical herding model \cite{Banerjee1992,Bikhchandani1992} with arbitrary information structures, to which we introduce endogenous pricing. We provide necessary and sufficient conditions for social learning whenever firms can influence demand by adjusting prices. Not surprisingly, firm behavior allows for social learning occurs under weaker conditions than those familiar when prices are set exogeneously.}
 We isolate the role of learning by introducing a simple duopoly model of common value in which consumers, with a unit demand,  choose between two substitute products, each with zero marginal cost of production.
%\blue{ Moran: This sentence is very long, can we skipe the zero marginal cost for now? WDYT?}.
%Conditional on the (unknown) state of nature, all consumers agree on the value and identity of the superior product. Consumers arrive sequentially. Each consumer receives some private information on the quality of the products. Thus consumers  have an informational advantage over firms (put differently, all the information firms receive is publicly available). In order to distil the learning effects we further assume that the aggregation of all signals fully reveals the state of nature that determines the identity of the superior firm. Our main goal is to identify conditions under which \emph{asymptotic learning} holds, that is, conditions under which information is fully aggregated in the market asymptotically.
%\blue{Moran: I think the following paragraph is repetitive, when considering this also is a non-formal introduction of our model, I suggest we  consolidate it with the previous paragraph...}
The timing of the interaction is as follows. Nature randomly chooses one of two states of nature, and so determines the identity of the firm with the superior product. At each stage the two firms observe the entire history of the market -- past prices and consumption decisions -- and simultaneously set prices.
Thereafter, a single consumer  arrives and receives a private signal regarding the state of nature. The consumer makes his consumption decision based on his signal, the pair of prices for each product, and the entire history of prices and consumption decisions. The consumer can also choose to opt out of buying any product.
Our main goal is to identify conditions under which \emph{asymptotic learning} holds; that is, information is fully aggregated in the market asymptotically.
%
%\footnote{Our model and results carry through even when past prices are not observed.}

When prices are set exogenously and fixed throughout, the above model is exactly the standard  herding model \cite{Banerjee1992,Bikhchandani1992}. In that model, as shown by Smith and S\o rensen \cite{Smith2012}, the characterization of asymptotic learning crucially depends upon the quality of agents' private signals. In particular, one must distinguish between two families of signals:
bounded versus unbounded. In the unbounded case the private beliefs of the agents are, with positive probability, arbitrarily close to zero and one. Therefore, no matter how many people herd on one of the alternatives, there is always a positive probability that the next agent will receive a  signal that will make him break away from the herd toward the other alternative. This property, as shown by  Smith and S\o rensen \cite{Smith2012}, entails asymptotic learning.
The same logic applies in our model as well: when signals are unbounded, even if the prior is extremely in favor of one product, with positive probability there will be a consumer who gets a sufficiently strong  signal,that tilts the consumption decision toward the a priori inferior product; thus, under strategic pricing and unbounded signals, asymptotic learning holds.

The learning results in our model depart from those of the canonical model when signals are bounded. In the herding model there is always a positive probability that the suboptimal alternative will eventually be chosen by all agents. However, intuition suggests that when prices are endogenized they serve to prevent such a herding phenomenon. Hypothetically, once society stops learning and a herd develops on the product of one firm, the other firm will lower its product price to attract new consumers and learning will not cease. It turns out that this intuition, although not entirely correct, does have some merit. In order for the intuitive argument to hold, signals must exhibit a property referred to here  as \textit{vanishing likelihood}.
%\blue{ Moran: I think we should also include the part where the strong firm price aggressively to induce the herd. This argument is not true when we restrict prices to be positive. Furthermore, we show the exact opposite.... }

When signals are bounded the posterior belief of any agent, given his signal, is bounded away from zero and one for any interior prior. The proportion of agents whose posterior lies within $\varepsilon$ of the posterior distribution's boundaries obviously shrinks to zero as $\varepsilon$ goes to zero. We say that signals exhibit  \textit{vanishing likelihood} if
%the rate at which this happens also goes to zero. That is, vanishing likelihood holds if
the \emph{density} of consumers at the posterior belief distribution's boundaries goes to zero.

Consumers who receive signals that induce such extreme posterior beliefs are those consumers who are likely to go against a herd and purchase the less popular product. We refer to such consumers as {\it nonconformists}. With this interpretation in mind the property of ``vanishing likelihood'' serves as a measure of the prevalence of nonconformism. More particularly, we associate vanishing likelihood with a negligible level of nonconformism while signals that do not exhibit vanishing likelihood are associated with significant nonconformism.

When society herds, each agent follows in the footsteps of  his predecessors and thus, intuitively, one expects nonconformism (when signals do not exhibit vanishing likelihood) to induce learning.
Our main result shows that the opposite occurs:  in the presence of strategic pricing asymptotic learning holds if and only if signals have the vanishing likelihood property.

%\blue{The paragraph on the intuition was removed from here}

%The intuition behind our main result is as follows. With negligible non conformism whenever a herd forms, the popular firm expects the next consumer to conform with high probability. Therefore, the popular firm's  optimal action is to exploit this and set a high price. The popular firms' high price gives the less popular firm an opportunity to compete by proposing a low price,  and  in the rare case where the consumer has an extreme signal, to sell to him. The consumer's deviation from the herd entails additional social learning.
% By contrast, when non conformism is significant, the leading firm optimally lowers its price in order to increase its market share, namely, to sell to the non conformists. By sufficiently lowering its price, the popular firm eventually drives the other firm out of the market and learning stops.

\subsection{Schumpeterian Growth}\label{sec:schumpeter}
%A natural application of our model is in the context of Schumpeter's notion of ``creative destruction"  (\cite{Schumpeter1942}).
It is widely agreed that innovation and the evolution of technology constantly propel the economy forward. New technologies replace older ones and may improve product quality, reduce production costs, and often completely disrupt an industry.

However, not every innovation entails improvement. Arguably, innovations that do not entail improvement  will naturally be driven out of the market and only those that do will prevail. This argument forms the basis of the  evolutionary economics literature that dates back to Marx, Veblen, and Schumpeter.

In his seminal work, Schumpeter \cite{Schumpeter1942} described the process of economic growth, which he refers to as ``Capitalism,"  as an evolutionary process that is shaped by ``gales of creative destruction." Some  of his contemporaries had argued that large and profitable firms are the source of innovation and so regulation protecting them was essential to R\&D investments. By contrast, Schumpeter argued that incumbent firms, anticipating innovation by potential entrants, invest in R\&D to stay ahead of the game. Therefore such regulation is unwarranted, and may even be detrimental. However, such profitable incumbents may also use their power to drive innovation away by lowering prices. This is true in particular when it is hard to identify which innovation constitutes an improvement and which does not.

Does the evolutionary process guarantee that the economy will successfully separate the wheat from the chaff? This question becomes more acute with the accelerated pace of innovation witnessed in the past two decades \cite{OECD2015}.

Our theoretical results shed light on this issue and relate the outcome of the evolutionary process to the market structure. Our model shows that whenever the proportion of nonconformist consumers (often referred to as ``early adopters'' in the context of technological revolutions) is insignificant (a phenomenon captured by the technical notion of vanishing likelihood),  the evolutionary process successfully sieves the better technologies. However, whenever this proportion is significant then the evolutionary process may fail and policies to support entry may be warranted in order to sustain Schumpeterian growth.
%Detecting whether the demand exhibits vanishing likelihood may require some new econometric methods which we briefly touch on in the sequel.

In Section \ref{sec:examples} we discuss two case studies from the late $90$ related to technological innovations. In both cases the innovative technology turned out to be the better than the incumbent one but only in one of these cases was it adopted by society. Our model provides an explanation as it distinguishes between the two cases, based on whether the underlying information structure complied with the vanishing likelihood condition.

\subsection{Related Literature}
Our work primarily contributes to the work of Bikhchandani, Hirshleifer, and Welch \cite{Bikhchandani1992} and Banerjee \cite{Banerjee1992}, who introduced
models of social learning with agents who act sequentially. Their primary contribution was to point out the possibility of information cascades and market failure when signals are bounded. Smith and S\o rensen \cite{Smith2012} characterize the information structure that induces such a potential market failure.
%In related work,  Acemoglu, Dahleh, Lobel, and  Ozdaglar \cite{Acemoglu2011} study information aggregation in social networks. They provide sufficient conditions for the network structure under which unbounded signals entail asymptotic learning even under limited observability.
In these and  many of the follow-up papers, prices are assumed exogenous and fixed throughout. The primary departure of our model from this line of research is that our model incorporates endogenous pricing. %\blue{ Moran: But we do not introduce prices for the first time, in the next paragraph we present a list of work on priceing. Maybe it should say something in the line "Introduce pricing under a general signal structure or something... }
We associate a favorite firm (product) with each state of nature and allow for the firms to set prices dynamically, based on the information available in the market.

 Avery and Zemsky \cite{Avery1998} incorporate dynamic pricing into herding models. They consider a single firm whose product value is associated with an (unknown) state of nature. Instead of offering the product at a fixed price, as in the earlier papers, they assume that the price is set dynamically to be the expected value of the product conditional on all the publicly available information. Since their primary interest is to study financial markets, they assume that there is a market maker that  uses all the publicly available data to set prices. By contrast, we assume that the firms themselves set the prices.

A model that is reminiscent of our model is that of  Bose et al. \cite{Bose2006, Bose2008} who study a monopoly, with a good of uncertain quality, that dynamically adjusts prices to compete against an outside option. Consumers arrive sequentially and make a consumption decision based on their predecessors' decisions, past prices, and an additional private signal. In \cite{Bose2006} the authors restrict attention to signal structures with finitely many signals  whereas in \cite{Bose2008} they further restrict attention to symmetric binary signal structures.
In both models it is shown that herding is inevitable and if the public belief is sufficiently in favor of the monopoly, then the monopoly will price low enough to attract all consumers, regardless of their realized signal. As vanishing likelihood never holds when the signal space is finite (see the discussion in Section \ref{sec:gen_sig}), the results of Bose et al.,  albeit in a monopoly framework, hold in our duopolistic model  (see the additional discussion in  Section 5.3). Restricting  the analysis to finite signal spaces would not allow the unraveling of the vanishing likelihood property that  we introduce and that characterizes learning when prices are endogenous.

%This result parallels our observation regarding the favorite firm deterring its competitor whenever the vanishing likelihood condition is not satisfied.
%Whenever the signal space is finite the vanishing likelihood property is never satisfied. Consequently, this restriction prevents Bose et al from unraveling the vanishing likelihood property that characterizes learning when pricing are set endogenously, even when signals are bounded.

Moscarini and Ottaviani \cite{Moscarini1997} study the duopoly case and their paper focuses on a single-stage interaction with two firms and a single knowledgeable consumer. In fact, it is exactly the model of the stage game ($\Gamma(\mu)$) we study in Section \ref{section_Gamma}, except that their model is restricted to a binary and symmetric signal space. Unsurprisingly, whenever the prior belief is above (or below) some threshold, all equilibria in their model form a \emph{deterrence equilibrium} (see definition \ref{def:DE}), where one firm prices out the other firm. Clearly, the emergence of a deterrence equilibrium implies that learning stops in the repeated model. The authors go on to provide comparative statics over the threshold public belief for which learning stops as a function of the informativeness of the signal (here is where they leverage the restricted signal space). As signals become more informative the thresholds move to the extremes.

Our main result on the one-shot game, Theorem \ref{thm:SSG_results}, argues that learning stops whenever the \emph{vanishing likelihood} condition does not hold. As this condition can never hold for a finite signal space the result in \cite{Moscarini1997} follows as a corollary. The main take-home message from comparing our work with \cite{Moscarini1997} is that  learning is  determined not by the level of informativeness of the signals but rather by the vanishing likelihood condition.  In particular, signal distributions that satisfy vanishing likelihood need not be highly informative. The restriction to a binary model, in this case, is misleading. A similar distinction is valid in the monopoly setting of Bose et al. \cite{Bose2006, Bose2008}.
%\footnote{We discuss this further in section ??}
%\blue{ Moran: Under the new generalized results- we no longer believe, but explicitly show this, am I missing something here?}

Mueller-Frank \cite{Mueller-frank2012,Mueller-frank2016} introduces a pair of models with dynamic pricing of a monopoly \cite{Mueller-frank2016} and a duopoly \cite{Mueller-frank2012}. The model is very similar to ours with the distinction that the firms have the informational advantage and know the true state of the world.%
\footnote{When firms have an informational advantage the equilibrium analysis, as Mueller-Frank points out,  crucially hinges on consumers' off-equilibrium beliefs.  This is not the case in our model, which consequently allows for robust observations.}
Mueller-Frank does not characterize the informational conditions that entail learning, as we do. Rather, he studies the connection between welfare and learning and shows that learning is not sufficient for welfare maximization (see Corollary \ref{cor:asymptotic_learning}). It is worth noting that in our model, by contrast, learning is necessary and sufficient for welfare maximization.

The paper is organized as follows. Section \ref{section:main results} presents the model and the
main theorem for the case where firms are myopic. Section \ref{section:proof} provides the proof
of the main result. Section \ref{sec:non-myopic} is an extension of our model to the case where
firms are farsighted. Section \ref{section:extensions} informally discusses some extensions and Section \ref{section:discussion} concludes.

\section{Social Learning and Myopic Pricing}\label{section:main results}

Our model comprises a countably infinite number of consumers, indexed by $t\in\mathbb{N}$, and two firms: Firm $0$ and Firm $1$.
There are two states of nature $\Omega = \{0,1\}$. In state $\omega$, firm $\omega\in\{0,1\}$ produces the superior product. We normalize the value of the superior product to $1$ and the value  of the inferior product to $0$. In every time period $t$ the two firms first set (non-negative) prices
$(\tau_0^t,\tau_1^t)\in[0,1]^2$ for their products.  Then consumer $t$ receives a private signal and must decide whether to buy product $0$, product $1$, or neither product. Formally, the action set of every consumer is $A=\{0,1,e\}$, where the action $a$ corresponds to the decision to buy from firm $i$ and the action $a=e$ corresponds to the decision to exit the market and not to buy either product.  The payoff of every consumer $t$, given the price vector $(\tau_0,\tau_1)$ as a function of the realized state $\omega$, is
\begin{equation}\label{eq:util_def}
u(\action,\tau_0,\tau_1,\omega)=\begin{cases}
0\mbox{ if }&\action=e \\
1-\tau_a\mbox{ if }&\action=\state\\
-\tau_a&\mbox{ otherwise.}
\end{cases}
\end{equation}

For simplicity we assume that both firms have no marginal cost of production.
%at every given time period $t$.
Hence, firm  $i$'s stage payoff, given a price vector $(\tau_0,\tau_1)$, can be described as a function of the consumer's decision as follows:
\begin{equation}\label{eq:profit_def}
\pi_i(\action,\tau_0,\tau_1,\omega)=\begin{cases}
\tau_i  \mbox{ if }&\action = i \\
0 \mbox{ if }&\mbox{ otherwise.}
\end{cases}
\end{equation}

We assume that the state $\omega$ is drawn at stage $t=0$ according to a commonly known prior distribution,
such that $P(\omega = 0)=\mu_0 =1- P(\omega = 1)$.
The state $\omega$ is unknown to both the firms and the consumers. Each consumer $t\in\mathbb{N}$ forms a belief on the state using two sources of information: the history of prices and actions, $h_t\in H_{t} = ([0,1]^{2}\times \{0,1,e\})^{t-1}$, and a private signal
$s_t \in S$ (where $S$  is some abstract measurable signal space).\footnote{An alternative model is to assume that consumers do not observe prices. In this alternative formulation our results go through when restricting attention to pure equilibria. This follows from the observation that the consumption history determines the corresponding equilibrium prices at each stage.} The firms observe only the realized history $h_t\in H_t$ at every time $t$ and receive no private information.  Conditional on the state  $\omega$, signals are independently
drawn according to a probability measure $F_\omega$. We refer to  the tuple $(F_0,F_1,S)$ as a \textit{signal structure}. %(often we omit the reference to $S$).
 We assume throughout that $F_{0}$ and $F_{1}$ are mutually absolutely continuous with respect to each other.\footnote{$F_0$ and $F_1$ are mutually absolutely continuous whenever $F_0(\hat S)>0 \iff F_1(\hat S)>0$ for any measurable set $\hat S \subset S$. Note that with this assumption the probability of a fully revealing signal, for which the posterior probability is either $0$ or $1$, is zero.}
The prior $\mu_0$ and the functions $F_{0}$ and $F_{1}$ are common knowledge among consumers and firms.

%Let $f_{\omega}$ denote the Radon-Nikodym derivative of $F_{\omega}$ with respect to the probability measure $\frac{F_0+F_1}{2}$. We consider the random variable $\privateBelief(\signal)\equiv\frac{f_{0}(s)}{f_{0}(s)+f_{1}(s)}$, which is the the posterior probability that $\omega =0$, conditional on observing the signal $s$, when the prior is $\mu=0.5$.
%
%Let ${G}_{\state}(x)=F_{\state}(\{s\in\signalsSet|\privateBelief(s)<x\}), \ \omega=0,1$, be the two cumulative distribution functions of the random variable $\privateBelief(s)$ induced by the two probability distributions, $F_{\omega}, \ \omega=0,1$, over $S$. The support of ${ G}_{\state}$ is the interval $[\bar{\limitParam}, \ubar{\limitParam}]$ where
%$\ubar{\limitParam}=\inf_{x \in [0,1]}{ G}_{0}(x)$ and
%$\bar{\limitParam}=\sup_{x \in [0,1]}{ G}_{0}(x)$.%
%\footnote{Recall that $F_0$ and $F_1$ are mutually absolutely continuous and so they have the same support.}
%Throughout we shall assume that the functions $({ G}_{\state}(x))_{\state=0,1}$ are differentiable on $(\ubar{\limitParam},\bar{\limitParam})$ with continuous derivatives $({ g}_{\state}(x))_{\state=0,1}:[\ubar{\limitParam},\bar{\limitParam}]\rightarrow\mathbb{R}_+$.

We let $H=\cup_{t\geq 1} H_t$ be the set of all finite histories and let $H_\infty=([0,1]^{2}\times \{0,1,e\})^\infty$ be the set of all infinite histories.
We let $\mathcal{A}\subset \Delta(\{0,1,e\})^{[0,1]^2\times S}$ be the set of decision rules for the consumer; i.e., $\mathcal{A}$ is the set of all measurable functions that map pairs consisting of a price vector and a signal to a consumption decision.
A \textit{(pure) strategy for consumer $t$}  is a measurable function $\sigma_t:H_{t}\rightarrow \mathcal{A}$ that maps   every history $h_t\in H_t$ and signal $s_t\in S$ to a decision rule.
%\blue{ Moran: I think this definition is a bit confusing, it appears as if we neglect the current period prices (I understand that by this definition, they are embedded in $h_t,$ but the way it is written now makes me think the consumer's decision is done simultaneously with the declaration of prices.). In addition, do we assume the consumer plays pure strategy or is this a result from equation (5)? If this is the latter, can we define it over mixed consumer strategies as well?}
%\footnote{In our setting the probability that the consumer is indifferent between two alternatives is zero, therefore, we can with out loss of generality, restrict attention to pure strategies for the consumers.}
We denote by $\bar{\sigma}=(\sigma_t)_{t\geq 1}$ a pure strategy profile for the consumers. We can view $\bar{\sigma}$ as a function $\bar{\sigma}:H\rightarrow \mathcal{A}.$ %A strategy for firm $i$ is a sequence $\_i=(\price^t_i)_{t\geq 1}$ such that for every time $t$, $\price^{t}_{i}:H_{t}\rightarrow [0,1]$ is a measurable function.
 A (behavioral) strategy for firm $i$ is a (measurable) mapping $\bar{\phi}_{i}:H\rightarrow \Delta([0,1])$. We note that the strategy profile $(\bar{\phi_0},\bar{\phi_1},\bar{\sigma})$ together with the prior $\mu_0$ and the signal structure $(F_0,F_1,S)$ induce a probability distribution $\mathbf{P}_{(\bar{\phi_0},\bar{\phi_1},\bar{\sigma})}$ over $\Omega\times H_\infty \times S^\infty$.
% \red{Rann: We should define $H^\infty$ and use it instead of $H$. We should also mention the corresponding $\sigma$-algebra.}
 %\blue{ Moran: Here we talk about ``(behavioral) strategy", in the appendix we talk about mixed. Is there a reason for not using the same terminology here and in the analysis of the stage game? }

Let $\mu_t=\mathbf{P}_{(\bar{\phi}_0,\bar{\phi}_1,\bar{\sigma})}(\omega=0|h_t)$ be the probability that the state is $0$ conditional on the realized history $h_t\in H$. We call $\mu_t$ \emph{the public belief at time} $t$.
The following observation regarding the sequence of public beliefs, $\{\mu_t\}_{t=1}^{\infty},$ is straightforward.
\begin{observation}\label{obs:martingale}
$\{\mu_t\}_{t=1}^{\infty}$ is a martingale. Thus, by the martingale convergence theorem, it must converge almost surely to a limit random variable $\mu_\infty\in[0,1]$.
\end{observation}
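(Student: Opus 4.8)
The plan is to recognize $\mu_t$ as a conditional expectation of a fixed integrable random variable with respect to an increasing filtration, and then to invoke the tower property. First I would work on the probability space $\bigl(\Omega\times H_\infty\times S^\infty,\ \mathbf{P}_{(\bar{\phi}_0,\bar{\phi}_1,\bar{\sigma})}\bigr)$ and let $\mathcal{F}_t$ denote the sub-$\sigma$-algebra generated by the random history $h_t$ (equivalently, by the price vectors and consumption decisions of consumers $1,\dots,t-1$). Since $h_{t+1}$ is obtained from $h_t$ by appending the time-$t$ price pair together with consumer $t$'s action, we have $\mathcal{F}_t\subseteq\mathcal{F}_{t+1}$, so $\{\mathcal{F}_t\}_{t\ge 1}$ is a filtration. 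By the very definition of the public belief, $\mu_t=\mathbf{P}_{(\bar{\phi}_0,\bar{\phi}_1,\bar{\sigma})}(\omega=0\mid h_t)=\mathbf{E}\bigl[\1{\omega=0}\mid\mathcal{F}_t\bigr]$, which is $\mathcal{F}_t$-measurable and integrable (being bounded by $1$), so $\{\mu_t\}$ is adapted.

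Next I would apply the tower property of conditional expectations: using $\mathcal{F}_t\subseteq\mathcal{F}_{t+1}$,
\[
\mathbf{E}\bigl[\mu_{t+1}\mid\mathcal{F}_t\bigr]=\mathbf{E}\Bigl[\mathbf{E}\bigl[\1{\omega=0}\mid\mathcal{F}_{t+1}\bigr]\,\Big|\,\mathcal{F}_t\Bigr]=\mathbf{E}\bigl[\1{\omega=0}\mid\mathcal{F}_t\bigr]=\mu_t .
\]
This shows that $\{\mu_t\}_{t\ge1}$ is a martingale with respect to $\{\mathcal{F}_t\}_{t\ge 1}$. Finally, since $\mu_t\in[0,1]$ for every $t$, the sequence is bounded in $L^1$ (indeed uniformly integrable), so Doob's martingale convergence theorem yields a random variable $\mu_\infty$ with $\mu_t\to\mu_\infty$ almost surely; and $\mu_\infty\in[0,1]$ because the property of taking values in $[0,1]$ is preserved under almost-sure limits.

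There is essentially no hard step here; the only point requiring a moment of care is the verification that the history filtration $\{\mathcal{F}_t\}$ is genuinely increasing and that $\mu_t$ is adapted to it, both of which follow directly from the way $h_t$ is built up and from the definition of the induced measure $\mathbf{P}_{(\bar{\phi}_0,\bar{\phi}_1,\bar{\sigma})}$. One should also keep in mind that the argument is carried out for an arbitrary but fixed strategy profile $(\bar{\phi}_0,\bar{\phi}_1,\bar{\sigma})$, so the observation is really a statement that holds along every such profile.
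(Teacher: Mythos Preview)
Your argument is correct and is exactly the standard verification the paper has in mind: the paper does not give a proof of this observation at all, simply calling it ``straightforward,'' and your filtration-plus-tower-property argument is the canonical way to justify it. There is nothing to add or correct.
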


A strategy profile $(\bar{\phi_0},\bar{\phi_1},\bar{\sigma})$ and a history $h_t$ %, a time $t$, a pair of prices $(\tau_0,\tau_1)\in [0,1]^2$, and a decision rule $\sigma\in\mathcal{A},$
 induce both an expected payoff $\Pi^t_i(\tau_0,\tau_1,\bar\sigma|h_t)$ for every firm $i$ and an expected consumer utility  $u_t(\tau_0,\tau_1,\bar\sigma|h_t).$ %for consumer $t$ for every realized history $h_t\in H_t$.
  We can now define the notion of a Bayesian Nash equilibrium for myopic firms.
%\blue{ Moran: The difference between $\sigma$ and $\bar{\sigma}$ was not clear to me in first reading, is it that $\bar{sigma}$ denotes the actions of previous consumers? }
\begin{definition}\label{def:equilibrium0}
A strategy profile $(\bar{\phi}_0,\bar{\phi}_1,\bar{\sigma})$ constitutes a \emph{myopic Bayesian Nash equilibrium} if for every time $t$ the following conditions hold for almost every history $h_t\in H_t$ that is realized in accordance with $\mathbf{P}_{(\bar{\phi}_0,\bar{\phi}_1,\bar{\sigma})}$:
\begin{itemize}
\item For every $\tau\in[0,1]$,% \blue{ Moran: Shouldn't it be $\tilde{\phi}\in\Delta[o,1]$?}
$$\Pi^t_i(\bar{\phi}_0,\bar{\phi}_1,\bar{\sigma}|h_t)\geq \Pi^t_i(\tau,\bar{\phi}_{-i},\bar{\sigma}_t|h_t).$$
 \item For every price vector  $(\tau_0,\tau_1)\in[0,1]^2$, and every decision rule $\sigma\in\mathcal{A},$
 $$U_t(\tau_0,\tau_1,\bar{\sigma}(h_t)|h_t)\geq U_t(\tau_0,\tau_1,\sigma|h_t).$$
\end{itemize}
\end{definition}

In words, a strategy profile $(\bar{\phi}_0,\bar{\phi}_1,\bar{\sigma})$ constitutes a myopic Bayesian Nash equilibrium if for every time $t$ and $\mathbf{P}_{(\bar{\phi}_0,\bar{\phi}_1,\bar{\sigma})}$ almost every history $h_t\in H_t$, $\bar{\phi}_i(h_t)$ maximizes the conditional expected stage payoff to every firm $i$ and $\bar{\sigma}(h_t)$ maximizes the conditional expected payoff to consumer $t$ with
respect to \emph{every} price vector $(\tau_0,\tau_1)$.

Note that our notion of equilibrium is weaker than the notion of a subgame perfect equilibrium; however, it still eliminates  equilibria with non-credible threats by consumers. One such equilibrium with non-credible threats is the following equilibrium: both firms
ask for price $0$ in every time period. Every consumer $t$ never buys a product (i.e., plays $e$) unless both firms ask for a price of $0,$ in which case he buys product $0$ whenever $\mu_t\geq \frac{1}{2}$ and product $1$ if $\mu_t<\frac{1}{2}$. Note that this equilibrium is sustained by non-credible threats made by the consumer. Such threats are eliminated by the second condition, which requires that conditional on the realized history $h_t$ the decision rule $\bar{\sigma}(h_t)$ be optimal with respect to \emph{every} price vector $(\tau_0,\tau_1)$, and not just with respect to $(\tau_0^t(h_t),\tau_1^t(h_t))$. We note that since our results in the sequel hold for all myopic equilibria they are in particular valid for subgame perfect equilibria.%
\footnote{A natural question is whether restricting attention to the stronger solution concept yields a weaker condition for learning. We conjecture that this is not the case, particularly when firms are myopic.}

%\subsection{Characterization of asymptotic learning}
%We now turn to analyze asymptotic information aggregation of the sequential game.

As is common in the literature, we define \emph{asymptotic learning} as follows.
\begin{definition}
Fix a signal structure $(F_0,F_1,S).$ Let $\mu_0\in(0,1)$ be the prior and let $(\bar{\phi}_0,\bar{\phi}_1,\bar{\sigma})$ be a strategy profile of the corresponding game. We say that \textit{learning holds for $\mu_0$ and $(\bar{\phi}_0,\bar{\phi}_1,\bar{\sigma})$} if the belief martingale sequence converges almost surely to a point belief assigning probability $1$ to the realized state.  \textit{Asymptotic learning holds for the signal structure $(F_0,F_1,S)$} if learning holds for \emph{every} prior $\mu_0$ and every corresponding myopic Bayesian Nash equilibrium $(\bar{\phi}_0,\bar{\phi}_1,\bar{\sigma})$. By contrast, \textit{asymptotic learning never holds} if for \emph{every} prior $\mu_0\in(0,1)$ and a corresponding myopic Bayesian Nash equilibrium $(\bar{\phi}_0,\bar{\phi}_1,\bar{\sigma}),$
 %\blue{ Moran: what about $\phi_0,\phi_1$? Do we abuse notation here? Should I replace $\sigma$ with $\phi?$}
 the limit public belief $\mu_\infty$ lies in $(0,1)$ with positive probability.
%\footnote{For simplicity of exposition we restrict our analysis to pure Bayesian Nash equilibrium strategies. The same arguments and
%proofs most likely carry over to  equilibria involving mixed strategies.}
\end{definition}
Thus, when asymptotic learning holds, it must be the case that consumers and firms eventually learn the superior product. In our case, even for a very strong public belief in favor of one firm, it is not a priori clear that the strong firm will dominate the market as the weak firm can always lower its price.  However, we show in Lemma \ref{lem:AL_consumer_buys_superior} that the probability of buying from the superior firm converges to one when asymptotic learning occurs. %We say that \textit{Herding} occurs if the sequence of random variables, $\mu_t$, almost surely converges to some $\mu_\infty\in(0,1)$ such that, in equilibrium of $\Gamma(\mu_\infty)$, there is a single firm which dominates the market.
Whenever asymptotic learning doesn't hold, only one firm prevails (from some time on, all consumers buy from one firm but are not 100\% certain that it is the superior one). As a result, there is positive probability that the prevailing firm is the inferior one.
%\blue{ Moran: How is this paragraph different from what we said in the introduction? This is the third time we say something in these lines (and we say it at least once more in the discussion), is it really necessary?}

\begin{definition}\label{def:G_omega}
	
	Let $f_{\omega}$ denote the Radon--Nikodym derivative of $F_{\omega}$ with respect to the probability measure $\frac{F_0+F_1}{2}$. We consider the random variable $\privateBelief(\signal)\equiv\frac{f_{0}(s)}{f_{0}(s)+f_{1}(s)}$, which is the  posterior probability that $\omega =0$, conditional on the signal $s$, when the prior over $\Omega$ is $(0.5,0.5)$.
Let ${G}_{\state}(x)=F_{\state}(\{s\in\signalsSet|\privateBelief(s)<x\}), \ \omega=0,1$, be the two cumulative distribution correspondences of the random variable $\privateBelief(s)$ induced by the two probability distributions, $F_{\omega}, \ \omega=0,1$, over $S$. Define the bounds $\bar{\limitParam}, \ubar{\limitParam}$ of the signal distribution as follows:
	$\ubar{\limitParam}=\inf_{x \in [0,1]}{ G}_{\omega}(x)>0$ and
	$\bar{\limitParam}=\sup_{x \in [0,1]}{ G}_{\omega}(x)<1$.%
\footnote{Since $F_0$ and $F_1$ are mutually absolutely continuous it follows that $G_1$ and $G_0$ have the same support.}\end{definition}

The main goal of our paper is to provide a characterization of asymptotic learning under strategic pricing in terms of the signal structure $(F_0,F_1,S).$ Such a characterization is provided by Smith and S\o rensen \cite{Smith2012} for the standard herding model where prices are set exogenously.
We start by presenting the formal definition of  \textit{bounded and unbounded signals} due to Smith and S\o rensen \cite{Smith2012}. %This property has been
\begin{definition}\label{def:bounded}
The signal structure $(F_0,F_1,S)$ is called \textit{unbounded} if  $\ubar{\limitParam}=0$ and $\bar{\limitParam}=1$. The signal structure $(F_0,F_1,S)$ is  \textit{bounded} if $\ubar{\limitParam}>0$ and $\bar{\limitParam}<1$.
\end{definition}
In words, a signal structure is unbounded if for every $\beta\in(0,1)$ the two sets $\{s:p(s)>\beta\}$ and $\{s:p(s)<\beta\}$ have positive probability under $(F_\omega)_{\omega=0,1}$.  Smith and S\o rensen's characterization shows that in the standard herding model asymptotic learning holds under an unbounded signal structure and fails under a bounded signal structure.

\subsection{Characterization of Asymptotic Learning}
For ease of exposition we make the following assumption on $({ G}_{\state}(x))_{\state=0,1}$. We refer the reader to Section \ref{section:extensions} for the general case.
\begin{assumption}\label{asumption}
We assume that the functions $\{{ G}_{\state}(x)\}_{\state=0,1}$ are differentiable on $(\ubar{\limitParam},\bar{\limitParam})$ with continuous derivatives $({ g}_{\state}(x))_{\state=0,1}:[\ubar{\limitParam},\bar{\limitParam}]\rightarrow\mathbb{R}_+$.
\end{assumption}
\begin{definition}\label{def:VL}
A signal structure $(F_0,F_1,S)$ exhibits \emph{vanishing likelihood} if  ${g}_1(\ubar{\alpha})={g}_0(\bar{\alpha})=0.$
\end{definition}
We  next show how information aggregation depends on the vanishing likelihood property.
The following theorem provides a full characterization of asymptotic learning in our model.
\begin{theorem}\label{thm:social_learning}
If signals are unbounded or if signals are bounded and exhibit vanishing likelihood then asymptotic learning holds. If signals are bounded and do not exhibit vanishing likelihood then asymptotic learning never holds.
\end{theorem}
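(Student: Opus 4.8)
The plan is to reduce the repeated game to the family of one-shot games $\{\Gamma(\mu)\}_{\mu\in(0,1)}$ and then argue about the belief martingale. Since firms are myopic, at $\mathbf{P}_{(\bar\phi_0,\bar\phi_1,\bar\sigma)}$-almost every history $h_t$ the triple $(\bar\phi_0(h_t),\bar\phi_1(h_t),\bar\sigma(h_t))$ is a Bayesian Nash equilibrium of $\Gamma(\mu_t)$, the one-shot game with prior $\mu_t$. So I would first establish the one-shot characterization (Theorem~\ref{thm:SSG_results}) in the following form: every equilibrium of $\Gamma(\mu)$ is either a \emph{deterrence equilibrium} (Definition~\ref{def:DE}) -- one firm sets a price so low that \emph{every} consumer buys from it regardless of the signal, so nothing is revealed and $\mu_{t+1}=\mu_t$ -- or an \emph{informative} equilibrium -- the consumer's choice splits the signal space into at least two positive-probability cells inducing distinct posteriors, so $\mu_{t+1}$ conditional on $h_t$ takes at least two distinct values with positive probability and the separating cutoff lies strictly inside $(\ubar{\alpha},\bar{\alpha})$. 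The heart of this step is the computation that decides when deterrence is an equilibrium: Firm~$0$ can price Firm~$1$ out at belief $\mu$ only if $\mu\ge 1-\ubar{\alpha}$, and when it can, whether it \emph{wants} to is governed by the sign of its marginal profit at the most aggressive deterring price $\tau_0^{d}$, which equals $1-\tau_0^{d}\bigl(\mu g_0(\ubar{\alpha})+(1-\mu)g_1(\ubar{\alpha})\bigr)\tfrac{d\hat p}{d\tau_0}$ (here $\hat p$ is the induced cutoff and $\tfrac{d\hat p}{d\tau_0}>0$). Using the pushforward identity $g_0(x)=\tfrac{x}{1-x}g_1(x)$, this quantity is identically $1>0$ whenever $g_1(\ubar{\alpha})=0$ -- Firm~$0$ strictly prefers to raise its price and forgo the extreme consumers, hence never deters -- whereas if $g_1(\ubar{\alpha})>0$ it becomes negative as $\mu\uparrow 1$ (because $\tfrac{d\hat p}{d\tau_0}\to\infty$), so Firm~$0$ deters on an interval $[\bar{\mu},1)$ with $\bar{\mu}<1$; symmetric statements hold for Firm~$1$ with $g_0(\bar{\alpha})$. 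Consequently, if signals are unbounded ($\ubar{\alpha}=0$, so $\mu\ge1-\ubar{\alpha}$ fails for interior $\mu$) or bounded with vanishing likelihood, no $\Gamma(\mu)$ with $\mu\in(0,1)$ has a deterrence equilibrium; if signals are bounded and vanishing likelihood fails, there is a nontrivial deterrence interval adjacent to $1$, to $0$, or both.

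For the \emph{sufficiency} direction, assume no interior $\Gamma(\mu)$ admits deterrence; fix a prior and a myopic equilibrium, and recall $\mu_t\to\mu_\infty$ a.s.\ (Observation~\ref{obs:martingale}). I would show $\mathbf{P}(\mu_\infty\in(0,1))=0$: for each $k$, on the event $\{\mu_\infty\in(1/k,1-1/k)\}$ eventually $\mu_t$ lies in $[\tfrac1{2k},1-\tfrac1{2k}]$, where the played equilibrium is informative; by continuity of the equilibrium correspondence and the fact that the cutoff stays in a compact subset of $(\ubar{\alpha},\bar{\alpha})$, the cell probabilities and the gap between the two possible values of $\mu_{t+1}$ are bounded below by constants $p_k,c_k>0$, so $\mathbf{P}(|\mu_{t+1}-\mu_t|\ge c_k\mid h_t)\ge p_k$ for all large $t$ there. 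By the conditional Borel--Cantelli lemma this forces $|\mu_{t+1}-\mu_t|\ge c_k$ infinitely often a.s.\ on that event, contradicting $\mu_{t+1}-\mu_t\to 0$ unless the event is null; letting $k\to\infty$ gives $\mu_\infty\in\{0,1\}$ a.s. A standard last step finishes: by L\'evy's theorem $\mu_\infty=\mathbf{P}(\omega=0\mid\mathcal{F}_\infty)$, so $\mathbf{P}(\omega=0,\mu_\infty=0)=\mathbf{E}[\mathbf{1}\{\mu_\infty=0\}\mu_\infty]=0$ and likewise $\mathbf{P}(\omega=1,\mu_\infty=1)=0$, whence $\mu_\infty=\mathbf{1}\{\omega=0\}$ a.s.\ and asymptotic learning holds.

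For the \emph{necessity} direction, suppose signals are bounded and vanishing likelihood fails, say $g_1(\ubar{\alpha})>0$, so Firm~$0$ deters on $D=[\bar{\mu},1)$ with $\bar{\mu}<1$ (the case $g_0(\bar{\alpha})>0$ being symmetric with $D=(0,\underline{\mu}]$). Fix an arbitrary prior $\mu_0\in(0,1)$ and construct the myopic equilibrium that, at each $h_t$, plays a deterrence equilibrium of $\Gamma(\mu_t)$ whenever $\mu_t\in D$ and an arbitrary, measurably selected equilibrium of $\Gamma(\mu_t)$ otherwise; since firms are myopic and each stage decision rule is optimal against every price vector, this is a myopic Bayesian Nash equilibrium (Definition~\ref{def:equilibrium0}). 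Under it $D$ is absorbing for $\{\mu_t\}$, and outside $D$ the martingale cannot converge to a point of $(0,1)$ by the argument just given, so $\mu_\infty\in\{0\}\cup[\bar{\mu},1)$ a.s. As $\{\mu_t\}$ is a bounded martingale with $\mathbf{E}\mu_\infty=\mu_0>0$, it is not the case that $\mu_\infty=0$ a.s., hence $\mathbf{P}(\mu_\infty\in[\bar{\mu},1))>0$; thus $\mu_\infty\in(0,1)$ with positive probability for this equilibrium, and since $\mu_0$ was arbitrary, asymptotic learning never holds.

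The step I expect to be the main obstacle is the one-shot analysis underpinning Theorem~\ref{thm:SSG_results}: making rigorous that a firm's incentive to deviate from its most aggressive deterring price is controlled by the density of marginal (``nonconformist'') consumers, so that vanishing likelihood is precisely the borderline between ``the firm always prefers to compete'' and ``the firm deters near the extreme public beliefs.'' The other technical points -- the uniform-in-$\mu$ lower bound on the informativeness of non-deterrence equilibria (needing continuity of the equilibrium correspondence together with the cutoff remaining strictly inside $(\ubar{\alpha},\bar{\alpha})$ for interior $\mu$) and the legitimacy of the spliced profile in the necessity part (including a measurable selection of stage equilibria) -- seem comparatively routine.
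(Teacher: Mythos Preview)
Your sufficiency argument is essentially the paper's: reduce to the stage game via Observation~\ref{obs:equ}, use the absence of deterrence equilibria (parts 1--2 of Theorem~\ref{thm:SSG_results}), obtain a uniform lower bound on stage informativeness over compact belief intervals (the paper does this via Corollary~\ref{cor:asymptotic_learning}; your appeal to continuity of the equilibrium correspondence is the same idea packaged differently), and conclude $\mu_\infty\in\{0,1\}$ a.s. Your conditional Borel--Cantelli step is a clean substitute for the paper's explicit likelihood-ratio bound in~\eqref{eq:lr_changes}.

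The necessity argument, however, has a genuine gap. The phrase ``asymptotic learning never holds'' is defined in the paper to mean that for \emph{every} prior and \emph{every} myopic Bayesian Nash equilibrium, $\mu_\infty\in(0,1)$ with positive probability. You instead \emph{construct} one particular equilibrium---the profile that selects a deterrence SPE of $\Gamma(\mu_t)$ whenever $\mu_t\in D$---and show learning fails there. That establishes only the existential claim (precisely what the paper proves in the farsighted case, Theorem~\ref{prop:non-myopic-lmpe-VL-NDE}), not Theorem~\ref{thm:social_learning} as stated. What is missing is the full force of part~3 of Theorem~\ref{thm:SSG_results}: for $\mu>\bar{\mu}$, \emph{every} SPE of $\Gamma(\mu)$ is a deterrence equilibrium. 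Your first-order computation at the deterring price $\tau_0^d=2\ubar{\alpha}_\mu-1$ only shows that when $g_1(\ubar{\alpha})>0$ and $\mu$ is large, Firm~$0$ has no profitable local upward deviation from $\tau_0^d$; it does not exclude non-deterrence equilibria with Firm~$0$'s support sitting strictly above $\tau_0^d$. The paper's uniqueness argument (Proposition~\ref{prop:Unique DE}) is different in kind: suppose a sequence of non-deterrence SPE at priors $\mu_k\to 1$, let $\tau_0^k$ be the top of Firm~$0$'s support, show the equilibrium cutoff collapses toward $\ubar{\alpha}$ and prices polarize ($\tau_0^k\to 1$, $\phi_1^k\to 0$), and then compute that $\partial\Pi_0/\partial\tau_0$ at $\tau_0^k$ tends to $-\infty$, so Firm~$0$ strictly prefers to undercut its own highest price---contradicting indifference across its support. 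Once this uniqueness is in hand, necessity for an \emph{arbitrary} myopic equilibrium follows immediately: $\mu_t$ cannot converge to $1$ (it would enter $(\bar{\mu},1)$ and freeze at a value strictly below $1$), and on $\{\omega=0\}$ it cannot converge to $0$ by the L\'evy identity you already invoke; hence $\mu_\infty\in(0,1)$ with positive probability.
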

Although Theorem \ref{thm:social_learning} formally hinges on Assumption 1,  an analog of this theorem carries over completely to a model without such an assumption. This requires an alternative formulation of the vanishing likelihood condition. We elaborate on this in Section \ref{section:extensions}.
\begin{remark}
In fact, the proof of Theorem \ref{thm:social_learning} shows that whenever ${g}_1(\ubar{\alpha})>0$ the limit public belief $\mu_\infty$, conditional on state $\omega=0$, lies in $(0,1)$ \emph{with probability $1$}. Analogously, whenever ${g}_0(\bar{\alpha})>0$ the limit public belief $\mu_\infty$, conditional on the state $\omega=1$, lies in $(0,1)$ \emph{with probability $1$}.
\end{remark}
\section{Proof of the Main Result}\label{section:proof}
In the proof of Theorem \ref{thm:social_learning} we rely on the analysis of the following
three-player stage game $\Gamma(\mu)$. The game comprises two firms and a single consumer and is derived from our sequential game by restricting the game to a single period. That is, in $\Gamma(\mu)$ the state is realized according to the prior $\mu$ (state $0$ is realized with probability $\mu$ and state $1$ with probability $1-\mu$). The two firms post a price simultaneously (possibly at random) and a single consumer receives a private signal in accordance with $(F_0,F_1,S)$ and based on his private signal and the realized vector of prices takes an action $a\in\{0,1,e\}$.
The utility for the consumer is determined by equation \eqref{eq:util_def} and the utility for the firms is determined by equation \eqref{eq:profit_def}.

In the following observation, which is a direct implication of Definition \ref{def:equilibrium0}, we connect the stage game with the sequential model.
\begin{observation}\label{obs:equ}
A strategy profile $(\bar{\phi}_0,\bar{\phi}_1,\bar{\sigma})$ constitutes a myopic Bayesian Nash equilibrium if and only if
for every time $t,$ for $\mathbf{P}_{(\bar{\phi}_0,\bar{\phi}_1,\bar{\sigma})},$ and for almost every history $h_t\in H_t,$ the tuple $(\bar{\phi}_0(h_t),\bar{\phi}_1(h_t),\bar{\sigma}(h_t))$
is a subgame perfect equilibrium (SPE) of $\Gamma(\mu_t)$.
\end{observation}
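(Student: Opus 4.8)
The statement is essentially definitional: it is obtained by unwinding Definition~\ref{def:equilibrium0} together with the description of $\Gamma(\mu)$ and matching them condition by condition, the only bridge being the identity $\mu_t=\mathbf{P}_{(\bar{\phi}_0,\bar{\phi}_1,\bar{\sigma})}(\omega=0\mid h_t)$. First I would describe the interaction induced by conditioning the sequential game on a history $h_t$. Because firm $i$ is myopic, its objective at time $t$ is the expected stage profit $\pi_i$ of \eqref{eq:profit_def}; because consumer $t$ moves exactly once and has no continuation, his objective is the stage utility $u$ of \eqref{eq:util_def}; and because the firms' strategies depend only on $h_t$ and never on $\omega$, the time-$t$ price vector is, conditional on $h_t$, independent of the state, so a consumer who observes it retains the belief $\mu_t$ (this is also where the absence of a firm informational advantage is used: even an off-path price vector does not shift beliefs). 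Hence, conditional on $h_t$, the three-player interaction among firm $0$, firm $1$, and consumer $t$ is precisely the game $\Gamma(\mu)$ with $\mu=\mu_t$: nature draws $\omega$ with $\mathbf{P}(\omega=0)=\mu_t$, the firms post prices simultaneously via $(\bar{\phi}_0(h_t),\bar{\phi}_1(h_t))$, and consumer $t$ observes the posted prices together with a signal $s_t\sim F_\omega$ and plays the decision rule $\bar{\sigma}(h_t)\in\mathcal{A}$; under this identification $\Pi^t_i(\cdot\mid h_t)$ and $U_t(\cdot\mid h_t)$ are the payoff functions of $\Gamma(\mu_t)$.

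Next I would record what an SPE of $\Gamma(\mu)$ is. Since the pricing stage has simultaneous moves, $\Gamma(\mu)$ has no proper subgame there, and its proper subgames are exactly the consumer's decision problems, one for each realized price vector $(\tau_0,\tau_1)\in[0,1]^2$. Sequential rationality on these subgames says precisely that $\bar{\sigma}(h_t)$ is a best reply for \emph{every} price vector, which is the second bullet of Definition~\ref{def:equilibrium0}. Given such consumer behavior, an SPE further requires the firms' (possibly mixed) prices to form a Nash equilibrium of the induced simultaneous pricing game, i.e.\ $\bar{\phi}_i(h_t)$ maximizes expected stage profit against $\bar{\phi}_{-i}(h_t)$; since the payoff of a mixture is the average of the payoffs of the pure prices in its support, this is equivalent to $\Pi^t_i(\bar{\phi}_0,\bar{\phi}_1,\bar{\sigma}\mid h_t)\ge\Pi^t_i(\tau,\bar{\phi}_{-i},\bar{\sigma}_t\mid h_t)$ for all $\tau\in[0,1]$, i.e.\ the first bullet of Definition~\ref{def:equilibrium0}. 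Thus $(\bar{\phi}_0(h_t),\bar{\phi}_1(h_t),\bar{\sigma}(h_t))$ is an SPE of $\Gamma(\mu_t)$ if and only if both bullets of Definition~\ref{def:equilibrium0} hold at $h_t$.

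Finally I would read off the equivalence: if $(\bar{\phi}_0,\bar{\phi}_1,\bar{\sigma})$ is a myopic Bayesian Nash equilibrium then, by Definition~\ref{def:equilibrium0}, the two bullets hold for every $t$ at $\mathbf{P}_{(\bar{\phi}_0,\bar{\phi}_1,\bar{\sigma})}$-almost every $h_t$, so the induced tuple is an SPE of $\Gamma(\mu_t)$ there; the converse is the same argument run backwards. There is no real substance beyond the definitions, so there is no genuine obstacle; the only points deserving a line of care are that the pricing stage being simultaneous is exactly why ``subgame perfect'' unpacks into the two bullets and nothing stronger, and that the exceptional null sets of histories can be chosen consistently across $t$ and across the two directions of the equivalence.
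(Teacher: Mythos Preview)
Your proposal is correct and matches the paper's treatment: the paper states the observation ``is a direct implication of Definition~\ref{def:equilibrium0}'' and gives no further argument, and your write-up simply makes this unpacking explicit by identifying the proper subgames of $\Gamma(\mu_t)$ with the consumer's decision problems and the root with the firms' simultaneous pricing. There is nothing to add.
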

%We note that equilibrium behavior is determined by the realized history $h_t$ only via the public belief $\mu_t$.
%\blue{ Moran: Where do we say this? Is it a direct implication of the observation? }
The strong connection of $\Gamma(\mu)$ to our sequential game allows us to derive some insight into information aggregation from the subgame  perfect equilibrium properties of $\Gamma(\mu)$, which we analyze next.

\subsection{Analysis of $\Gamma(\mu)$}\label{section_Gamma}
We begin by studying the consumer's best-reply strategy in $\Gamma(\mu)$.
We denote the consumer's posterior belief after the consumer observes the signal $s_t=s$  by $p_{\publicBelief}(s)$. It follows readily from Bayes' rule that
\begin{equation}\label{eq:bayesian update}
 \privateBelief_{\publicBelief}(\signal)=\frac{\publicBelief \privateBelief(s)}{\publicBelief \privateBelief(s)+(1-\publicBelief)(1- \privateBelief(s))}.
 \end{equation}
The bounds $\ubar{\limitParam}$ and $\bar{\limitParam}$, together with equation \eqref{eq:bayesian update}, imply that  $p_{\mu}(s) \in [\lBound,\uBound]$ with probability one, where:
\begin{equation}\label{eq:def_bounds}
\lBound=\frac{\publicBelief \ubar{\limitParam}}{\publicBelief \ubar{\limitParam}+(1-\publicBelief)(1- \ubar{\limitParam})}\mbox{ and }
\uBound=\frac{\publicBelief \bar{\limitParam}}{\publicBelief \bar{\limitParam}+(1-\publicBelief)(1- \bar{\limitParam})}.
\end{equation}

Fix a price vector $\tau=(\tau_{0},\tau_{1})$ and note that the consumer optimizes his expected utility against $\tau$ if and only if he follows the following strategy:
\begin{equation}\label{eq:consumer_condition}
\sigma(\mu,s, \price)=\begin{cases}
a=0&\mbox{ if }\privateBelief_{\publicBelief}(\signal)-\price_{0}\geq\max\{(1-\privateBelief_{\publicBelief}(\signal))-\price_{1},0\}\\
a=1&\mbox{ if }(1-\privateBelief_{\publicBelief}(\signal))-\price_{1}\geq\max\{\privateBelief_{\publicBelief}(\signal)-\price_{0},0\}\\
a=e&\mbox{ otherwise}.
\end{cases}
\end{equation}
 We note that, under Assumption \ref{asumption}, in every perfect Bayesian Nash  equilibrium of the game $\Gamma(\mu),$ the strategy $\sigma$ constitutes a unique action for the consumer almost surely.

Note further that every mixed strategy $(\phi_{0},\phi_{1})$ induces two possible market scenarios: \textit{a full market scenario}, where,
under $\sigma,$ the consumer always buys from one of the firms for almost all signal realizations and almost every realized price vector $\tau=(\tau_0,\tau_1)$, and \textit{a non-full market scenario}, where $\sigma(\mu,s, \price)=e$ holds with positive probability.
%\begin{definition}\label{def:FM}
% A Bayesian SPE, $(\eqPrice_{0},\eqPrice_{1},\eqPrice[\sigma])$, of the $\Gamma(\publicBelief)$ is called a \textit{ full market} equilibrium if
% $Pr_{F}(\{s | \sigma^*(\mu,s,\tau^*) = e\}) = 0 $.
%\end{definition}

We can infer from \eqref{eq:consumer_condition} that the consumer buys from Firm $0$ whenever
$\privateBelief_{\publicBelief}(\signal)-\price_{0}\geq (1-\privateBelief_{\publicBelief}(\signal))-\price_{1}$
and the market is full or whenever
$\privateBelief_{\publicBelief}(\signal)-\price_{0}\geq 0$
and the market is not full.

%It can be inferred from \eqref{eq:consumer_condition} that the consumers decision whether to buy from Firm $0$ is determined by $\eqPrice[\sigma]$ as follows:
%If the market is full, then the consumer buys from Firm $0$ for almost all signal $s\in S$ with
%\begin{eqnarray*}
%\privateBelief_{\publicBelief}(\signal)-\price_{0}\geq (1-\privateBelief_{\publicBelief}(\signal))-\price_{1}.
%\end{eqnarray*}
%Or equivalently
%\begin{eqnarray*}
%\posterior\geq  \frac{1+\price_{0}-\price_{1}}{2}
%\end{eqnarray*}
%Similarly if the market is not full, then the consumer buys from Firm $0$ for almost all signal $s\in S$ with
%\begin{eqnarray*}
%\privateBelief_{\publicBelief}(\signal)-\price_{0}\geq 0.
%\end{eqnarray*}

Given a prior $\mu$ and a pair of prices $(\tau_0,\tau_1)$, we let $v_{\mu}(\tau_0,\tau_1)$ be the threshold in terms of the private belief discussed above that Firm $0$ is chosen. That is, choosing Firm  $0$ is uniquely optimal for the consumer if and only if $p(s)>v_{\mu}(\price_{0},\price_{1})$. One can easily see from the above equations that $v_{\mu}(\price_{0},\price_{1})$ has the following form:
\begin{equation}\label{eq:indif_th}
v_\mu(\price_{0},\price_{1})=\begin{cases}
\frac{(1-\mu)(1+\tau_0-\tau_1)}{2\mu-(2\mu-1)(1+\tau_0-\tau_1)}\mbox{ if the market is full,}\\
\frac{(1-\mu)\tau_0}{\mu-(2\mu-1)\tau_0}\mbox{ otherwise.}
\end{cases}
\end{equation}
 Note that $v_\mu(\price_{0},\price_{1})$ is a continuous function of $(\mu,\price_{0},\price_{1}).$

We can therefore suppress the behavior of the consumer, which, under Assumption \ref{asumption}, is determined uniquely for every price vector $(\tau_0,\tau_1)$ and almost every signal realization $s\in S$. Hence we can
 write the expected utility of Firm $0$ in the game $\Gamma(\mu)$ for the price vector $\price$ as follows:
\begin{equation}\label{eq:zero_profit1}
\begin{split}
&\Pi_{0}(\tau_0,\tau_1,\publicBelief)=\\
&\left(\mu \left(1-G_0(v_\mu(\tau_0,\tau_1)\right)+(1-\mu)\left(1-G_1(v_\mu(\tau_0,\tau_1)\right)\right)\tau_0=\\
&\left[1-\left(\publicBelief G_{0}\left(v_{\mu}(\price_{0},\price_{1})\right)+
(1-\publicBelief)G_{1}\left(v_{\mu}(\price_{0},\price_{1})\right)\right)\right]\price_{0}.
\end{split}
\end{equation}
%where $G_{\mu}\left(v_{\mu}(\price_{0},\price_{1})\right)=\publicBelief G_{0}\left(v_{\mu}(\price_{0},\price_{1})\right)+
%(1-\publicBelief)G_{1}\left(v_{\mu}(\price_{0},\price_{1})\right)$.
 A similar equation can be derived for $\Pi_{1}(\price,\publicBelief)$, the profit of Firm $1$.

For a strategy profile $\phi=(\phi_{0},\phi_{1},\sigma),$ let $Pr_{\phi,\mu}$ be the probability over $\Omega\times[0,1]^2\times S$; the state, the price vector, and the signal set $S$ are induced by $\phi$, $\mu$ and $F_0,F_1$.

In what follows we make a distinction between two forms of perfect Bayesian Nash equilibria of the game $\Gamma(\mu)$: \emph{a deterrence equilibrium}, where only a single firm sells with positive probability, and \emph{a non-deterrence equilibrium}, where both firms sell with positive probability. That is,
\begin{definition}\label{def:DE}
 \textit{A deterrence equilibrium } in ${\Gamma(\publicBelief)}$ is a Bayesian Nash SPE, $(\phi_{0},\phi_{1},\sigma)$, in which there exists a unique firm $i$ such that $$Pr_{\phi,\mu}(\sigma(\mu,s,\tau)=i) \not = 0.$$
\textit{A non-deterrence equilibrium} is an equilibrium that is not a deterrence equilibrium.
 \end{definition}
%\begin{definition}\label{def:NDE}
%  in  ${\Gamma(\publicBelief)}$ is a Bayesian SPE,$(\eqPrice_{0},\eqPrice_{1},\eqPrice[\sigma])$, where  $\ Pr_{F}(\{s | \sigma^*(\mu,s,\tau^*) =i\}) \not = 0$ for  $i=0,1$.
% \end{definition}

The following theorem summarizes the main characteristics of equilibria in the stage game $\Gamma(\mu)$. This characterization is the driving force behind the proof of Theorem \ref{thm:social_learning}.
\begin{theorem}\label{thm:SSG_results}
Let $\mu\in(0,1)$ and
let $(\phi_{0},\phi_{1},\sigma)$ be a Bayesian Nash subgame perfect equilibrium of the game $\Gamma(\publicBelief)$:
\begin{enumerate}
  \item If signals are unbounded, then no firm is deterred.
  \item If signals are bounded and exhibit the vanishing likelihood property, then no firm is deterred.
  \item If signals are bounded and do not exhibit the vanishing likelihood property, then:
  \begin{enumerate}
  	\item If $g_1(\ubar{\alpha})>0,$ then for some sufficiently high  prior, $\bar\mu\in(0,1),$ whenever $\mu>\bar\mu$ Firm $1$ is deterred and Firm $0$ captures the whole market.
  	\item If $g_0(\bar{\alpha})>0,$ then for some sufficiently low  prior, $\ubar\mu\in(0,1),$ whenever $\mu<\ubar\mu$ Firm $0$ is deterred and Firm $1$ captures the whole market.
  \end{enumerate}
\end{enumerate}
\end{theorem}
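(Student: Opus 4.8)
The plan is to work entirely inside the one‑shot game $\Gamma(\mu)$, using the consumer's (a.s.\ unique) best reply \eqref{eq:consumer_condition}, the fact that $p(s)\in[\ubar{\alpha},\bar{\alpha}]$ with $G_\omega(\ubar{\alpha})=0$, and two elementary remarks. First, the model is symmetric under interchanging the two firms together with $\mu\mapsto 1-\mu$ and $x\mapsto 1-x$ on private beliefs, and a standard change of variables gives $g_0(x)/g_1(x)=x/(1-x)$ on $[\ubar{\alpha},\bar{\alpha}]$; since $\ubar{\alpha},\bar{\alpha}\in(0,1)$ in the bounded case, $g_0$ and $g_1$ vanish simultaneously at each endpoint, so ``vanishing likelihood'' is symmetric in the firms, and it suffices to treat the deterrence of Firm $1$ (the statements about Firm $0$, phrased via $g_0(\bar{\alpha})$, then follow by the symmetry). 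Second, the \emph{deterrence characterization}: for a realized price pair, Firm $1$ sells with probability $0$ iff no consumer strictly prefers it, which by \eqref{eq:consumer_condition}--\eqref{eq:indif_th} is exactly $v_\mu(\tau_0,\tau_1)\le\ubar{\alpha}$, i.e.\ (in the full market) $\tau_0-\tau_1\le 2\lBound-1$. In particular Firm $0$ can unilaterally guarantee selling to every consumer by posting $\tau_0=2\lBound-1$, getting the payoff $2\lBound-1$ whatever Firm $1$ does, so $\Pi_0\ge 2\lBound-1$ in \emph{every} SPE. Finally, by \eqref{eq:zero_profit1} the rate at which Firm $0$'s demand $1-H(v_\mu(\tau_0,\tau_1))$, $H:=\mu G_0+(1-\mu)G_1$, falls as $\tau_0$ rises past the deterrence threshold equals $H'(\ubar{\alpha})\,\partial_{\tau_0}v_\mu=\bigl(\mu g_0(\ubar{\alpha})+(1-\mu)g_1(\ubar{\alpha})\bigr)\,\partial_{\tau_0}v_\mu$, and this is where the boundary density enters.

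For parts (1) and (2) I would show there is no deterrence equilibrium. If signals are unbounded, $\ubar{\alpha}=0$, so $v_\mu(\tau_0,\tau_1)\le 0$ forces $\tau_0=0$; Firm $1$ can thus be kept out of the market only if Firm $0$ posts price $0$, which is never a best reply (Firm $0$ can always make a positive profit). If signals are bounded and exhibit vanishing likelihood, Firm $0$ \emph{could} deter but never wants to: because $g_0(\ubar{\alpha})=g_1(\ubar{\alpha})=0$, raising $\tau_0$ by $\epsilon$ above the deterrence threshold loses only $H(v_\mu)=o(\epsilon)$ of demand while adding $\epsilon$ to the price, so every candidate deterrence profile admits a strictly profitable upward deviation for Firm $0$; equivalently, against Firm $0$'s threshold price Firm $1$ can always shave its price and capture the non‑conformist fringe at a positive (if vanishing) profit. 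Combined with Observations~\ref{obs:martingale}--\ref{obs:equ}, this delivers the ``asymptotic learning holds'' half of Theorem~\ref{thm:social_learning}.

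For part (3) take $g_1(\ubar{\alpha})>0$. \textbf{Existence.} For $\mu$ close enough to $1$ that $2\lBound-1>0$, I claim $(\tau_0,\tau_1)=(2\lBound-1,0)$ together with the consumer's best reply is an SPE. The consumer a.s.\ strictly prefers Firm $0$ (so Firm $1$ is deterred and indifferent), and the only real check is that Firm $0$ gains nothing by raising its price: writing Firm $0$'s payoff for $\tau_0>2\lBound-1$ as $w(v)\bigl(1-H(v)\bigr)$, where $v=v_\mu(\tau_0,0)$ and $w(\cdot)$ is the inverse of $\tau_0\mapsto v_\mu(\tau_0,0)$ (so $w(\ubar{\alpha})=2\lBound-1$ and $\tau_0=w(v)$), one computes $\tfrac{d}{dv}\bigl[w(v)(1-H(v))\bigr]=w'(v)(1-H(v))-w(v)H'(v)$; since $w'(\ubar{\alpha})=\Theta(1-\mu)\to 0$ while $w(\ubar{\alpha})H'(\ubar{\alpha})\to g_0(\ubar{\alpha})>0$, this is negative on a fixed neighbourhood of $\ubar{\alpha}$ uniformly for $\mu$ near $1$, and bounding $1-H$ away from $1$ on the remainder of the support shows $w(v)(1-H(v))\le 2\lBound-1$ throughout, so no upward deviation pays. \textbf{Equilibrium type.} In any SPE, $\Pi_0\ge 2\lBound-1$ forces $\mathbb{E}[H(v_\mu(\tau_0,\tau_1))]\le 2(1-\lBound)=\Theta(1-\mu)$, so Firm $1$'s market share, hence its profit, is $O(1-\mu)$; but because $w'(\ubar{\alpha})=\Theta(1-\mu)$ the demand $H(v_\mu(\tau_0,\tau_1))$ leaps from $0$ to a \emph{constant} fraction once $\tau_0-\tau_1$ exceeds the threshold by an amount of order $1-\mu$, which pins Firm $0$'s equilibrium price essentially onto the deterrence threshold; feeding this back into Firm $0$'s own optimality at the top of the support of its price distribution --- where, if Firm $1$ were active against it, demand would be elastic with slope of order $1/(1-\mu)$ --- contradicts that $\tau_0$ be a best reply. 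Hence for $\mu>\bar\mu$ every SPE is a deterrence equilibrium and Firm $0$ captures the whole market.

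The hard part is this last step: ruling out \emph{non}-deterrence equilibria of the simultaneous pricing game. The subgame is Bertrand‑like and may in principle admit mixed equilibria, and the obvious deviations (Firm $0$ deviating down to the deterrence price, Firm $1$ deviating down to enter) only reproduce the crude bound $\Pi_1=O(1-\mu)$ and do not close. The argument must exploit that, precisely when $g_1(\ubar{\alpha})>0$, demand near the deterrence threshold is extremely price‑elastic (slope of order $1/(1-\mu)$), so any entry by Firm $1$ that is worth a strictly positive price necessarily grabs a non‑negligible fraction of consumers, which is incompatible with Firm $0$ retaining the payoff $2\lBound-1$ it can always secure. Everything else is routine manipulation of \eqref{eq:indif_th}--\eqref{eq:zero_profit1} together with the estimates $1-\lBound=\Theta(1-\mu)$ and $w'(\ubar{\alpha})=\Theta(1-\mu)$ as $\mu\to 1$.
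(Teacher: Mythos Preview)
Your approach tracks the paper's closely: Part~1 via $\lBound=0$ when signals are unbounded, Part~2 via Firm~$0$'s strictly profitable \emph{upward} deviation from the deterrence price (the derivative of $\Pi_0(\tau_0,0)$ at $\tau_0=2\lBound-1$ equals $1$ when $g_0(\ubar{\alpha})=g_1(\ubar{\alpha})=0$), and Part~3 via a contradiction at the \emph{top} of Firm~$0$'s price support using the blow-up of $\partial_{\tau_0}v_\mu$ (your $1/w'(\ubar{\alpha})=\Theta(1/(1-\mu))$) together with the density being bounded below. Two points, however, deserve correction or sharpening.

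First, in Part~2 your ``equivalently'' clause is wrong. Against $\tau_0=2\lBound-1$, the consumer weakly prefers Firm~$0$ for \emph{every} $\tau_1\ge 0$ and every signal (that is exactly what $v_\mu(2\lBound-1,\tau_1)\le\ubar{\alpha}$ means), so Firm~$1$ cannot ``shave its price and capture the non-conformist fringe.'' The correct (and only) contradiction is the one you state first: Firm~$0$ itself prefers to raise its price. Your existence sub-argument in Part~3 is fine but unnecessary for the theorem as stated; the paper only proves that every SPE is a deterrence equilibrium and handles existence separately via Reny's better-reply-secure condition.

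Second, and more substantively, your Part~3 ``equilibrium type'' sketch has a real gap. You correctly look at the top $\tau_0^k$ of Firm~$0$'s support and invoke high elasticity there, but you never establish that Firm~$1$ is \emph{active against} $\tau_0^k$, i.e., that $v_{\mu_k}(\tau_0^k,\tau_1)>\ubar{\alpha}$ for $\phi_1^k$-almost every $\tau_1$. Without this, the density term $\mu_k g_0(v)+(1-\mu_k)g_1(v)$ could vanish and the elasticity argument collapses. The paper closes this gap as follows: if $v_{\mu_k}(\tau_0^k,\tau_1)\le\ubar{\alpha}$ on a set $T$ with $\phi_1^k(T)>0$, then by monotonicity of $v_\mu$ in $\tau_0$ the same holds for every $\tau_0$ in Firm~$0$'s support, so Firm~$1$'s payoff conditional on $\tau_1\in T$ is zero; but in a non-deterrence equilibrium $\Pi_1>0$ and Firm~$1$ must be indifferent across its support, a contradiction. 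You also need (and implicitly use, via your bound $\mathbb{E}[H(v_\mu)]\le 2(1-\lBound)$) that $v_{\mu_k}(\tau_0^k,\tau_1)\to\ubar{\alpha}$ in probability, so that the density term is indeed close to $\mu_k g_0(\ubar{\alpha})+(1-\mu_k)g_1(\ubar{\alpha})>0$; the paper isolates this as a separate lemma. With these two pieces filled in, your derivative computation $\partial_{\tau_0}\Pi_0<0$ at $\tau_0^k$ goes through and yields the contradiction exactly as in the paper.
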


The proof of Theorem \ref{thm:SSG_results} as well as the complete analysis of this stage game is relegated to Appendices \ref{app:proofs_aux} and \ref{app:thm2_proof}.
\subsection{Proof of Theorem \ref{thm:social_learning}}
%\blue{ Moran: shouldn't we first present the theorem? }
We next introduce the formal proof of Theorem \ref{thm:social_learning}, based on Theorem \ref{thm:SSG_results}.
%We start by explaining the logic behind the proof of Theorem \ref{thm:SSG_results}.
Consider the case where
signals exhibit vanishing likelihood. In this case, by Theorem \ref{thm:SSG_results}, at every time $t$ no firm is deterred in $\Gamma(\mu_t)$. Therefore the decision of the consumer remains informative throughout the play. This implies that asymptotic learning always holds.
By contrast, if  signals do not exhibit vanishing likelihood and $g_1(\bar{\alpha})>0,$ then for $\mu>\bar\mu$ all equilibria of $\Gamma(\mu)$ are deterrence equilibria. This implies that when $\mu_t$ crosses $\bar\mu$ all subsequent consumers buy from Firm $0$ and learning stops.
%\blue{ Moran: Bad phrasing, when $\mu\gor1$ even non conformists favor Firm $0,$ they are simply those who least favor it... }
% This fact leaves a margin for a small portion of non conformists to buy product $1$ and so implies that Firm $1$ is not deterred. If however, the proportion of those non conformist consumers is non-vanishing, then Firm $0$ is better off pricing aggressively and thus pushing Firm $1$ out of the market. As a result, a Deterrence equilibrium holds.

In order to establish Theorem \ref{thm:social_learning} we start with the following corollary of Lemma \ref{lem:notf1} (the proof can be found at Appendix \ref{app:thm2_proof}).
\begin{corollary}\label{cor:asymptotic_learning}
If signals exhibit vanishing likelihood or if signals are unbounded,
then for every $\varepsilon>0$ there exists some $r>\ubar{\alpha}$ and $\delta'>0$ such that if $\mu\in[\varepsilon,1-\varepsilon]$ and $\phi=(\phi_0,\phi_1)$ is a SPE of $\Gamma(\mu)$, then
$$\mathrm{P}_{\mu,\phi}(v_\mu(\tau_0,\tau_1)\geq r)>\delta'.$$
A similar condition holds for Firm $1.$
\end{corollary}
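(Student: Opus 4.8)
The plan is to deduce this uniform lower bound from Theorem \ref{thm:SSG_results} (cases 1 and 2) together with a compactness argument. Fix $\varepsilon>0$ and consider the compact set of priors $K=[\varepsilon,1-\varepsilon]$. For each $\mu\in K$ and each SPE $\phi$ of $\Gamma(\mu)$, Theorem \ref{thm:SSG_results} tells us that (under unbounded signals or vanishing likelihood) no firm is deterred; in particular $\mathrm{Pr}_{\phi,\mu}(\sigma(\mu,s,\tau)=1)>0$. Since the consumer buys from Firm $1$ precisely when $p(s)<v_\mu(\tau_0,\tau_1)$ (up to the indifference set, which has probability zero under Assumption \ref{asumption}), "Firm $1$ is not deterred'' forces $v_\mu(\tau_0,\tau_1)>\ubar{\alpha}$ with positive probability; equivalently, the event $\{v_\mu(\tau_0,\tau_1)\ge r\}$ has positive probability for some $r>\ubar{\alpha}$. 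The content of the corollary is that $r$ and the probability lower bound $\delta'$ can be chosen uniformly over $\mu\in K$ and over all SPE $\phi$.

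The key steps, in order, are: (i) Show that the correspondence $\mu\mapsto \{\text{SPE of }\Gamma(\mu)\}$, described through the induced distributions $\mathrm{P}_{\mu,\phi}$ on $\Omega\times[0,1]^2\times S$, is upper hemicontinuous with compact values, using continuity of $v_\mu(\cdot,\cdot)$ in $(\mu,\tau_0,\tau_1)$, continuity of the profit functions $\Pi_i(\tau_0,\tau_1,\mu)$ (which follows from \eqref{eq:zero_profit1} and continuity of $G_0,G_1$), and a standard closed-graph argument for Nash equilibria of games with continuous payoffs on compact strategy spaces. (ii) Define, for each $(\mu,\phi)$, the quantity $\beta(\mu,\phi,r)=\mathrm{P}_{\mu,\phi}(v_\mu(\tau_0,\tau_1)\ge r)$ and note that for fixed $r$ this is upper semicontinuous in $(\mu,\phi)$ along the topology from (i), while for each fixed $(\mu,\phi)$ it is nondecreasing as $r\downarrow\ubar\alpha$ and strictly positive for some $r>\ubar\alpha$ by the no-deterrence conclusion. (iii) Conclude by compactness: the function $(\mu,\phi)\mapsto \sup_{r>\ubar\alpha}\beta(\mu,\phi,r)$ is strictly positive everywhere on the compact set of pairs, so it is bounded below by some $\delta'>0$; a further compactness/covering argument (or monotone convergence as $r\downarrow\ubar\alpha$) lets one fix a single $r$ that works simultaneously, shrinking $\delta'$ if necessary. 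The symmetric statement for Firm $0$ follows by the analogous argument applied to the event $\{v_\mu(\tau_0,\tau_1)\le r'\}$ with $r'<\uBound$.

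The main obstacle I anticipate is step (i): establishing that equilibria vary upper hemicontinuously in $\mu$ when the consumer's tie-breaking and the firms' mixed strategies are involved. Assumption \ref{asumption} guarantees the consumer's best reply is essentially unique (the indifference set is $\mathrm{P}_{\mu,\phi}$-null), which removes the usual discontinuity coming from tie-breaking, but one still needs the firms' equilibrium price distributions to have a closed graph — for this I would invoke the Berge maximum theorem applied to the firms' best-reply correspondences against the reduced (consumer-integrated) payoffs $\Pi_i(\tau_0,\tau_1,\mu)$, which are continuous by \eqref{eq:zero_profit1}. A secondary subtlety is making sure the "positive probability of no deterrence'' from Theorem \ref{thm:SSG_results} is genuinely bounded away from the boundary belief $\ubar\alpha$ uniformly; this is exactly what the $\sup_{r>\ubar\alpha}$ device in step (iii) handles, at the cost of invoking compactness one more time.
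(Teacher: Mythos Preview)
Your high-level strategy matches the paper's: both hinge on the closed-graph property of the SPE correspondence together with the no-deterrence conclusion of Theorem \ref{thm:SSG_results}, and a compactness argument over $\mu\in[\varepsilon,1-\varepsilon]$. The paper packages this slightly differently: it first proves Lemma \ref{lem:notf1} (a uniform upper bound $1-\delta$ on the probability that the consumer buys from any fixed firm), via a direct sequential contradiction argument --- take a hypothetical sequence of SPE with $\mathrm{P}_{\mu_k,\phi^k}(\sigma=0)\to 1$, pass to weak limits, observe the limit is a SPE of $\Gamma(\mu)$ in which Firm $0$ plays the deterrence price, and contradict Proposition \ref{prop:no DE when g=0}. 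The corollary is then read off from Lemma \ref{lem:notf1} by translating the action-probability bound into a bound on the threshold $v_\mu$.

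There is, however, a genuine technical slip in your step (iii). For a closed event such as $\{v_\mu(\tau_0,\tau_1)\ge r\}$, weak convergence gives that $\beta(\mu,\phi,r)$ is \emph{upper} semicontinuous in $(\mu,\phi)$, as you correctly note in (ii). But to conclude that a strictly positive function on a compact set is bounded away from zero you need \emph{lower} semicontinuity, not upper. A strictly positive usc function can easily have infimum zero. So the inference ``strictly positive everywhere on a compact set $\Rightarrow$ bounded below by $\delta'>0$'' is not valid as stated, and the ``$\sup_{r>\ubar\alpha}$'' device does not repair it (a supremum of usc functions need not be lsc). The clean fix is precisely the paper's move: argue by contradiction along a sequence $(\mu_k,\phi^k)$ with the desired probabilities tending to zero, extract a convergent subsequence, use the closed-graph step you already identified in (i) to show the limit is a SPE of $\Gamma(\mu)$, and then observe that in the limit Firm $1$ is deterred, contradicting Theorem \ref{thm:SSG_results}. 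Once you recast (iii) sequentially, your proof and the paper's coincide.
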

In words, by Theorem \ref{thm:SSG_results},  if signals exhibit vanishing likelihood or if signals are unbounded, then the probability of a consumer going against the herd is positive. Corollary \ref{cor:asymptotic_learning} argues that this probability cannot be arbitrarily close to zero if the prior is bounded away from the edges.

\begin{proof}[Proof of Theorem \ref{thm:social_learning}]
First we show that if the signal structure $(F_0,F_1,S)$ does not exhibit vanishing likelihood, then the martingale of the public belief must converge to an interior point. Indeed, let us assume without loss of generality that ${g}_1(\ubar{\alpha})>0$. Let $(\bar{\phi}_0,\bar{\phi}_1,\bar{\sigma})$ be a myopic equilibrium.  By Observation \ref{obs:equ}, for $\mathrm{P}_{(\bar{\phi}_0,\bar{\phi}_1,\bar{\sigma})}$ almost every history $h_t\in H_t$ the profile $(\bar{\sigma}(h_t),\bar{\phi}_0(h_t),\bar{\phi}_1(h_t))$ is a SPE of $\Gamma(\mu_t)$. By Theorem \ref{thm:SSG_results},  there exists $\bar\publicBelief$ such that $\forall\publicBelief\in(\bar\publicBelief,1)$ there is a unique Bayesian Nash subgame perfect equilibrium of $\Gamma(\publicBelief)$ in which the consumer almost surely chooses Firm  $0$ (Firm $1$ is deterred by Firm  $0$). This implies that  if $\publicBelief_{t}\in(\bar\publicBelief,1]$, then $\publicBelief_{t+1}=\publicBelief_{t}$ with probability one.
We note that since signals are never fully informative it must be the case that $\mu_t<1$ for all $t$ with probability one.
Therefore, if the vanishing likelihood property does not hold, then asymptotic learning fails.
%\blue{ Moran: I've changed the order of the sentences in the above paragraph, I think it is better now. WDYT?}

Next we show that if vanishing likelihood holds, then the public belief martingale converges to a limit belief in which the true state is assigned probability one.
It follows from Observation \ref{obs:equ} that $(\bar{\phi}_0(h_t),\bar{\phi}_1(h_t),\bar{\sigma}(h_t))$ is a SPE of $\Gamma(\mu_t)$ for
$\mathrm{P}_{(\bar{\phi}_0,\bar{\phi}_1,\bar{\sigma})}$ almost every history $h_t\in H_t$. Corollary \ref{cor:asymptotic_learning} implies that if $\mu_t\in[\epsilon,1-\epsilon]$ then
for some $\delta'>0$ and $r>\ubar{\alpha}$ the realized price vector $(\tau_0,\tau_1)$ satisfies $v_{\mu_t}(\tau_0,\tau_1)\geq r$ with probability at least $\delta'$.
%\red{Rann: This last sentence is true, according to Corollary 3, for the distribution $G_\mu$. Here you use it w.r.t, the distribution $G_0$}

Since the distribution $G_{0}(\cdot)$  first-order stochastically dominates  $G_{1}(\cdot)$ (see Lemma \ref{lem:aux} in Appendix \ref{sec:Acem}), under any such price vector $(\tau_0,\tau_1)$ there exists a probability at least $G_0(r)>0$ that the consumer will not buy from Firm $0.$
Note that (again by Lemma \ref{lem:aux}) $$\frac{G_{0}(v_{\mu_t}(\tau_0,\tau_1))}{G_{1}(v_{\mu_t}(\tau_0,\tau_1))}\leq \frac{G_{0}(r)}{G_{1}(r)}=\beta<1.$$
Therefore, it follows from Bayes' rule that with probability at least $G_0(r)\delta'$ the public belief $\mu_{t+1}$ satisfies
\begin{equation}\label{eq:lr_changes}
\frac{\publicBelief_{t+1}}{1-\publicBelief_{t+1}}=\frac{\publicBelief_{t}}{1-\publicBelief_{t}}\frac{G_{0}(v_{\mu_t}(\tau_0,\tau_1))}{G_{1}(v_{\mu_t}(\tau_0,\tau_1))}\leq \frac{\publicBelief_{t}}{1-\publicBelief_{t}}\beta.
\end{equation}
Hence, in particular, if $\mu_t\in[\epsilon,1-\epsilon]$, then there exists a positive constant $\eta>0$ such that, with probability at least $G_0(r)\delta'$, it holds that $|\mu_{t+1}-\mu_t|>\eta$.

By Observation \ref{obs:martingale}, the limit $\mu_\infty=\lim_{t\rightarrow\infty}\mu_t$ exists and by the above argument $\mu_\infty\in\{0,1\}$ with probability $1$. This shows that asymptotic learning holds.
\end{proof}
\section{Social Learning and Farsighted Firms}\label{sec:non-myopic}

In this section we show that our main result carries forward to a setting where the firms are farsighted and maximize a discounted expected revenue stream.
We extend our sequential model to the non-myopic case by defining the \emph{non-myopic sequential consumption game}. In this model, as in the myopic case, each firm sets a price in every time period, except that now each firm tries to maximize its discounted sum of the stream of payoffs.
%We follow Maskin and Tirole (\cite{Maskin1988} -- \cite{Maskin2001}\nocite{Maskin1988a}) and analyze the \emph{Markov perfect Nash equilibria} (MPE) of the corresponding Bayesian repeated game.
We still retain the perfection assumption with respect to consumers. Given a strategy profile $(\bar{\phi}_0,\bar{\phi}_1,\bar{\sigma})$, in the repeated game,
denote by $\Pi^\delta_i(\bar{\phi}_0,\bar{\phi}_1,\bar{\sigma})$  the expected payoff to Firm $i$ when the discount factor is $\delta>0$. Namely,
$$\Pi^\delta_i(\bar{\phi}_0,\bar{\phi}_1,\bar{\sigma})=E_{(\bar{\phi}_0,\bar{\phi}_1,\bar{\sigma})}\Big((1-\delta)\sum_{t=1}^\infty\delta^{t-1} \Pi^t_i(\bar{\phi}_0(h_t),\bar{\phi}_1(h_t),\bar{\sigma}(h_t)|h_t)\Big).$$
Define a \emph{Bayesian Nash equilibrium} as follows.
\begin{definition}\label{def:equilibrium}
A strategy profile $(\bar{\phi}_0,\bar{\phi}_1,\bar{\sigma})$ constitutes a \emph{Bayesian Nash equilibrium} if:
\begin{itemize}
\item For every strategy $\bar{\psi}_i$ of Firm $i$,
$$\Pi^\delta_i(\bar{\phi}_i,\bar{\phi}_-i,\bar{\sigma})\geq \Pi^\delta_i(\bar{\psi}_i,\bar{\phi}_{-i},\bar{\sigma}).$$
\item For every time $t$ the following condition holds for almost every history $h_t\in H_t$ that is realized in accordance with $\mathbf{P}_{(\bar{\phi}_0,\bar{\phi}_1,\bar{\sigma})}$, every price vector  $(\tau_0,\tau_1)\in[0,1]^2$, and every decision rule $\sigma\in\mathcal{A}$:
 $$u_t(\tau_0,\tau_1,\bar{\sigma}(h_t)|h_t)\geq u_t(\tau_0,\tau_1,\sigma|h_t).$$
\end{itemize}
\end{definition}
Let $(\bar{\phi}_0,\bar{\phi}_1,\bar{\sigma})$ be a strategy profile, let $h_t\in H_t$, and denote by
$\Pi^\delta_i(\bar{\phi}_0,\bar{\phi}_1,\bar{\sigma}|h_t)$  the continuation payoff to Firm $i$ in the subgame starting at history $h_t\in H_t$.
Note that $\Pi^\delta_i(\bar{\phi}_0,\bar{\phi}_1,\bar{\sigma}|h_t)$ is well defined for almost every history $h_t\in H_t$ that is realized in accordance with $\mathbf{P}_{(\bar{\phi}_0,\bar{\phi}_1,\bar{\sigma})}$.
By Definition \ref{def:equilibrium}, if $(\bar{\phi}_0,\bar{\phi}_1,\bar{\sigma})$ constitutes a Bayesian Nash equilibrium, then
$\bar{\phi}_i$ maximizes the continuation payoff $\Pi^\delta_i(\bar{\phi}_0,\bar{\phi}_1,\bar{\sigma}|h_t)$ of Firm $i$ for almost every history $h_t\in H_t$ that is realized in accordance with $\mathbf{P}_{(\bar{\phi}_0,\bar{\phi}_1,\bar{\sigma})}$.

We can now state our first result for farsighted firms.
\begin{theorem}\label{prop:non-myopic-VL-NDE}
If the signal structure $(F_0,F_1,S)$ exhibits the vanishing likelihood property or if signals are unbounded, then asymptotic learning holds for every discount factor $\delta<1$.%
\footnote{Recall this implies that learning takes place for any prior and any corresponding Bayesian Nash equilibrium.}
\end{theorem}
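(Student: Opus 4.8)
The plan is to argue by contradiction via the public‑belief martingale, reducing — much as in the myopic proof — to the impossibility of ``frozen'' beliefs, but now keeping track of the firms' continuation values. Suppose asymptotic learning fails, so $\mathbf{P}_{(\bar\phi_0,\bar\phi_1,\bar\sigma)}(\mu_\infty\in(0,1))>0$, and fix $\varepsilon\in(0,\tfrac12)$ with $\mathbf{P}(\mu_\infty\in[\varepsilon,1-\varepsilon])>0$; call this event $\mathcal E$. Since $\{\mu_t\}$ is a bounded martingale, $\sum_t\mathbf{E}[(\mu_{t+1}-\mu_t)^2]=\lim_t\mathbf{E}[\mu_t^2]-\mathbf{E}[\mu_1^2]\le 1$, so the realized increments $\mu_{t+1}-\mu_t$ tend to $0$ $\mathbf{P}$-a.s.\ on $\mathcal E$. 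A conditional Borel--Cantelli argument then shows that, eventually, at $\mathbf{P}$-a.e.\ realized history on $\mathcal E$, the realized price vector induces (via the consumer's best reply) an ``essentially deterministic'' action — one of $\{0,1,e\}$ has conditional probability $\ge 1-\kappa_t$ with $\kappa_t\to 0$ — while $\mu_t$ stays in a fixed neighborhood of some $\mu^\star\in[\varepsilon,1-\varepsilon]$. This is exactly the configuration that Theorem \ref{thm:SSG_results} forbids when the firms play a stage‑game equilibrium at $h_t$; the work is to forbid it for arbitrary farsighted equilibrium play.

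\emph{The deviation argument.} The key simplification is that a firm's continuation value is always $\ge 0$ and remains $\ge 0$ after any deviation, so off‑equilibrium ``punishments'' — which Definition \ref{def:equilibrium} tolerates at probability‑zero histories — are powerless to sustain deterrence once the deviating firm's \emph{equilibrium} continuation value is known to be small. In each of the three essentially‑deterministic cases on $\mathcal E$ there is a firm $L$ (in the exit case, either firm) that is persistently near‑deterred, i.e.\ $\Pi^s_L\le\kappa_s$ from some time on; as the discounted tail of a null sequence is null, $\Pi^\delta_L(\bar\phi_0,\bar\phi_1,\bar\sigma\mid h_t)\to 0$ on $\mathcal E$. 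It therefore suffices to produce, at such histories, a one‑shot deviation for $L$ whose immediate profit is bounded below by a constant $c=c(\varepsilon,F_0,F_1)>0$: then its continuation payoff under the deviation is at least $(1-\delta)c>0$, which eventually exceeds $\Pi^\delta_L(\bar\phi_0,\bar\phi_1,\bar\sigma\mid h_t)$ for any fixed $\delta<1$, a contradiction. Such deviations come from the stage‑game analysis: if $L$ is the a priori inferior firm at $\mu^\star$, it undercuts to capture the nonconformists, whose mass is bounded below uniformly for priors in $[\varepsilon,1-\varepsilon]$ by Corollary \ref{cor:asymptotic_learning}; if $L$ is the favored firm but is the one earning almost nothing (so it must be pricing near $0$), it deviates \emph{upward}, and vanishing likelihood is precisely what guarantees that raising the price above the ``capture‑everyone'' level sheds only an $o(1)$ fraction of demand while earning a positive margin on the rest, so the profit jumps by a constant. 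The closed forms \eqref{eq:indif_th}--\eqref{eq:zero_profit1} together with the uniform positivity of $\bar\alpha_{\mu^\star}-\ubar\alpha_{\mu^\star}$ on $[\varepsilon,1-\varepsilon]$ make this quantitative.

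\emph{Main obstacle.} The delicate case is when $\mu^\star$ is so extreme that the a priori inferior firm cannot profitably enter against the favored firm's low deterrence price, while the favored firm collects a deterrence rent bounded away from $0$ — so neither firm's continuation value vanishes and the argument above does not bite. Here I would use a \emph{multi‑period} deviation by the favored firm: abandon deterrence and best‑respond myopically thereafter. Under non‑deterrence the consumer's action is informative each period, so the induced belief converges to the true state; hence along this deviation the favored firm's continuation value is, up to a transient, the prior‑weighted average of ``eventually dominate the market at prices approaching $1$'' (in its favorable state) and ``eventually be priced out'' (in the other). The crux of the proof is to show, by a quantitative comparison with the deterrence rent $2\ubar\alpha_{\mu^\star}-1$ via the same closed forms, that this deviation is profitable — together with the bookkeeping needed to bound the transient and to dispatch the off‑equilibrium histories visited along the deviation path (again harmless, since the favored firm's continuation value there is still $\ge 0$). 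Once this is in place $\mathcal E$ cannot occur, so $\mu_\infty\in\{0,1\}$ a.s.\ and asymptotic learning holds for every $\delta<1$; the unbounded‑signal case is the easy one, since then nonconformist mass is available at \emph{every} interior belief and the first, one‑shot, deviation already suffices.
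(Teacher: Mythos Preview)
Your martingale setup and the identification of an essentially-deterministic limiting action are fine, but the core deviation argument targets the wrong firm. In the central case $\mu^\star\in[\tfrac12,1-\eta]$ with $\ubar{\alpha}_{\mu^\star}>\tfrac12$, the near-deterred firm is Firm~$1$, and if Firm~$0$ is pricing at or below $2\ubar{\alpha}_{\mu_t}-1$ then Firm~$1$ captures no consumer at any nonnegative price --- there is simply no profitable undercut. Your appeal to Corollary~\ref{cor:asymptotic_learning} is a misuse: it lower-bounds the threshold $v_\mu$ only at \emph{stage-game SPE} prices, and in the farsighted equilibrium the pair $(\bar\phi_0(h_t),\bar\phi_1(h_t))$ need not be one. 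You flag exactly this as your ``main obstacle'' and propose that the favored firm abandon deterrence and best-respond myopically until beliefs converge; but this is both underspecified (Firm~$1$'s off-path play is unconstrained under Definition~\ref{def:equilibrium}, so ``myopic best response'' along the deviation path is not pinned down) and not uniform in $\delta$: the learning transient can be arbitrarily long, so for small $\delta$ the deterrence rent foregone during the transient swamps the discounted limiting gain.

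The paper's argument is quite different and avoids these issues. It first shows that on the event where learning fails, Firm~$0$'s continuation value itself converges to $2\ubar{\alpha}_{\mu_t}-1$ (by the same reasoning as Proposition~\ref{prop: DE price}). The profitable deviation is then by the \emph{favored} firm and has the form ``up once, then revert to deterrence'': at $h_t$ Firm~$0$ raises its price to one that, by Proposition~\ref{prop:no DE when g=0} (this is precisely where vanishing likelihood is used), guarantees a stage payoff at least $2\ubar{\alpha}_{\mu_t}-1+\epsilon_0$ against \emph{every} $\phi_1$; thereafter it plays the deterrence price $2\ubar{\alpha}_{\mu_{t+1}}-1$, which is again guaranteed irrespective of Firm~$1$'s (possibly arbitrary off-path) play. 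The ingredient you are missing is that $\mu\mapsto 2\ubar{\alpha}_\mu-1$ is convex (Lemma~\ref{lem:td_convex}), so by Jensen $E[2\ubar{\alpha}_{\mu_{t+1}}-1\mid h_t]\ge 2\ubar{\alpha}_{\mu_t}-1$; the deviation therefore beats the equilibrium continuation by a fixed multiple of $(1-\delta)\epsilon_0$ for every $\delta<1$.
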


By Theorem \ref{prop:non-myopic-VL-NDE}, vanishing likelihood is a sufficient condition for asymptotic learning.\footnote{The fact that asymptotic learning holds for unbounded signals carries forward to the farsighted case under a similar proof to that of the myopic case.}

For the converse direction we establish the following weaker statement.
%To establish the converse implication we need further restriction over the class of Bayesian equilibrium. Let $(\bar{\phi}_0,\bar{\phi}_1,\bar{\sigma})$ be a strategy profile. For every history $h_t\in H_t$ let $\Pi^\delta_i(\tau|h_t)$ be the conditional expected continuation payoff to firm $i$ from a one shot deviation to the price $\tau\in[0,1]$.
%We say that $(\bar{\phi}_0,\bar{\phi}_1,\bar{\sigma})$ is a \emph{Lipschitz Bayesian equilibrium} if for almost every history $h_t\in H_t$ that is realized in accordance with $\mathbf{P}_{(\bar{\phi}_0,\bar{\phi}_1,\bar{\sigma})}$ the continuation payoff $\Pi^\delta_i(\tau|h_t)$ is Lipschitz continuous with some global constant $K>0$.
%We next state the converse of Theorem \ref{prop:non-myopic-VL-NDE}.
%
%\begin{theorem}\label{prop:non-myopic-lmpe-VL-NDE}
%If the signal distribution does not exhibit vanishing likelihood, then asymptotic learning fails in every Lipschitz Bayesian equilibrium, for any discount factor $\delta<1$.
%\end{theorem}
\begin{theorem}\label{prop:non-myopic-lmpe-VL-NDE}
If signals are bounded and do not exhibit  vanishing likelihood, then asymptotic learning fails in the following sense: for every prior $\mu_0\in(0,1)$ there exists a Bayesian Nash equilibrium for which learning fails.
\end{theorem}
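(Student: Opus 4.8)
The plan is to leverage the myopic analysis almost verbatim. Recall that by Theorem~\ref{thm:SSG_results}, when signals are bounded and do not exhibit vanishing likelihood, there is (WLOG, assuming $g_1(\ubar\alpha)>0$) a threshold $\bar\mu\in(0,1)$ such that for every $\mu>\bar\mu$, the \emph{stage game} $\Gamma(\mu)$ has Firm~$1$ deterred in every SPE. The key observation is that if the public belief ever enters $(\bar\mu,1)$, then a deterrence outcome is self-sustaining: no consumer decision is informative, so $\mu_{t+1}=\mu_t$, and the belief is frozen forever strictly below $1$. So to prove asymptotic learning fails it suffices to exhibit, for each prior $\mu_0$, \emph{one} Bayesian Nash equilibrium of the farsighted game in which, with positive probability, the public belief reaches the region $(\bar\mu,1)$ (or the symmetric region $(0,\ubar\mu)$) and hence is trapped there.

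First I would construct the candidate equilibrium by specifying the firms' continuation behavior region-by-region in $\mu_t$. For $\mu_t>\bar\mu$ (resp.\ $\mu_t<\ubar\mu$) prescribe the stage-game deterrence SPE of $\Gamma(\mu_t)$ repeated forever; since beliefs are frozen there, this is trivially a continuation equilibrium and yields each firm its one-shot deterrence payoff in every period. For $\mu_t$ in the interior region I would let the firms play \emph{some} measurable selection of stage-game SPE of $\Gamma(\mu_t)$ — the existence of such a selection follows from the stage-game analysis in the appendices — and have consumers best-respond via the rule $\sigma(\mu,s,\tau)$ from~\eqref{eq:consumer_condition}. The consumer side automatically satisfies the perfection requirement of Definition~\ref{def:equilibrium} because $\sigma(\mu_t,\cdot,\cdot)$ is a best reply against \emph{every} price vector, not just the realized one. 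The substantive point is to check that no firm has a profitable one-shot (hence, by the one-shot-deviation principle, any) deviation in the interior region: a deviation at history $h_t$ changes the realized price vector, possibly moving $\mu_{t+1}$, but since we may build the interior-region prescription out of stage-game SPE and then invoke Observation~\ref{obs:equ}-type reasoning adapted to the discounted game, the firm's deviation cannot raise its current-period payoff, and I would choose the prescription so that the continuation value is deviation-proof as well (for instance by making continuation play depend on $h_t$ only through $\mu_t$, so a price deviation that does not change $\mu_{t+1}$ changes nothing, and one that does change $\mu_{t+1}$ is handled by the stage-game optimality plus monotonicity of continuation values in $\mu$).

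Having fixed such an equilibrium, the remaining step is to show the frozen region $(\bar\mu,1)$ is reached with positive probability from any interior prior $\mu_0$. This is where I would mimic the myopic argument: in the interior region the prescribed stage-game SPE keeps the consumer's action informative, so by a Bayes'-rule/first-order-stochastic-dominance estimate (Lemma~\ref{lem:aux}) there is, conditional on state $0$, a uniformly positive per-period probability that $\log\frac{\mu_{t+1}}{1-\mu_{t+1}}$ increases by a bounded-below amount whenever $\mu_t$ is bounded away from the edges. A standard martingale/Borel--Cantelli argument then shows $\mu_t$ cannot remain forever in any compact subinterval of $(\ubar\mu,1)$, so with positive probability (in fact, conditional on $\omega=0$, almost surely) it exits upward into $(\bar\mu,1)$ and stays. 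Since $\mu_t<1$ for all $t$ (signals are never fully revealing), the limit belief lies in $(0,1)$ with positive probability, i.e.\ learning fails.

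The main obstacle is the equilibrium existence/verification in the interior region: unlike the myopic case, where Observation~\ref{obs:equ} reduces the equilibrium condition to stage-by-stage SPE of $\Gamma(\mu_t)$, a farsighted firm may in principle prefer to sacrifice current revenue to steer the public belief into a region with a more favorable continuation value. The delicate part is therefore to argue that the deterrence region is an \emph{absorbing, strictly-preferred} state for the firm that benefits from it and that no interior-region deviation can do better — which I expect to handle by a monotonicity argument (the deterring firm's discounted value is increasing in $\mu$, capped by the one-shot deterrence payoff, and already attained by the prescribed path) together with the observation that pushing the belief toward the \emph{other} edge only helps the \emph{other} firm. This is precisely why the statement is the weaker ``there exists an equilibrium'' rather than ``every equilibrium'': pinning down all equilibria would require controlling these dynamic incentives globally, whereas constructing one convenient equilibrium only requires making the deterrence region attractive and reachable.
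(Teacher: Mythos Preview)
Your proposal has a genuine gap at the very point you label ``trivial.'' You claim that in the region $\mu_t>\bar\mu$ the stage-game deterrence SPE, repeated forever, is ``trivially a continuation equilibrium'' because beliefs are frozen. This is false for a farsighted Firm~$0$. If Firm~$0$ deviates upward from $2\ubar\alpha_{\mu_t}-1$, the consumer's action becomes informative and $\mu_{t+1}\neq\mu_t$. Since $\mu\mapsto 2\ubar\alpha_\mu-1$ is strictly convex (Lemma~\ref{lem:td_convex}) and $\{\mu_t\}$ is a martingale, Jensen's inequality gives $E[2\ubar\alpha_{\mu_{t+1}}-1\mid h_t]>2\ubar\alpha_{\mu_t}-1$: experimentation strictly raises the expected continuation value under your own prescription. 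So Firm~$0$ faces a genuine trade-off between a current-period revenue loss and a future gain from a more informative belief, and your ``monotonicity/cap'' remark does not resolve it (the prescribed path is \emph{not} the cap once you allow deviations). Nothing in Theorem~\ref{thm:SSG_results} controls this, because that theorem is purely about myopic best replies in $\Gamma(\mu)$.

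The paper's proof supplies exactly the missing piece. Its Lemma~\ref{lem:threshold} shows, by a careful one-shot-deviation computation, that for every $\delta<1$ there is a threshold $\mu'=\mu'(\delta)$ (in general different from your $\bar\mu$) such that for $\mu_0\ge\mu'$ the permanent deterrence profile is a Bayesian Nash equilibrium of the discounted game. The argument bounds the future gain via the Lipschitz constant of $2\ubar\alpha_\mu-1$ and then shows the current-period loss dominates; the non-vanishing-likelihood hypothesis enters precisely here, through $g_0(\ubar\alpha)>0$ and $\partial v_\mu/\partial\tau_0\to\infty$ as $\mu\to 1$, which makes the marginal cost of raising the price explode. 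For an arbitrary prior $\mu_0$, the paper does \emph{not} try to verify that stage-game SPE play in the interior region is farsighted-optimal (the step you flag as the main obstacle and leave unresolved). Instead it passes to a constrained game in which play is forced to the deterrence profile once $\mu_t\ge\mu'$, takes any Bayesian equilibrium of that game, and uses Lemma~\ref{lem:threshold} to conclude the constraint is non-binding, so the same profile is an equilibrium of the original game. Learning fails on this equilibrium because the belief is absorbed strictly below~$1$.
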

That is, in order for asymptotic learning  always to hold in a Bayesian Nash equilibrium it is necessary and sufficient for the signal structure to exhibit vanishing likelihood.
We conjecture that, in fact, if vanishing likelihood fails then learning fails in any equilibrium.

\section{Discussion}\label{section:extensions}
We turn to discuss four natural questions that arise from our model and analysis:%
\footnote{We thank anonymous reviewers for prompting these questions.}
\begin{itemize}
\item
Do our conclusions hold when the differentiability assumption on the signal distribution (Assumption \ref{asumption}) is relaxed?
\item
What are the implications of vanishing likelihood in the  monopolistic setting?
\item
What are the social welfare implications of our results?
\item
In cases where asymptotic learning holds, how fast do agents learn and consequently buy the superior product?
\end{itemize}

\subsection{General Signals}\label{sec:gen_sig}

Throughout the analysis we have restricted our attention to signal structures $(F_0,F_1,S)$ that satisfy Assumption \ref{asumption}.
In many applications this assumption fails to hold. In particular, Assumption \ref{asumption} does not hold when the set of signals is countable or finite.
It is therefore important to understand whether  our condition can be stated more generally to capture all signal distributions.

Fortunately, it turns out that such a general condition does exist. Let $(F_0,F_1,S)$ be a general signal distribution and let $G_\omega$ be the CDFs as defined in Definition \ref{def:G_omega}.
Define $g_0,g_1\in[0,\infty]$ as follows:
$$g_0=\liminf_{x\rightarrow\ubar{\alpha}^+}\frac{G_0(x)}{x-\ubar{\alpha}} \text{ and }  g_1=\liminf_{x\rightarrow\bar{\alpha}^-} \frac{1-G_1(x)}{\bar{\alpha}-x}.$$
Obviously, $g_0,g_1$ are both well defined. Note that $g_0$ is defined using  the  limit from the left ($x\rightarrow\ubar{\alpha}^+$) whereas $g_1$ uses the limit from the right ($x\rightarrow\bar{\alpha}^-$).

We can now state the more general condition for vanishing likelihood as follows.
\begin{definition}\label{def:general VL}
The signal structure $(F_0,F_1,S)$ satisfies vanishing likelihood if $g_0=g_1=0$.
\end{definition}
%\blue{ Moran: Why do we need the second part of the theorem and not simply say ``otherwise VL doesn't hold"? Do we require that the limits $g_0,g_1$ exist for our results regarding NVL?}

Note that if $(F_0,F_1,S)$ satisfies Assumption \ref{asumption}, then the condition in Definition \ref{def:general VL} coincides with the condition in Definition \ref{def:VL}.
Moreover, note that for finite signal distribution we have $g_0=g_1=\infty$, and thus vanishing likelihood fails.
Our results hold verbatim under the more general definition of vanishing likelihood for when firms are myopic.

We omit the proofs for the general setting but note that the underlying ideas for the proofs are similar while their exposition becomes more cumbersome.
The primary reason for this is that with an arbitrary signal structure the consumer can be indifferent between two options (e.g., indifferent between the two products or between a product and exiting) with positive probability. Therefore, given a price pair, the consumer may have more than one best reply. In addition, it is not necessarily the case that any such best reply induces a two-player game between the firms that possesses an equilibrium. The underlying reason is that the consumer strategy may lead to discontinuity in firms' payoffs as a function of prices. Under Assumption \ref{asumption} the consumer has a unique best reply with probability one and such discontinuity can be ignored.%situation can be ignored.%
\footnote{The omitted proofs are available from the authors upon request. Although we have not written a rigorous proof for the case where firms are farsighted, we believe that the results  carry through.}

An additional challenge that results from the aforementioned discontinuity pertains to the mere existence of an equilibrium in  $\Gamma(\mu).$ Absent this equilibrium, our results become vacuous. Fortunately, we can use the result of Reny \cite{Reny1999} to overcome this.

Consider the following specific best-reply consumer strategy:
whenever a consumer is indifferent between buying from one firm  and the outside option he always chooses to buy from the firm.
Whenever a consumer is indifferent between buying from Firm $0$ and Firm $1,$  and his expected utility from purchasing a product is at least zero, he chooses the firm that is a priori favorable.
That is, in this case he chooses Firm $0$ whenever $\mu\geq\frac{1}{2}$ and Firm $1$ whenever $\mu<\frac{1}{2}$. In all other cases he strictly prefers one alternative and therefore chooses this alternative.

%\begin{equation}\label{eq:consumer_condition}
%\sigma(\mu,s, \price)=\begin{cases}
%a=0&\mbox{ if } \privateBelief_{\publicBelief}(\signal)-\price_{0}>\max\{(1-\privateBelief_{\publicBelief}(\signal))-\price_{1},0\}\\
%&\mbox{ or } \mu\geq\frac{1}{2} \text{ and }\privateBelief_{\publicBelief}(\signal)-\price_{0}=\max\{(1-\privateBelief_{\publicBelief}(\signal))-\price_{1},0\}\\
%a=1&\mbox{ if }(1-\privateBelief_{\publicBelief}(\signal))-\price_{1}>\max\{\privateBelief_{\publicBelief}(\signal)-\price_{0},0\}\\
%&\mbox{ or } \mu<\frac{1}{2} \text{ and }(1-\privateBelief_{\publicBelief}(\signal))-\price_{1}=\max\{\privateBelief_{\publicBelief}(\signal)-\price_{0},0\}\\
%a=e&\mbox{ otherwise}.
%\end{cases}
%\end{equation}
%Note that, upon indifference, the consumer necessarily buys the product from the a-priori favorable firm.
With this consumer strategy, game $\Gamma(\mu)$ satisfies Reny's {\em  better-reply secure condition} \cite{Reny1999}  for any $\mu\in[0,1].$ Theorem 3.1 in \cite{Reny1999} thus guarantees the existence of a mixed subgame perfect equilibrium in $\Gamma(\mu).$

\subsection{Monopolistic Market}

A natural question to ask is what are the necessary and sufficient conditions for asymptotic learning when there is a single firm that competes against an outside option.  It turns out that the vanishing likelihood condition plays a crucial role in the monopolistic case as well.

More precisely, consider a monopolistic model with a single firm and a binary state space. In state $\omega=0$ the firm offers a high-quality product (the common value is one) whereas in the other state ($\omega=1$) the common value is zero. At each stage the firm sets a price and a new consumer arrives. The consumer receives a private signal and chooses whether or not to buy the good (the outside option is valued at zero). The  methodology and techniques discussed in this paper may be used to show that in the monopolistic setting,  vanishing likelihood guarantees asymptotic learning and the lack of vanishing likelihood (at both ends) implies that asymptotic learning fails.%
\footnote{Formally, the notion of vanishing likelihood is related to two conditions on the signal structure, that correspond to the two extreme values: the highest and lowest possible signals. Indeed, if one of these conditions fails then learning fails in the duopolistic setting. By contrast, in the monopolistic setting failure of learning is guaranteed only when both conditions are not satisfied.}

\subsection{Social Welfare}

The main goal of this paper is to identify  the informational structure that guarantees learning. As we now turn to show, in our model, learning guarantees that eventually all agents purchase the superior product. This implies that vanishing likelihood is a necessary and sufficient condition for the market to be (asymptotically) efficient.%
\footnote{In some variants of the herding model, such as those studied in by Mueller-Frank \cite{Mueller-frank2012,Mueller-frank2016}, learning does not entail market efficiency.}

%\blue{To see this, note that if learning occurs, the price of the ex-post superior firm  approaches $1.$ In the following corollary, we show that the probability that that the consumer buys its product goes to one as well, thus asymptotically, the utilitarian welfare converges to its maximal possible value, namely, $1$.}
\begin{corollary}\label{lem:AL_consumer_buys_superior}
	Let $(\overline{\sigma},\overline{\price}_0,\overline{\price}_1)$ be a myopic Bayesian Nash  equilibrium. If asymptotic learning holds, then conditional on state $\omega\in\Omega$,
	$$\lim_{t\rightarrow\infty} \mathbf{P}_{(\overline{\sigma},\overline{\price}_0,\overline{\price}_1)}(\{\sigma_{t}(\mu_t,s,\overline\tau(\mu_t))=\omega\}|\omega)=1.$$
\end{corollary}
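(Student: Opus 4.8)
The plan is to show that under asymptotic learning the probability that consumer $t$ fails to buy the superior product vanishes, by combining the convergence $\mu_t\to\mu_\infty\in\{0,1\}$ (a.s., conditional on $\omega$, with $\mu_\infty=\1{\omega=0}$) with a uniform bound on the stage-game behavior near the extremes. First I would fix the state $\omega=0$ (the case $\omega=1$ is symmetric). By the definition of asymptotic learning we have $\mu_t\to 1$ almost surely conditional on $\omega=0$. The goal is then to prove that for every $\varepsilon>0$ there is a threshold $\bar\mu<1$ such that, whenever the public belief $\mu_t\geq\bar\mu$, the probability (conditional on $\omega=0$, and over the signal and the realized price vector under the SPE $(\phi_0,\phi_1,\sigma)$ of $\Gamma(\mu_t)$ provided by Observation \ref{obs:equ}) that $\sigma_t(\mu_t,s,\bar\tau(\mu_t))=0$ is at least $1-\varepsilon$; then a standard dominated-convergence / Borel--Cantelli argument closes the proof.

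\textbf{The key step.} The heart of the matter is the following stage-game claim: as $\mu\to 1$, in \emph{every} SPE of $\Gamma(\mu)$ the consumer buys from Firm $0$ with probability conditional on $\omega=0$ tending to $1$. To see this, recall from \eqref{eq:def_bounds} that when $\mu$ is close to $1$ the consumer's posterior $p_\mu(s)$ lies in $[\lBound,\uBound]$ with $\lBound\to 1$ as $\mu\to 1$; in particular there is $\mu'<1$ so that for $\mu\geq\mu'$ every posterior exceeds, say, $3/4$. Then from the consumer's optimality condition \eqref{eq:consumer_condition}, for the consumer \emph{not} to choose Firm $0$ at price vector $(\tau_0,\tau_1)$ one needs either $p_\mu(s)-\tau_0<0$, i.e. $\tau_0>3/4$, or $(1-p_\mu(s))-\tau_1>p_\mu(s)-\tau_0$, which forces $\tau_0>\tau_1+2p_\mu(s)-1>\tau_1+1/2$. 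In either case Firm $0$ is pricing ``too high'' relative to what it could profitably charge. I would then argue that in equilibrium Firm $0$ will not do this: by deviating to a small price $\tau_0=\varepsilon'$, Firm $0$ guarantees itself essentially the whole market (since $v_\mu(\varepsilon',\tau_1)\to 0$ as $\mu\to1$, uniformly in $\tau_1$, by the formulas in \eqref{eq:indif_th}) and a profit close to $\varepsilon'$, which exceeds its equilibrium profit once the equilibrium sale probability is bounded away from $1$ and $\varepsilon'$ is chosen small but after the equilibrium sale probability is pinned down --- the cleanest route is to note that if the equilibrium sale probability of Firm $0$ were below some $\delta_0>0$ infinitely often then its profit would be below $\delta_0$, while the guaranteed-deviation profit can be made, for $\mu$ close enough to $1$, larger than the equilibrium profit. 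A quantitatively easier variant: simply invoke that the consumer's posterior is above $1-\varepsilon$ and exiting is dominated by buying Firm $0$ at price $0$, so in any SPE Firm $0$'s price stays below $1-p_\mu(s)+\tau_1$ with probability going to one, and under such prices the consumer buys Firm $0$ for all signal realizations with $p_\mu(s)$ in the relevant (full-measure, shrinking-to-$1$) range.

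\textbf{Assembling the argument.} Having established the stage-game estimate, I would combine it with convergence as follows. Fix $\varepsilon>0$ and pick $\bar\mu<1$ from the stage-game claim. Conditional on $\omega=0$, $\mu_t\to 1$ a.s., so there is a random time $T$ with $\mu_t\geq\bar\mu$ for all $t\geq T$; hence for $t\geq T$ the conditional probability that $\sigma_t=0$ given $h_t$ is at least $1-\varepsilon$. Taking expectations and using $\mathbf{P}(T<\infty\mid\omega=0)=1$ together with dominated convergence yields $\liminf_{t\to\infty}\mathbf{P}(\sigma_t=\omega\mid\omega)\geq 1-\varepsilon$, and since $\varepsilon$ was arbitrary the limit equals $1$. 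The symmetric argument with Firm $1$ and $\mu_t\to 0$ handles $\omega=1$.

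\textbf{Main obstacle.} I expect the delicate point to be the stage-game estimate, specifically ruling out equilibria in which Firm $0$ prices so high that it loses a non-negligible fraction of consumers even though the public belief overwhelmingly favors it --- one must use the firm's profit-maximization (a profitable deviation to a low price) rather than just the consumer's behavior, and one must make the ``profitable deviation'' uniform over Firm $1$'s (possibly mixed) price and over SPE selection. This is exactly the kind of estimate that the analysis of $\Gamma(\mu)$ behind Theorem \ref{thm:SSG_results} already delivers (it is essentially the statement that for $\mu>\bar\mu$ the only equilibria are deterrence equilibria in which Firm $0$ captures the whole market, in the non-vanishing-likelihood case, and in the vanishing-likelihood / unbounded case one instead uses Corollary \ref{cor:asymptotic_learning} in reverse together with $\lBound\to1$), so I would lean on those results rather than redo the stage-game analysis from scratch.
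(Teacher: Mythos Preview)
Your overall plan (reduce to a uniform stage-game estimate, then pass to the limit via $\mu_t\to 1$) is exactly what the paper does. The issue is with how you propose to obtain the stage-game estimate.

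\textbf{The deviation argument does not close.} You suggest that if Firm $0$ were failing to sell with probability close to $1$, it could deviate to a small price $\varepsilon'$ and earn roughly $\varepsilon'$, beating its equilibrium profit. But this only rules out equilibria in which Firm $0$'s sale probability is \emph{very small}; it does not rule out equilibria in which Firm $0$ sells with probability, say, $1-\eta$ at a price near $1$, earning roughly $1-\eta\gg\varepsilon'$. To get the conclusion ``sale probability $\to 1$'' you need a deviation whose profit tends to $1$, not to $0$. The correct deviation (used in the paper's Lemma~\ref{cor: along the path}) is to the deterrence price $2\ubar{\alpha}_\mu-1$: at this price Firm $0$ captures the entire market regardless of $\tau_1$, so its profit is exactly $2\ubar{\alpha}_\mu-1\to 1$ as $\mu\to 1$. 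This beats any equilibrium profit that is bounded away from $1$, which is what forces $v_\mu(\tau_0,\tau_1)\to\ubar{\alpha}$ in probability.

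\textbf{You cite the wrong auxiliary results.} Theorem~\ref{thm:SSG_results}(3) concerns the non-vanishing-likelihood case, in which asymptotic learning fails by Theorem~\ref{thm:social_learning}; so that case is irrelevant to the corollary. Corollary~\ref{cor:asymptotic_learning} goes in the opposite direction: it says $v_\mu(\tau_0,\tau_1)\geq r>\ubar{\alpha}$ with probability bounded below, i.e., a non-negligible fraction of consumers do \emph{not} buy from Firm $0$. The result you actually need is Lemma~\ref{cor: along the path}, which says directly that for any sequence of SPEs with $\mu_k\to 1$, the threshold $v_{\mu_k}(\tau_0,\tau_1)$ concentrates at $\ubar{\alpha}$. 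Once you have this, the paper's proof is three lines: conditional on $\omega=0$, $\mu_t\to 1$ a.s.; hence $v_{\mu_t}(\tau_0^t,\tau_1^t)\to\ubar{\alpha}$; hence $\mathbf{P}(\sigma_t=0\mid\omega=0)=1-G_0(v_{\mu_t})\to 1-G_0(\ubar{\alpha})=1$. No separate dominated-convergence step is needed beyond continuity of $G_0$.
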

The corollary follows from the proof of  our main theorem and its proof is relegated to Appendix D.

%\blue{
%How fast do agents learn and consequently buy the superior product is discussed next.}

\subsection{Rate of Convergence}
A natural question to ask is, in cases where asymptotic learning holds, how fast does the market reach a situation where the consumer buys the superior product with high probability. In the herding model, this question has only recently received  attention (see, e.g.,  Rosenberg and Vieille \cite{Rosenberg2017} and Hann-Caruthers et al. \cite{Hann-Caruthers2018}). In particular, Rosenberg and Vieille \cite{Rosenberg2017} demonstrate a sharp negative result showing that for ``reasonable''    
unbounded signals, the expectation of the first correct decision by a consumer, is infinite. In our setting we conjecture that under vanishing likelihood breaking bad herds is faster and thus the rate of convergence is higher.
We leave this interesting question for future research. 
\section{Case Studies }\label{sec:examples}

An intriguing application of our theoretical results relates the outcome of the evolutionary process of Schumpeterian growth. One particular interest (and concern) is a corollary of our main theorem. Our main theorem  asserts that whenever one firm is sufficiently a-priori advantageous (i.e. the public belief from the outset is that it has the superior product with high probability), it  can use predatory pricing to obstruct the entry of entrants yielding  innovative substitute products and technologies. In fact, whenever the demand side does not exhibit the vanishing likelihood property, predatory pricing will eventually become the incumbent's optimal strategy. 
Recent technological advances had led to an increase in the number and complexity of proposed innovations \cite{OECD2015}. The rise of the Internet increased the exposure to new products and the pace at which adoption decisions are made. Social networks inform potential consumers over the action taken by their predecessors, therefore we suspect that such an advantage is quite typical.}
	%\st{In the context of the rapid innovation of products we are witnessing and given the inertia of the status quo product} 

We illustrate the outcome of our model using two case studies. In one case, which fits the vanishing likelihood requirements, the incumbent indeed priced out a substitute product, considered by many in hindsight as the superior product. In the other, which fits the opposite scenario, the incumbent did not engage in predatory pricing and indeed the new product, which is obviously the superior one, prevailed.

Given our informal discussion and the underlying assumptions we put forward for these two cases, in particular with respect to the prevalence of (or lack thereof) vanishing likelihood, we recommend to read them with a grain of salt. In addition, even if one agrees with our assumptions one can suggest alternative explanations to account for the different incumbent reaction in both cases (some of which we mention in footnotes).

%One example we , such as network effect or entrant financial constraints, we feel that the examples bellow highlight the important role non conformist consumers play in the firm pricing decisions.  A factor, which is  unveiled in our model and is missing from previous analysis.

\color{black}

\noindent {\bf A tale of vanishing likelihood:}
In the mid-1990s, with around 10 percent market share, Barnes and Noble (BN) was a clear leader in the US book-selling market \cite{Ghemawat2004,Greco2013book}. BN rose to power by perfecting the shopping experience for buyers and through aggressive discounts, of $20-30$ percent of cover prices \cite{Grimes1995}.

In 1995 BN faced a new type of competition: Amazon's online retail. In hindsight this is a classic example of ``creative destruction"; however, back then the success of online retail was ambiguous. In the late 90s the Internet was new and was definitely not conceived as a retail shopping channel \cite{Cleland1995}. Uncertainty about the security of online payments was immense \cite{Smith1998,Crisp1995} and many considered instant gratification, absent from online shopping, as central to the shopping experience.

To overcome these challenges, Amazon offered 40 percent discounts of cover prices, while BN maintained its 20--30 percent discount. Early adopters of the new technology started shopping online, followed by more conservative shoppers. Fourteen years later Amazon was a clear market leader while BN closed most of its shops. With a $16.5\%$ market share and over $\$400$B in trade in the 2016 US market \cite{NationalRetailFederation2016}, online retail shopping is a proven superior innovation.

We note that BN, with its 28 percent operational profitability at the time \cite{Brooker2000}, could have lowered prices and driven Amazon out of the market. Was it rational to maintain its high profit or should BN have offered additional discounts given the information available at the time? We shall revisit this issue after the next case study.

\noindent {\bf A tale of non-vanishing likelihood:}
In the game console market, the quality of a product is determined by a variety of measurable determinants (e.g., graphic capabilities, CPU speed, RAM) and unmeasurable ones (e.g., design, gameplay, and game titles). Back in the late 1990s, game consoles offered no connected gaming and so network effects and externalities were less significant in console adoption.\footnote{One may still claim that network effect played an important role in ``Dreamcast" failure.  One of the authors, a enthusiastic `gamer' at the time Sega launched the Dreamcast,  contend this network effect for the following reasons: (1) ``Sega'' was a veteran game manufacturer, which held the intellectual property rights for various strong titles such as: ``Golden axe'', ``Zelda'', ``Sonic'' etc. These titles where not available to ``Sony". ;and (2) Sony's next console, the ``Playstation II'', did not reach the market until the following year and was known to offer no backward compatibility. Therefore it is unclear which of the firms enjoyed a stronger the network effect.}

The  Dreamcast game console, developed at the end of the 1990s, was  Sega's  second attempt to restore its place as an industry leader.%
\footnote{Sega's previous console, the ``Saturn," had failed miserably due to its ``awful gameplay and inferior design" \cite{TheVideoGameCritic}.}
The market leader at the time was Sony with its PlayStation $I$ console.
In contrast with BN's  strategy of ignoring Amazon, Sony preempted the launch of Dreamcast by offering a $30$ percent discount on its own console a month before the release of Dreamcast and one year before releasing its own next-gen product  \cite{Snodgrass1999,Wikia2017}.
As a result, Dreamcast did not manage to penetrate the market and Sega stopped making and selling this console less than two years after its introduction.%
\footnote{Indeed, Tadashi Takezaki, a former executive at Sega, points out two main factors related to the failure of Dreamcast: (1) Consumer skepticism about Sega's abilities to produce a viable product following the aforementioned ``Saturn" fiasco and (2) Sony's aggressive pricing \cite{Gifford2013}.}

In retrospect, Dreamcast was acknowledged as the superior technology and some even consider it one of the best consoles ever developed. This reasonably priced and developer-friendly technology contained many futuristic features such as ``network gameplay" and 64-bit high resolution graphics.\footnote{Even today, over a decade and a half after its initial release, second-hand consoles are being traded on eBay and new Dreamcast titles are being released  \cite{Brooks2016}, which serves as a testimony to the console's technical superiority.}
%However, at the time this console was released its superiority was not acknowledged.
Some of these novel features did not reach the market until more than half a decade later.%
\footnote{Online multiplayer interface was introduced by  PlayStation III in 2006, the same year a control system with motion detection first appeared in the Nintendo Wii.}
In Schumpeter's framework, growth had been delayed.

\noindent {\bf Vanishing vs. non-vanishing likelihood:}
Why did the incumbent firm price aggressively, thus interfering with the evolutionary process,  in one case and not in the other? What is the primary distinction that explains this? One explanation that resonates with our model has to do with the different nature of the demand side of both markets. Buyers of game consoles are highly engaged and actively seek information regarding new releases and so we witness more diverse opinions by buyers and a substantial proportion of nonconformism. By contrast, book buyers, who do show interest and are engaged when it comes to the decision of which title to buy, are less inquisitive when it comes to the actual shopping experience. Only a marginal proportion would not conform to traditional shopping practices.
Thus the marginal proportion of nonconformists, captured by our notion of vanishing likelihood, is significant in the console market and is insignificant in the book market.\footnote{An alternative explanation may be that demand for most products is likely to be multi-dimensional. We argue that in these two cases studies, pricing plays a major part in the consumer's decision. In the Amazon case, consumer price sensitivity played a major part in the raise of the book-superstores. It is therefore only logical to assume that it was also meaningful in their demise. In the gaming console example, the majority of potential consumers  at the time were not the gamers themselves, but their parents, which were more price sensitive (see \cite{Gifford2013}).} For Sony it was rational to reduce prices in order to maintain the nonconformist market share. For BN, on the other hand, the size of the nonconformist market share coupled with the ambiguity of the success of the new online shopping paradigm was insufficient to forego profits.%
%This can explain the seemingly sluggish response of BN to the new entrant in the book market and the aggressive pricing by Sony in the console market.
\footnote{One could suggest that the nonconformism of the game console market is captured by the notion of unbounded signals whereas the book market exhibits bounded signals. Adopting this approach must lead to the conclusion that creative destruction would be observed in the console market and not in the book market, contrary to the evidence.}

\section{Summary}\label{section:discussion}
In the classic models of social learning, the consumers' utility from each alternative is fixed. Thus, when signals are bounded,  there is always a positive probability that the inferior product will prevail \cite{Banerjee1992,Bikhchandani1992}.  However, when signals are unbounded there are always nonconformist consumers who go against the herd and purchase the product most others won't (see Smith and S\o rensen \cite{Smith2012}). These nonconformists are instrumental for the aforementioned information aggregation when the consumers' choice is between products with fixed prices.
%in the herding literature.  The condition of unbounded signals is necessary for guaranteeing the correct aggregation of information when the consumers' choice is between products with fixed prices.
However, our setting involves strategic pricing that alters these results. As each of the firms can now lower its price and attract consumers, even when the prior belief is biased against it, or price out its competitor if the prior belief is in its favor.  We ask two questions: When are these pricing strategies optimal? And what implications do these strategies have for the manner in which markets aggregate information?  We find that the proportion of nonconformist consumers plays a significant role in answering these questions. An intuitive extension of the model with fixed prices suggests that more nonconformists implies more social learning. However, our main finding is the exact opposite: social learning occurs only when the number of nonconformist consumers is small. This is the condition we refer to as ``vanishing likelihood."%\blue{ Moran: This is the forth time we say this, note that the part about the weaker firm's ability to always lower its price is simply not true in our model. }
%\footnote{
%Since we are considering a model with non-atomic signals, where the likelihood ratio of the two states of nature, for any signal, is bounded away from zero and $\infty$,  the proportion of agents (namely, the probability of a signal) whose posterior is within $\varepsilon$ of the signal-distribution bounds shrinks to zero as $\varepsilon$ goes to zero. We say that signals have a \textit{vanishing likelihood} if the rate at which this happens is also zero.}

We study the conditions under which markets in which firms are engaged in a pricing competition enable or hinder social learning. We do so by introducing a simple setting of duopolistic pricing competition.  We first study a simplified model where firms are myopic and prove that in this setting, when signals are bounded, social learning occurs if and only if the signal distributions exhibit the vanishing likelihood property.  We then extend these results to a version of the model with forward-looking firms that maximize their expected discounted future revenue stream.

The rationale behind this counterintuitive result is uncovered when analyzing the firms' incentives in the stage game. As society learns, one of the firms, say Firm  $0$, emerges as the better one. At that stage the new consumer, prior to receiving a signal, assigns a high probability to Firm  $0$ having the superior product. In other words, the stage game begins with a biased prior toward Firm  $0$, which now wants to exploit this near-monopolistic status and set a high price. The only reason not to do so is when the next consumer is very likely to receive a strong signal that Firm $1$ is superior and consequently does not conform with his predecessors. This argument can be ignored by Firm  $0$ when the probability of this event is low enough, which is exactly captured by our notion of ``vanishing likelihood.'' Therefore, when signals exhibit vanishing likelihood, the popular firm ignores nonconformist consumers and foregoes this market share by setting prices high. Firm $1$ sets prices low and wins over the consumer in the rare event that he is a nonconformist, thus breaking the herd phenomenon. Notice that no matter how small the probability of this is in the stage game, when we go back to the repeated game it eventually happens with probability one.

While our major contribution is to the literature on social learning,  the vanishing likelihood property and its effect on firms' strategic behavior has interesting implications for market behavior. In particular for market entry and the adoption of new technologies.  Previous work which studied such questions assumed that incumbents have, either an informational advantage (see \cite{Bagwell2007}) or the ``first move" advantage and can preempt entry by increasing capacity or investing in R\&D  (\cite{Acemoglu2015,Barrachina2014}) or both (see \cite{Milgrom1982,Milgrom1982a}). To the best of our knowledge, our stage game is the first example of predatory pricing behavior, when both incumbent and entrant act simultaneously, and no firm has an informational advantage. We study  these aspects in a companion paper (Arieli et al. \cite{arieli2016predatory}).

In addition to the theoretical contribution of the paper, we believe that the
	intuition uncovered in our model merits further discussion and may pave
	the way to subsequent work in related fields. For example, a stylized version
	of the model may be used to examine questions of firm pricing dynamics,
	or the interim social cost associated with predatory pricing in experience
	good markets.
	Note that the market failure we uncover, can be easily rectified by employing a minimum price policy (Similarly to the policy employed to enable
	entry to the Israeli communications market \cite{Scheer2014}). The costs and benefits of
	such policy are left for follow-up work.

Another potential direction may be an empirical examination of technology adoption success and failures.  While it may  be difficult to directly ascertain whether  a distribution satisfies the ``vanishing likelihood property", it is possible to examine whether a large proportion of non-conformists aids or obstructs the process of innovation adoptions (similarly to \cite{Klier2016}). 

%\subsection{Significance}
%\subsection{Future work}
\newpage
\bibliographystyle{plain}
\bibliography{herding}{}
\appendix

\section{Proofs of the Stage Game}\label{app:proofs_aux}

\subsection{Equilibrium Analysis of $\Gamma(\mu)$}
We start with some preliminary results concerning equilibrium behavior in the game $\Gamma(\mu)$.
Throughout we use the notation $\phi_i\in\Delta([0,1])$ for a mixed strategy of Firm $i$ and $\tau_i$ for a pure strategy.

For $\mu\in[0,1]$ we use the following shorthand: $G_\mu(x)=\mu G_0(x)+(1-\mu) G_1(x).$
%Let $G'_\omega(x)$ represents the conditional probability that the consumer's posterior probability lies strictly below $x$ under the prior $\mu=\frac{1}{2}$ and state $\omega$.
%
%When the signal distribution is atomless (i.e., $G_\omega$ is continuous) we clearly have that
%$G_\omega=G'_\omega(x)$. In general, it holds that that $\lim_{y\rightarrow x} G'_\omega(y)=G_\omega(x)$ for every state $\omega$.
It follows by equation \eqref{eq:zero_profit1} that whenever the consumer's strategy $\sigma$ obeys equation \eqref{eq:consumer_condition}, the expected utility of Firm $0$ in the game $\Gamma(\mu)$, $\Pi_0(\tau_0,\tau_1,\sigma),$ can be written as follows:
\begin{equation}\label{eq:zero_profit}
\begin{split}
&\Pi_0(\tau_0,\tau_1,\sigma)=(1-G_\mu(v_\mu(\tau_0,\tau_1)))\tau_0.
\end{split}
\end{equation}

For a mixed strategy profile $(\phi_0,\phi_1),$ let $\phi\in\Delta([0,1]\times [0,1])$ be the price probability distribution  $(\phi_0,\phi_1)$ induced over $[0,1]\times [0,1]$.
By equation \eqref{eq:zero_profit},  Firm $0$'s payoff from the mixed strategy profile $(\phi_0,\phi_1)$ can be written as follows:
\begin{equation}\label{eq:zero_profit_mix}
\begin{split}
&\Pi_0(\phi_0,\phi_1,\sigma)=\Pi_{0}(\phi_0,\phi_1)=\\
&\int\big(\mu(1-G_0(v_\mu(\tau_0,\tau_1)+(1-\mu)(1-G_1(v_\mu(\tau_0,\tau_1))\big)\tau_0d\phi(\tau_0,\tau_1),
\end{split}
\end{equation}
where $v_\mu(\cdot,\cdot)$ is defined as equation \eqref{eq:indif_th}.

The next lemma provides an alternative way to write $v_\mu(\price_{0},\price_{1})$ and its derivative.
This will turn out to be useful in the sequel.
Consider the following function ${\bar v}_{\mu}:[0,1]^2\rightarrow\mathbb{R}:$
\begin{equation}\label{eq:treshold_function}
{\bar v}_{\mu}(\price_{0},\price_{1})\equiv
\begin{cases}
 \LLR[\frac{1+\price_{0}-\price{1}}{2}]- \LLR[\publicBelief]]&\mbox{ if the market is full,}\\
 \LLR[\price_{0}] - \LLR[\publicBelief]&\mbox{ if the market is not full.}
\end{cases}
\end{equation}
\begin{lemma}\label{lem:llr}
It holds that
\begin{equation}
\Diff{{\bar v}_{\mu}(\price)}{\price_{0}}|_{\eqPrice(\publicBelief)}=\begin{cases}
\label{eq:v_diff_gen1}\frac{2}{1-(\tau_{0}-\tau_{1})^2}\mbox{ if the market is full,}\\
\frac{1}{\tau_{0}(1-\tau_{0})}\mbox{ if the market is not full}
\end{cases}
\end{equation}
and
\begin{equation}\label{eq:llr}
\Diff{{ v}_{\mu}(\price)}{\price_{0}}=\frac{e^{\bar{v}_\mu(\tau_0,\tau_1)}}{(1+e^{\bar{v}_\mu(\tau)})^2}\frac{\partial {\bar v}_\mu(\tau_0,\tau_1)}{\partial \tau_0}.
\end{equation}
\end{lemma}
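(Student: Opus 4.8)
The plan is to prove Lemma~\ref{lem:llr} by direct computation, exploiting the logit/log-likelihood-ratio (LLR) reparametrization that the statement itself suggests. The key observation is that $\bar v_\mu$ is precisely the LLR transform of the private-belief threshold $v_\mu$; that is, if we write $\ell(x) = \log\bigl(\frac{x}{1-x}\bigr)$ for the logit function, then inverting the two cases of \eqref{eq:indif_th} should yield exactly $\bar v_\mu(\tau_0,\tau_1) = \ell\bigl(v_\mu(\tau_0,\tau_1)\bigr)$. So the first step is to verify this algebraic identity: starting from $v_\mu(\tau_0,\tau_1) = \frac{(1-\mu)(1+\tau_0-\tau_1)}{2\mu - (2\mu-1)(1+\tau_0-\tau_1)}$ in the full-market case, compute $1 - v_\mu$, form the ratio $\frac{v_\mu}{1-v_\mu}$, and check that it simplifies to $\frac{1-\mu}{\mu}\cdot\frac{1+\tau_0-\tau_1}{1-(\tau_0-\tau_1)}$, whose logarithm is $\ell\bigl(\tfrac{1+\tau_0-\tau_1}{2}\bigr) - \ell(\mu)$ — matching the first line of \eqref{eq:treshold_function}. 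The analogous (easier) check for the non-full-market case, where $v_\mu = \frac{(1-\mu)\tau_0}{\mu - (2\mu-1)\tau_0}$, gives $\frac{v_\mu}{1-v_\mu} = \frac{1-\mu}{\mu}\cdot\frac{\tau_0}{1-\tau_0}$, i.e. $\bar v_\mu = \ell(\tau_0) - \ell(\mu)$.

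With the identity $v_\mu = \ell^{-1}(\bar v_\mu)$ in hand — equivalently $v_\mu = \frac{e^{\bar v_\mu}}{1+e^{\bar v_\mu}}$ — equation \eqref{eq:llr} is immediate from the chain rule: $\frac{d}{dw}\ell^{-1}(w) = \frac{e^w}{(1+e^w)^2}$, so $\frac{\partial v_\mu}{\partial \tau_0} = \frac{e^{\bar v_\mu(\tau_0,\tau_1)}}{(1+e^{\bar v_\mu(\tau_0,\tau_1)})^2}\cdot\frac{\partial \bar v_\mu(\tau_0,\tau_1)}{\partial \tau_0}$. That disposes of the second displayed equation essentially for free. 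It remains to compute $\frac{\partial \bar v_\mu}{\partial \tau_0}$ directly from \eqref{eq:treshold_function}. In the non-full case, $\bar v_\mu = \log\tau_0 - \log(1-\tau_0) - \ell(\mu)$, so differentiating in $\tau_0$ gives $\frac{1}{\tau_0} + \frac{1}{1-\tau_0} = \frac{1}{\tau_0(1-\tau_0)}$, which is the second line of \eqref{eq:v_diff_gen1}. In the full case, $\bar v_\mu = \log(1+\tau_0-\tau_1) - \log(1-(\tau_0-\tau_1)) - \ell(\mu)$ (using $\ell\bigl(\tfrac{1+\tau_0-\tau_1}{2}\bigr) = \log(1+\tau_0-\tau_1) - \log(1-(\tau_0-\tau_1))$), so $\frac{\partial \bar v_\mu}{\partial \tau_0} = \frac{1}{1+\tau_0-\tau_1} + \frac{1}{1-(\tau_0-\tau_1)} = \frac{2}{(1+(\tau_0-\tau_1))(1-(\tau_0-\tau_1))} = \frac{2}{1-(\tau_0-\tau_1)^2}$, matching the first line.

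The only genuinely delicate point — the part I'd flag as the main obstacle — is the evaluation notation $|_{\eqPrice(\mu)}$ decorating the first display: the derivative of $\bar v_\mu$ is stated at the equilibrium price vector. This matters because the expression for $v_\mu$ (and hence $\bar v_\mu$) is piecewise, and whether ``the market is full'' at the equilibrium prices determines which branch applies; one must make sure the relevant branch is differentiable in a neighborhood of $\eqPrice(\mu)$ so that the one-sided/piecewise structure doesn't bite. I would handle this by noting that the full/non-full dichotomy is locally constant in $(\tau_0,\tau_1)$ away from the indifference boundary (under Assumption~\ref{asumption}, which guarantees the consumer's best reply and hence the market-fullness status is well-behaved), so on each branch $\bar v_\mu$ is a smooth function of $(\tau_0,\tau_1)$ and the partial derivative is computed branch-wise as above; evaluating at $\eqPrice(\mu)$ simply picks out whichever branch the equilibrium lies in. Everything else is routine one-variable calculus, so the write-up will be short: establish the LLR identity, read off \eqref{eq:llr} by the chain rule, and differentiate the two logarithmic expressions to get \eqref{eq:v_diff_gen1}.
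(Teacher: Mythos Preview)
Your proposal is correct and follows essentially the same approach as the paper: both establish that $\bar v_\mu$ is the log-likelihood-ratio (logit) transform of $v_\mu$, i.e.\ $v_\mu = \frac{e^{\bar v_\mu}}{1+e^{\bar v_\mu}}$, and then obtain \eqref{eq:llr} by the chain rule; the paper leaves the branch-wise differentiation yielding \eqref{eq:v_diff_gen1} implicit, while you spell it out. Your concern about the evaluation notation $|_{\eqPrice(\mu)}$ is overcautious---the derivative formulas hold pointwise on each branch, not merely at equilibrium---but this does no harm.
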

\begin{proof}
The proof makes standard use of the log-likelihood ratio transformation (see, e.g., Smith and S\o rensen \cite{Smith2012}, Herrera and H\o rner \cite{Herrera2013}, and Duffie et al. \cite{Duffie2014}). The log-likelihood ratio of a belief $p\in[0,1]$ is given by
$\log(\frac{p}{1-p})$. In particular, the log likelihood  ratio of the posterior belief is
\begin{equation}\label{eq:adding logs}
\LLR[\privateBelief_{\publicBelief}(\signal)]=\LLR[\publicBelief]+\LLR[\privateBelief(\signal)].
\end{equation}
It follows from equation \eqref{eq:indif_th} that a consumer with private belief $\privateBelief_{\publicBelief}(\signal)$ prefers  Firm $0$ if and only if $$\LLR[\privateBelief_{\publicBelief}(\signal)]\geq\log(\frac{v_\mu(\tau)}{1-v_\mu(\tau_0,\tau_1)})={\bar v}_\mu(\tau_0,\tau_1).$$
 Equation \eqref{eq:llr} then follows directly from the fact that $v_\mu(\tau)=\frac{e^{\bar{v}_\mu(\tau)}}{1+e^{\bar{v}_\mu(\tau)}}$.
\end{proof}
A simple lemma that turns out be  useful in our analysis is the following:
\begin{observation}\label{lem:prices}
Let $\mu\in [0,1]$ and let $(\phi_0,\phi_1)$ be a SPE of $\Gamma(\mu)$.
The following properties hold:
$$\phi_0([2\lBound-1,1])=1 \text{ and }\phi_1([1-2\uBound,1])=1.$$
\end{observation}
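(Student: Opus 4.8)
The plan is to show that each firm would be playing a weakly dominated price — in fact a strictly dominated one in the relevant range — were it to put mass on prices below the stated threshold, so that no SPE can do so. Consider Firm $0$ and a price $\tau_0 < 2\lBound - 1$. The key observation is that at such a low price the consumer buys from Firm $0$ for \emph{every} signal realization, regardless of what Firm $1$ does. Indeed, the worst case for Firm $0$ is the full-market threshold $v_\mu(\tau_0,\tau_1)$ from \eqref{eq:indif_th}; since $p_\mu(s) \ge \lBound$ with probability one by \eqref{eq:def_bounds}, it suffices to check that $v_\mu(\tau_0,\tau_1) < \lBound$ for all $\tau_1 \in [0,1]$ whenever $\tau_0 < 2\lBound - 1$. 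The right-hand side $v_\mu$ in the full-market case is increasing in $\tau_0$ and decreasing in $\tau_1$, so the binding case is $\tau_1 = 0$; plugging $\tau_1 = 0$ into \eqref{eq:indif_th} and comparing with the expression for $\lBound$ in \eqref{eq:def_bounds} reduces, after clearing denominators, to the inequality $\tau_0 < 2\lBound - 1$, which is exactly our hypothesis. (One should note that $2\lBound - 1$ may be negative, in which case the claim $\phi_0([2\lBound-1,1]) = 1$ is vacuous since prices are nonnegative; the content is only in the case $\lBound > 1/2$.)

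Given this, on the event $\{\tau_0 < 2\lBound-1\}$ Firm $0$ captures the entire market and earns $\tau_0$ with certainty. But then Firm $0$ has a profitable deviation: raising its price to any $\tau_0' \in (\tau_0, 2\lBound - 1)$ still guarantees the whole market (by the same computation) and yields strictly higher revenue $\tau_0' > \tau_0$. Hence no best reply of Firm $0$ — and in particular no SPE strategy $\phi_0$ — can place positive probability on $[0, 2\lBound-1)$, which gives $\phi_0([2\lBound-1,1]) = 1$. The argument for Firm $1$ is entirely symmetric: at a price $\tau_1 < 1 - 2\uBound$ the consumer buys from Firm $1$ for every signal, because $p_\mu(s) \le \uBound$ with probability one, and the worst case $\tau_0 = 0$ in the full-market threshold yields $v_\mu(0,\tau_1) > \uBound$ precisely when $\tau_1 < 1 - 2\uBound$; the same upward-deviation argument then forces $\phi_1([1-2\uBound,1]) = 1$.

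I expect the only mildly delicate point to be the monotonicity bookkeeping in the full-market threshold formula \eqref{eq:indif_th} — checking that the relevant worst case over the opponent's price is the corner $\tau_1 = 0$ (resp.\ $\tau_0 = 0$) and that the resulting inequality collapses cleanly to $\tau_0 < 2\lBound - 1$ (resp.\ $\tau_1 < 1 - 2\uBound$). This is a short direct computation with the explicit rational expressions for $v_\mu$ and for $\lBound, \uBound$, using only that $\mu \in [0,1]$; there is no conceptual obstacle, and the non-full-market branch of \eqref{eq:indif_th} is not needed since it would only make Firm $0$'s situation (weakly) worse and hence cannot overturn the ``captures the whole market'' conclusion on the event in question.
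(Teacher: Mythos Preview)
Your proposal is correct and follows essentially the same approach as the paper: show that at any price below $2\lBound-1$ (resp.\ $1-2\uBound$) the firm captures the entire market regardless of the opponent's price, so such prices are strictly dominated by higher ones and cannot appear in an SPE. The paper's verification is a touch more direct—rather than routing through the threshold formula $v_\mu$ and its monotonicity, it simply checks the consumer's preference inequality $p_\mu(s)-(2\lBound-1)>1-p_\mu(s)\geq 1-p_\mu(s)-\tau_1$ in one line—but the underlying argument is identical.
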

\begin{proof}
We prove the observation for Firm $0$. Note that if $\lBound\leq\frac{1}{2}$ we have nothing to prove.
Assume that $\lBound>\frac{1}{2};$ in this case we have that if $\tau_0=2\lBound-1$, then the consumer will buy from Firm $0$ with probability one for almost every signal realization $s$ and every price $\tau_1\geq 0$ of Firm $1$. To see this, note that $p_\mu(s)> \lBound$ for almost every signal $s\in S$. Therefore,
$$p_\mu(s)-(2\lBound-1)>1-p_\mu(s).$$
This shows that for a price $\tau_0=2\lBound-1$ the consumer buys from Firm $0$ with probability one even for $\tau_1=0$.
In particular, under any price $\tau_0\leq 2\lBound-1$ the expected profit of Firm $0$ is $\tau_0$.
Therefore, if  $\lBound>\frac{1}{2}$ the price $2\lBound-1$ strictly dominates all prices $\tau_0<2\lBound-1$ for Firm $0$.
\end{proof}
\subsection{Properties of Deterrence Equilibria}\label{section:propde}
A key property of a deterrence equilibrium is given  in the following lemma.
\begin{lemma}\label{lem: alpha greater than half}
Let $(\phi_0,\phi_1)$ be a deterrence equilibrium  in the game $\Gamma(\publicBelief)$. If Firm  $0$ controls the market, then $\lBound\geq\frac{1}{2}$. Symmetrically, if Firm $1$ controls the market then $\uBound\leq\frac{1}{2}$.
\end{lemma}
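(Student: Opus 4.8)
\textbf{Proof plan for Lemma \ref{lem: alpha greater than half}.}

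The plan is to argue by contradiction: suppose Firm $0$ controls the market in the deterrence equilibrium $(\phi_0,\phi_1)$ but $\lBound < \frac{1}{2}$. Since Firm $0$ captures the whole market, the consumer almost surely plays $a=0$ under $(\phi_0,\phi_1)$, so Firm $1$'s equilibrium payoff is $0$. The key observation is that when $\lBound < \frac{1}{2}$ there is a positive-probability set of signals whose posterior $p_\mu(s)$ lies below $\frac{1}{2}$ (more precisely, near the lower bound $\lBound$), and for such signals a sufficiently cheap price by Firm $1$ makes buying from Firm $1$ strictly optimal for the consumer regardless of Firm $0$'s price. So the plan is to exhibit a profitable deviation for Firm $1$.

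Concretely, I would first record from equation \eqref{eq:def_bounds} that $p_\mu(s) \le \uBound$ with probability one, and that the event $\{p_\mu(s) < r\}$ has positive probability for every $r > \lBound$ (this is where mutual absolute continuity / the definition of $\ubar\alpha$ via $G_\omega$ enters — note $G_\omega(r) > 0$ for all $r > \lBound$). Now fix a small price $\varepsilon > 0$ for Firm $1$; the consumer with signal $s$ strictly prefers Firm $1$ to Firm $0$ whenever $(1 - p_\mu(s)) - \varepsilon > p_\mu(s) - \tau_0$, i.e. whenever $p_\mu(s) < \frac{1 + \tau_0 - \varepsilon}{2}$, and prefers Firm $1$ to exiting whenever $1 - p_\mu(s) \ge \varepsilon$. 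Since $\tau_0 \ge 0$ and $\lBound < \frac12$, for $\varepsilon$ small enough the threshold $\frac{1+\tau_0-\varepsilon}{2} > \frac{1-\varepsilon}{2}$ strictly exceeds $\lBound$ uniformly over $\tau_0 \in [0,1]$ — wait, I need uniformity over Firm $0$'s randomized price; but actually the binding case is $\tau_0 = 0$, where the threshold is $\frac{1-\varepsilon}{2}$, and for $\varepsilon$ small this is still $> \lBound$. Hence there is a set $\hat S$ with $F_\omega(\hat S) > 0$ of signals for which the consumer strictly prefers to buy from Firm $1$ at price $\varepsilon$, no matter what price Firm $0$ posts. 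Therefore the deviation to the constant price $\varepsilon$ guarantees Firm $1$ an expected profit of at least $\varepsilon \cdot \big(\mu G_0(r) + (1-\mu) G_1(r)\big) > 0$ for a suitable $r > \lBound$, contradicting that Firm $1$'s equilibrium payoff is $0$. The symmetric statement for Firm $1$ controlling the market follows by interchanging the roles of the firms and of the states, replacing $\lBound$ with $1 - \uBound$.

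The main obstacle I anticipate is handling the quantifiers correctly over Firm $0$'s (possibly mixed) equilibrium price and over the market-full versus market-not-full distinction in \eqref{eq:consumer_condition}: I must make sure the deviation price $\varepsilon$ can be chosen independently of $\tau_0$, which works precisely because the worst case for the deviation is $\tau_0 = 0$, and there buying from Firm $1$ at price $\varepsilon$ beats both buying from Firm $0$ (needs $p_\mu(s) < \frac{1-\varepsilon}{2}$) and exiting (needs $p_\mu(s) < 1 - \varepsilon$); the first is the binding constraint. The inequality $\frac{1-\varepsilon}{2} > \lBound$ holds for all small $\varepsilon$ exactly because $\lBound < \frac12$, which is the negation of what we want to prove. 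A secondary point is to confirm via Definition \ref{def:G_omega} that $\lBound$ really is the essential infimum of $p_\mu(s)$, so that $\{p_\mu(s) < r\}$ is non-null for every $r > \lBound$; this is immediate from $\ubar\alpha = \inf_x G_\omega(x) > 0$ being characterized through the CDFs $G_\omega$ and equation \eqref{eq:bayesian update}.
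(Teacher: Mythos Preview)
Your proposal is correct and follows essentially the same contradiction argument as the paper: both show that if $\lBound<\tfrac12$ then Firm~$1$ can deviate to a small positive price and attract, with positive probability, consumers whose posterior lies just above $\lBound$, yielding strictly positive profit. The paper picks the specific deviation price $\tau_1=\tfrac{1-2\lBound}{2}$ while you take an arbitrarily small $\varepsilon$, but the structure and the key observation (worst case $\tau_0=0$, threshold still strictly above $\lBound$) are the same.
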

In words, if Firm $i$ is driven out of the market (in the sense that the consumer surely does not buy from it) it  must be the case that the consumer's posterior belief assigns  a probability of at most $\frac{1}{2}$ that Firm $i$ is the superior firm.

\begin{proof}
Assume to the contrary that $\lBound<\frac{1}{2}$ and that $(\phi_0,\phi_1)$ is a deterrence equilibrium in which Firm $1$ is deterred. In this case $\Pi_{1}(\phi_0,\phi_1)=0.$
%From Proposition \ref{prop:DE means FM} we know that the market is full and hence that $\lBound-\eqPrice_{0}\geq 0$.
%Consider a deviation of firm 1 to the price  $\price_{1}=\eqPrice_{0}$.
Consider a deviation of Firm $1$ to the pure strategy $\tau_1=\frac{1-2\lBound}{2}>0$.  By equation \eqref{eq:consumer_condition} we can conclude that any consumer whose signal falls in the set $\{s\in S| \posterior\in [\lBound,\lBound+\frac{\lBound}{2}+\frac{1}{4})\}$ will choose Firm $1$ with probability one for any equilibrium strategy $\phi_0$ for Firm $0$.

 Note that the set $\{s\in S| \posterior\in [\lBound,\lBound+\varepsilon)\}$ has positive probability for every $\varepsilon>0$ and in particular for
$\varepsilon = \lBound+\frac{\lBound}{2}+\frac{1}{4}$. Therefore this deviation entails a positive expected utility for Firm $1$ and hence a profitable deviation, thus contradicting the equilibrium assumption.
%and that for any signal in this set buyers prefer firm 1 to firm 0 (and hence also to the action $e$). Therefore, at the price $\price_{1}=\eqPrice_{0}$ firm 1 has a positive profit. Thus it is  profitable for firm 1 to deviate in contradiction to the equilibrium assumption.
%The proof of the case where  Firm $0$ receives no buyers is symmetric, and thus omitted.
\end{proof}

%\begin{lemma}\label{lem: alpha greater than half}
%Let $\phi=(\phi_{0},\phi_{1})$ be a deterrence equilibrium (DE) in the game $\Gamma(\publicBelief)$. If Firm $0$ controls the market, then $\lBound\geq\frac{1}{2}$ (and if Firm $1$ controls the market, then $\uBound\leq\frac{1}{2}$).
%\end{lemma}
%In words, if firm $i$ is driven out of the market (in the sense that the consumer surely does not buy from her) it  must be the case that the consumer's posterior belief assigns  a probability of at most $0.5$ that $i$ is the superior firm.
%
%
%\begin{proof}
%Assume to the contrary that $\lBound<\frac{1}{2}$ and that $\eqPrice$ is a DE where Firm $1$ gets no buyer.
%Clearly $\Pi_{1}(\eqPrice)=0$.
%From Proposition \ref{prop:DE means FM} we know that the market is full and hence that $\lBound-\eqPrice_{0}\geq 0$.
%Consider a deviation of firm 1 to the price  $\price_{1}=\eqPrice_{0}$.
%Note that the set $\{s\in S| \posterior\in (\lBound,\frac{1}{2})\}$ has positive probability and that for any signal in this set buyers prefer firm 1 to firm 0 (and hence also to the action $e$). Therefore, at the price $\price_{1}=\eqPrice_{0}$ firm 1 has a positive profit. Thus it is  profitable for firm 1 to deviate in contradiction to the equilibrium assumption.
%The proof of the case where  Firm $0$ receives no buyers is symmetric, and thus omitted.
%\end{proof}

Given a deterrence equilibrium, there is always a unique price for the firm that controls the market. The price of the firm that controls the market is determined such that its least favorable consumer becomes indifferent between buying from the dominant firm or receiving the other firm's product for free.
Thus, under deterrence equilibrium the consumer buys from the dominant firm with probability one.

\begin{proposition}\label{prop: DE price}
If $(\phi_0,\phi_1)$ is a deterrence equilibrium in game $\Gamma(\publicBelief)$, then
\begin{align*}
\phi_0=2\lBound-1\text{ and }\Pi_0(\phi_0,\phi_1)=2\lBound-1 &\mbox{ if }\lBound\geq\frac{1}{2}\\
\phi_1=1-2\uBound\text{ and }\Pi_0(\phi_0,\phi_1)=1-2\uBound &\mbox{ if }\uBound\leq \frac{1}{2}.\\
\end{align*}
\end{proposition}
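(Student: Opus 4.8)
The plan is to show that in a deterrence equilibrium the dominating firm must post exactly the price that makes its \emph{least favorable} consumer indifferent between buying its product and taking the rival's product for free, and that this price is precisely $2\lBound-1$ (resp.\ $1-2\uBound$). I will treat the case where Firm $0$ controls the market and $\lBound\geq\frac12$; the other case is symmetric by interchanging the roles of the firms and replacing $\lBound$ by $1-\uBound$. By Lemma \ref{lem: alpha greater than half}, the hypothesis $\lBound\geq\frac12$ is exactly the situation that can occur, so there is no loss in assuming it.

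First I would establish the upper bound $\phi_0\le 2\lBound-1$ almost surely. Recall from Observation \ref{lem:prices} that in any SPE $\phi_0([2\lBound-1,1])=1$, so it suffices to rule out prices strictly above $2\lBound-1$. Here I would use the argument already appearing in the proof of Observation \ref{lem:prices}: at price $\tau_0=2\lBound-1$, since $p_\mu(s)\ge\lBound$ with probability one, we have $p_\mu(s)-(2\lBound-1)\ge 1-p_\mu(s)$ for almost every $s$, so the consumer buys from Firm $0$ with probability one \emph{regardless of $\tau_1$}; hence Firm $0$'s profit at this price is exactly $2\lBound-1>0$. Now suppose Firm $0$ puts positive probability on prices in $(2\lBound-1,1]$. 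For such a price to be part of a deterrence equilibrium in which Firm $1$ is deterred, the consumer must still buy from Firm $0$ with probability one against Firm $1$'s equilibrium strategy (otherwise either Firm $1$ sells with positive probability, contradicting deterrence, or the consumer exits with positive probability). But then Firm $0$'s realized profit from that price equals that price, which exceeds $2\lBound-1$; combining across the mixture, $\Pi_0(\phi_0,\phi_1)>2\lBound-1$. I then need to derive a contradiction — this is the step I expect to be the main obstacle — by arguing that Firm $1$ has a profitable deviation whenever Firm $0$ charges strictly more than $2\lBound-1$ with positive probability. The idea is that if Firm $0$'s price exceeds $2\lBound-1$ on a set of positive probability, then against that price there is a positive-probability set of low-$p_\mu(s)$ signals near $\lBound$ for which a consumer strictly prefers a sufficiently cheap product of Firm $1$ to Firm $0$'s expensive product (and to exiting), exactly as in the proof of Lemma \ref{lem: alpha greater than half}; since $\{s: p_\mu(s)\in[\lBound,\lBound+\varepsilon)\}$ has positive probability for every $\varepsilon>0$, a small positive price $\tau_1$ yields Firm $1$ strictly positive expected profit, contradicting $\Pi_1(\phi_0,\phi_1)=0$. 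Care is needed because Firm $0$ mixes, so I must choose $\tau_1$ small enough to be attractive against \emph{every} price in the support of $\phi_0$ that exceeds $2\lBound-1$, using that the relevant indifference thresholds vary continuously in $\tau_0$ (equation \eqref{eq:indif_th}) and the support lies in the compact set $[2\lBound-1,1]$.

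This forces $\phi_0=2\lBound-1$ deterministically, and the computation above already shows that at this price the consumer buys from Firm $0$ with probability one, so $\Pi_0(\phi_0,\phi_1)=2\lBound-1$. It remains only to record that such a configuration is indeed an equilibrium candidate — but the proposition asserts only the ``only if'' direction (``if $(\phi_0,\phi_1)$ is a deterrence equilibrium, then \dots''), so no existence argument is required here; existence of the equilibrium is handled elsewhere. For the symmetric case, observe that if Firm $1$ controls the market and $\uBound\le\frac12$, then $1-p_\mu(s)\ge 1-\uBound\ge\frac12$ with probability one, so at $\tau_1=1-2\uBound$ the consumer always buys from Firm $1$ irrespective of $\tau_0$, and the identical argument (with $\lBound$ replaced by $1-\uBound$ and the roles of the firms swapped) yields $\phi_1=1-2\uBound$ and $\Pi_0(\phi_0,\phi_1)=0$; note that the stated profit $1-2\uBound$ in this case is the profit of \emph{Firm $1$}, while Firm $0$ is deterred and earns nothing — I would flag this so the statement reads correctly.
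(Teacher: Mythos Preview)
Your proposal is correct and follows essentially the same route as the paper: use Observation~\ref{lem:prices} to get $\phi_0([2\lBound-1,1])=1$, then show that any mass strictly above $2\lBound-1$ gives Firm~$1$ a profitable deviation to a small positive price, contradicting deterrence.

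Two small points where the paper is cleaner. First, your detour through ``then $\Pi_0(\phi_0,\phi_1)>2\lBound-1$'' is not needed and is not the ``main obstacle'' you anticipate; the paper goes straight to the Firm~$1$ deviation and never uses Firm~$0$'s profit level. Second, your worry about choosing $\tau_1$ to work against \emph{every} price in the support is slightly misplaced: you do not need it to succeed against every such price, only against a positive-$\phi_0$-probability set. The paper handles this by first fixing $\delta>0$ with $\phi_0([2\lBound-1+\delta,1])>0$ and then taking $\tilde\tau_1=\delta/2$; for any realized $\tau_0\ge 2\lBound-1+\delta$ the consumers with $p_\mu(s)\in[\lBound,\lBound+\delta/4]$ strictly prefer Firm~$1$, which already yields the contradiction. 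Your observation that the ``$\Pi_0=1-2\uBound$'' in the second displayed line should read $\Pi_1$ is correct.
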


\begin{proof}
Let us assume without loss of generality that  Firm $1$ is deterred and so $\lBound\geq\frac{1}{2}$ (By Lemma \ref{lem: alpha greater than half}).
% We show first that $\eqPrice_{0}=2\lBound-1$.
It follows from Observation \ref{lem:prices} that $\phi_{0}([2\lBound-1,1])=1$.
Assume by way of contradiction that $\phi_{0}[2\lBound-1+\delta,1]>0$ for some positive $\delta>0$ and consider the price $\tilde{\tau}_{1}=\frac{\delta}{2}$
for  Firm $1$ (the deterred firm). In this case, for any realized $\tau_0\in [2\lBound-1+\delta,1],$ any consumer with a private signal $s$ such that $\posterior\in [\lBound,\lBound+\frac{\delta}{4}]$, an event whose probability is positive, will buy from Firm  $1$, which, in turn, will have a positive utility. In the deterrence equilibrium Firm $1$'s utility is obviously zero and hence the price $\tilde{\tau}_{1}=\frac{\delta}{2}$ constitutes a profitable deviation, thus contradicting the equilibrium assumption. Therefore $\phi_{0}[2\lBound-1+\delta,1]=0$ for any $\delta>0$. Hence Firm $0$ plays $\tau_{0}=2\lBound-1$ with probability one, as claimed.
\end{proof}

By Lemma \ref{lem: alpha greater than half}, the condition  $\lBound\ge\frac{1}{2}$ is necessary in order for a deterrence equilibrium (in which Firm $1$ is deterred) to exist. We now turn to study the implications of this condition.

\begin{lemma}\label{lem:NDE higher price}
If $(\phi_{0},\phi_{1})$ is a non-deterrence Bayesian Nash SPE of $\Gamma(\publicBelief)$, then the following conditions hold: $\phi_0((2\lBound-1,1))>0$, $\Pi_{0}(\phi_{0},\phi_{1},\sigma)\geq 2\lBound-1,$ and $\Pi_{1}(\phi_{0},\phi_{1})>0$. Symmetrically for Firm $1$, $\phi_1((1-2\uBound,1))>0$,  $\Pi_{1}(\phi_{0},\phi_{1},\sigma)\geq 1-2\uBound,$ and $\Pi_{0}(\phi_{0},\phi_{1})>0.$
\end{lemma}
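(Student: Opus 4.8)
The plan is to exploit the same profitable-deviation logic that drove the proof of Proposition \ref{prop: DE price}, but now run the argument on a \emph{non-deterrence} equilibrium. Fix a non-deterrence SPE $(\phi_0,\phi_1)$ and work with Firm $0$; the statement for Firm $1$ is symmetric. By Observation \ref{lem:prices} we already know $\phi_0([2\lBound-1,1])=1$, so the content of the first claim is that the mass is not concentrated at the single point $2\lBound-1$, i.e.\ $\phi_0(\{2\lBound-1\})<1$. I would prove the contrapositive: if $\phi_0$ put all its mass on $\tau_0=2\lBound-1$, then I claim Firm $0$ could profitably deviate upward. The key observation is that at price $\tau_0=2\lBound-1$ the consumer buys from Firm $0$ for \emph{almost every} signal regardless of $\tau_1$ (this is exactly what was shown inside the proof of Observation \ref{lem:prices}, using $p_\mu(s)>\lBound$ a.s.), so Firm $0$'s profit is exactly $2\lBound-1$. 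But by the indifference-threshold formula \eqref{eq:indif_th} and the continuity of $v_\mu$, a small upward perturbation $\tau_0=2\lBound-1+\delta$ only loses the consumers whose posterior lies in a set of the form $\{s : p_\mu(s)\in[\lBound,\lBound+c\delta)\}$ (for a constant $c$ depending on $\mu,\tau_1$), whose $G_\mu$-probability is $O(\delta)$ by the bounds on beliefs, while the price gain is linear in $\delta$; more carefully, since we are in a non-deterrence equilibrium Firm $1$ sells with positive probability, hence its equilibrium price is bounded away from the degenerate value that would make the consumer always buy from Firm $0$, and this gives a uniform handle. Taking $\delta$ small enough makes the deviation strictly profitable, contradicting equilibrium. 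Hence $\phi_0(\{2\lBound-1\})<1$, and combined with $\phi_0([2\lBound-1,1])=1$ this yields $\phi_0((2\lBound-1,1))>0$.

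Given the first claim, the profit bound $\Pi_0(\phi_0,\phi_1,\sigma)\ge 2\lBound-1$ is almost immediate: the price $\tau_0=2\lBound-1$ is always available to Firm $0$ and, as just recalled, guarantees payoff exactly $2\lBound-1$ (the consumer buys from Firm $0$ a.s.), so in equilibrium Firm $0$ cannot do worse than this. [If $\lBound<\tfrac12$ the quantity $2\lBound-1$ is negative and the inequality is trivial since prices are nonnegative; I would dispatch that case in one line.] For $\Pi_1(\phi_0,\phi_1)>0$: by Definition \ref{def:DE} a non-deterrence equilibrium has $Pr_{\phi,\mu}(\sigma=1)\ne 0$, i.e.\ the consumer buys from Firm $1$ with positive probability; I would argue that on the event that Firm $1$ sells, it must be selling at a strictly positive price. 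Indeed, if Firm $1$ ever sold at price $0$ with positive probability it would gain nothing there, so it could instead deviate to a small positive price $\tau_1=\eta$ and still retain a positive-probability set of nonconformist consumers (those with $1-p_\mu(s)-\eta\ge\max\{p_\mu(s)-\tau_0,0\}$, a set of positive measure for $\eta$ small by the belief bounds \eqref{eq:def_bounds} and absolute continuity), yielding strictly positive profit — contradicting optimality of selling at $0$. Hence Firm $1$'s equilibrium price is bounded away from $0$ on its selling event, so $\Pi_1>0$.

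The main obstacle I anticipate is making the upward-deviation estimate in the first claim fully rigorous when $\phi_1$ is a nondegenerate \emph{mixed} strategy: one must integrate the loss $\tau_0 G_\mu(v_\mu(2\lBound-1+\delta,\tau_1))$ against $\phi_1$ and show it is $o(1)$ uniformly, which requires that the mass of consumers with beliefs in a shrinking right-neighborhood of $\lBound$ goes to zero uniformly in $\tau_1$ — this is where the boundedness of $v_\mu$ in $(\mu,\tau_0,\tau_1)$ and the left-continuity of $G_\mu$ at $\lBound$ (with $G_\mu(\lBound^-)=0$) come in. A cleaner route, which I would actually adopt, is to sidestep the uniform estimate entirely: pick any price $\tau_1^{\ast}$ in the support of $\phi_1$ and consider Firm $0$'s deviation to $2\lBound-1+\delta$; since at the concentrated profile Firm $0$ is exactly indifferent over all prices in $[0,2\lBound-1]$ and earns $2\lBound-1$, while a strictly higher price earns $(2\lBound-1+\delta)(1-G_\mu(v_\mu(2\lBound-1+\delta,\tau_1)))$ which exceeds $2\lBound-1$ for small $\delta$ by continuity of both $v_\mu$ and $G_\mu$ and the fact that $v_\mu(2\lBound-1,\tau_1)\le\lBound$ forces $G_\mu(v_\mu(2\lBound-1,\tau_1))=0$ — this already contradicts equilibrium, because an equilibrium strategy must be a best reply against the \emph{realized} mixed profile $\phi_1$, and averaging a family of payoffs each weakly below $(2\lBound-1+\delta)(1-o(1))$ cannot beat a deviation that does strictly better pointwise. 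I would write the argument in this pointwise-then-average form to keep the measure-theoretic bookkeeping minimal.
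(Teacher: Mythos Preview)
Your argument for the first claim has a genuine gap. You try to show that if $\phi_0$ were concentrated at $2\lBound-1$ then Firm~$0$ would have a \emph{profitable upward deviation}, so this could not be an SPE at all. But that is false in general: when vanishing likelihood fails and $\mu$ is large, the profile $(\tau_0,\tau_1)=(2\lBound-1,\phi_1)$ \emph{is} an SPE (this is precisely the content of Proposition~\ref{prop:Unique DE}), so no profitable deviation exists. Your ``cleaner route'' has the same defect: continuity only gives you that $(2\lBound-1+\delta)(1-G_\mu(v_\mu(2\lBound-1+\delta,\tau_1)))\to 2\lBound-1$ as $\delta\to 0^+$; to get that it \emph{exceeds} $2\lBound-1$ for small $\delta$ you need the derivative at $\delta=0$ to be positive, and by equation~\eqref{eq:direv zero1} that derivative equals $1-(2\lBound-1)\,g_\mu(\ubar\alpha)\,\partial_{\tau_0}v_\mu$, which is positive only under vanishing likelihood. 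The lemma, however, is stated for \emph{all} signal structures.

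The paper's argument avoids this entirely by using the non-deterrence hypothesis directly rather than hunting for a deviation: if $\phi_0$ is the Dirac mass at $2\lBound-1$, then (exactly as you observe) the consumer buys from Firm~$0$ almost surely regardless of $\tau_1$, hence $\Pi_0=2\lBound-1$ and Firm~$1$ never sells. That makes $(\phi_0,\phi_1)$ a \emph{deterrence} equilibrium by Definition~\ref{def:DE}, contradicting the hypothesis. No deviation is needed, and the conclusion $\phi_0((2\lBound-1,1))>0$ follows immediately. Your treatment of $\Pi_0\ge 2\lBound-1$ is fine, and your argument for $\Pi_1>0$ works; the paper instead deduces $\Pi_1>0$ from the already-established fact $\phi_0((2\lBound-1,1))>0$ via the small-price deviation for Firm~$1$ used in Proposition~\ref{prop: DE price}.
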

\begin{proof}
We prove the first part of the lemma.
Lemma \ref{lem:prices} implies that $\phi_0([2\lBound-1,1])=1.$
%By equation \ref{eq:consumer_condition}, for every price $\tau_0<2\lBound-1,$ for every signal the
%consumer strictly prefers Firm $0$'s product. Therefore, for all $\tau_0<2\lBound-1,$ the consumer will buy from Firm $0$ with probability one.
%Assume by contradiction that $\phi_0([2\lBound-1,1])<1$ and thus there exists some $\tilde{\tau}_0<2\lBound-1$ such that $\phi_0([0,\tilde{\tau}_0])>0.$  As $\tilde{\tau}_0<2\lBound-1$, there exists $\varepsilon>0$ such that $\varepsilon\le 2\lBound-1-\tilde{\tau}_0.$
%We define the following strategy $\phi_0'$ to be $\phi'_0([2\lBound-1,1])=\phi_0([2\lBound-1,1])$ and $\phi'_0(\tilde{\tau}_0+\varepsilon)=\phi_0([0,\tilde{\tau}_0])$ and note that
%$$\Pi_0(\phi'_0,\phi_1)-\Pi_0(\phi_0,\phi_1)\ge \varepsilon>0.$$ This entails a profitable deviation for Firm $0,$ and hence, in equilibrium $\phi_0([2\lBound-1,1])=1.$
% This remains true even if Firm $0$ increases its price to $\tau_0+\varepsilon$ for any $\varepsilon\in(0,2\lBound-1-\tau_0).$ such that $\tau_0+
%\varepsilon<2\lBound-1$ to any $\tau_0<2\lBound-1$ in the support of $\phi_0$.
We further note that if $(\phi_{0},\phi_{1})$ is a SPE profile for which $\phi_0$ is the Dirac measure on $2\lBound-1$, then $\Pi_{0}(\phi_{0},\phi_{1})=2\lBound-1,$ which means that the consumer buys from Firm $0$ with probability $1$.
Hence such an equilibrium must be a deterrence equilibrium. Therefore, it must hold that $\phi_0((2\lBound-1,1))>0$.

The fact that $\Pi_{1}(\phi_{0},\phi_{1})>0$ follows since, as in the proof of Proposition \ref{prop: DE price}, if $\phi_0((2\lBound-1,1))>0$, then Firm $1$ can guarantee a positive payoff against $\phi_0$.
%This
%concludes the proof of Lemma \ref{lem:NDE higher price}.
\end{proof}

\section{Proof of Theorem \ref{thm:SSG_results}}\label{app:thm2_proof}

\subsection*{Unbounded signals}
We begin the proof of Theorem \ref{thm:SSG_results}, by studying the case of unbounded signals, i.e., where $\ubar{\alpha}=0$ and $\bar{\alpha}=1.$
The following corollary shows that whenever signals are unbounded there can not be a deterrence equilibrium. In fact all equilibria are non-deterrence equilibria.

\begin{corollary}\label{cor:unbounded NDE}
If signals are unbounded then there are no deterrence equilibria in $\Gamma(\mu)$.
\end{corollary}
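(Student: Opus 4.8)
The plan is to obtain the corollary as an immediate consequence of Lemma \ref{lem: alpha greater than half} together with the formulas for the private-belief bounds in \eqref{eq:def_bounds}. First I would record that when signals are unbounded we have $\ubar{\alpha}=0$ and $\bar{\alpha}=1$, so substituting into \eqref{eq:def_bounds} yields $\lBound=0$ and $\uBound=1$ for \emph{every} interior prior $\mu\in(0,1)$. Equivalently, the consumer's posterior private belief $p_\mu(s)$ is, with positive probability, arbitrarily close to $0$ and, with positive probability, arbitrarily close to $1$, no matter what the prior is.

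Next I would argue by contradiction. Suppose $(\phi_0,\phi_1,\sigma)$ is a deterrence equilibrium of $\Gamma(\mu)$. By Definition \ref{def:DE} exactly one firm sells with positive probability; assume without loss of generality that Firm $1$ is the deterred firm (the case in which Firm $0$ is deterred is symmetric). Lemma \ref{lem: alpha greater than half} then forces $\lBound\geq\tfrac12$, whereas the computation above gives $\lBound=0<\tfrac12$ — a contradiction. The symmetric case, with Firm $0$ deterred, contradicts $\uBound=1>\tfrac12$. Hence $\Gamma(\mu)$ admits no deterrence equilibrium, and by Definition \ref{def:DE} every subgame perfect equilibrium of $\Gamma(\mu)$ is a non-deterrence equilibrium.

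Since the whole argument reduces to plugging $\ubar{\alpha}=0,\ \bar{\alpha}=1$ into \eqref{eq:def_bounds} and invoking Lemma \ref{lem: alpha greater than half}, there is essentially no obstacle here — all the real work was already carried out in proving that lemma (the profitable downward deviation of the deterred firm toward a low but positive price). The only point worth a remark is that the statement is non-vacuous because $\Gamma(\mu)$ does possess a subgame perfect equilibrium: under Assumption \ref{asumption} this is standard, and in general it follows from Reny's theorem as discussed in Section \ref{sec:gen_sig}.
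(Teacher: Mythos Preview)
Your proposal is correct and follows the same approach as the paper: compute $\lBound=0$ and $\uBound=1$ from the definition of unbounded signals via \eqref{eq:def_bounds}, then invoke Lemma \ref{lem: alpha greater than half} to rule out a deterrence equilibrium. The paper's proof is even more terse (two lines), but the argument is identical.
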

\begin{proof}
Since $\bar{\limitParam}=0$ and $\ubar{\limitParam}=1$ it follows that  $\lBound=0$
and $\uBound=1$.  The proof now follows from Lemma  \ref{lem: alpha greater than half}.
\end{proof}

\subsection*{Bounded signals with vanishing likelihood}
%\subsubsection{Equilibria of $\Gamma(\mu)$ when signals are bounded and exhibit the vanishing likelihood property.}

We now consider the case where signals are bounded, i.e., $\ubar{\alpha},\bar{\alpha}\in(0,1),$ and signals exhibit the vanishing likelihood property, i.e., $g_1(\ubar{\alpha})=0$. In the following lemma we show that the vanishing likelihood property also yields that $g_0(\ubar{\alpha})=0.$

\begin{lemma}\label{lem:g_0=0_in_VL}
	If the signal structure $(F_0,F_1,S)$ exhibits vanishing likelihood then $g_0(\ubar{\alpha})=0.$
\end{lemma}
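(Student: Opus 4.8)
The plan is to exploit the relationship between $G_0$ and $G_1$ near the lower endpoint $\ubar{\alpha}$. Recall that $G_\omega$ is the CDF of the random variable $p(s) = \frac{f_0(s)}{f_0(s)+f_1(s)}$ under $F_\omega$. The key observation is that for any event $\{s : p(s) \in B\}$, we can write $F_0$ and $F_1$ of this event in terms of each other using the definition of $p(s)$: on the set where $p(s)$ is near $\ubar{\alpha}$, we have $f_0(s) \approx \frac{\ubar{\alpha}}{1-\ubar{\alpha}} f_1(s)$, so integrating against $\frac{F_0+F_1}{2}$ gives that $dG_0$ and $dG_1$ are comparable near $\ubar{\alpha}$ with ratio tending to $\frac{\ubar{\alpha}}{1-\ubar{\alpha}}$. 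More precisely, I would first establish the change-of-measure identity: for $x \in (\ubar{\alpha},\bar{\alpha})$,
\begin{equation*}
G_0(x) = \int_{\ubar{\alpha}}^{x} \frac{y}{1-y}\, dG_1(y),
\end{equation*}
or equivalently $g_0(y) = \frac{y}{1-y} g_1(y)$ for the densities under Assumption \ref{asumption} (this is essentially Lemma \ref{lem:aux}, which already records that $G_0$ first-order stochastically dominates $G_1$ and that the likelihood ratio $dG_0/dG_1 = \frac{y}{1-y}$).

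Granting that identity, the rest is immediate. Since $g_1(\ubar{\alpha}) = 0$ by the vanishing likelihood hypothesis and $g_0,g_1$ are continuous on $[\ubar{\alpha},\bar{\alpha}]$ (Assumption \ref{asumption}), we get
\begin{equation*}
g_0(\ubar{\alpha}) = \frac{\ubar{\alpha}}{1-\ubar{\alpha}}\, g_1(\ubar{\alpha}) = 0,
\end{equation*}
using that $\ubar{\alpha} \in (0,1)$ so the factor $\frac{\ubar{\alpha}}{1-\ubar{\alpha}}$ is finite. This is the whole argument; symmetrically one would obtain $g_1(\bar{\alpha}) = \frac{1-\bar{\alpha}}{\bar{\alpha}} g_0(\bar{\alpha}) = 0$, though that direction is not needed here.

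The only step requiring care — and the main (modest) obstacle — is pinning down the likelihood-ratio identity $dG_0/dG_1(y) = \frac{y}{1-y}$ cleanly. This follows from $p(s) = \frac{f_0(s)}{f_0(s)+f_1(s)}$: on the level set $\{p(s) = y\}$ we have $\frac{f_0(s)}{f_1(s)} = \frac{y}{1-y}$, and since $G_\omega$ is the pushforward of $F_\omega$ under $p$, for any Borel $A \subseteq [\ubar\alpha,\bar\alpha]$,
\begin{equation*}
G_0(A) = \int_{p^{-1}(A)} f_0 \, d\tfrac{F_0+F_1}{2} = \int_{p^{-1}(A)} \tfrac{f_0}{f_1} f_1 \, d\tfrac{F_0+F_1}{2} = \int_A \tfrac{y}{1-y}\, dG_1(y),
\end{equation*}
where the middle equality uses that $f_1 > 0$ $\tfrac{F_0+F_1}{2}$-a.e. on the common support (mutual absolute continuity of $F_0, F_1$). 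I expect this to already be available as Lemma \ref{lem:aux} in Appendix \ref{sec:Acem}, so in the write-up I would simply cite it and conclude in one line.
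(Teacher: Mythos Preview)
Your argument is correct but takes a different route from the paper. The paper argues by contradiction using only the first-order stochastic dominance statement in Lemma \ref{lem:aux}: if $g_0(\ubar\alpha)>0=g_1(\ubar\alpha)$, then by continuity of the densities there is some $\varepsilon>0$ with $G_0(\ubar\alpha+\varepsilon)>G_1(\ubar\alpha+\varepsilon)$, contradicting $G_0\le G_1$. By contrast, you derive the sharper density identity $g_0(y)=\tfrac{y}{1-y}\,g_1(y)$ and evaluate it at $y=\ubar\alpha$. Your derivation of the change-of-measure relation is clean (indeed, since $f_0+f_1\equiv 2$ with respect to $\tfrac{F_0+F_1}{2}$, one has $f_0=2p$ and $f_1=2(1-p)$, which gives $dG_0/dG_1(y)=y/(1-y)$ immediately). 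One small correction: Lemma \ref{lem:aux} as stated in the paper does \emph{not} record this density identity---it only asserts that $G_1/G_0$ is non-increasing and exceeds $1$, hence FOSD---so you would need to include your short derivation rather than merely cite the lemma. Your approach is more informative (it pins down the exact ratio, and also yields the symmetric statement $g_1(\bar\alpha)=\tfrac{1-\bar\alpha}{\bar\alpha}g_0(\bar\alpha)=0$ for free), while the paper's is slightly more economical in that it invokes only the coarser FOSD conclusion already on record.
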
	
\begin{proof}
	Since vanishing likelihood holds we have that $g_1(\ubar{\alpha})=0.$  Assume to the contrary that $g_1(\ubar{\alpha})< g_0(\ubar{\alpha}).$ Since $F_0,F_1$ are absolutely mutually continuous, there exists $\varepsilon$ such that $\int_{\ubar{\alpha}}^{\ubar{\alpha}+\varepsilon}g_0(s) ds > \int_{\ubar{\alpha}}^{\ubar{\alpha}+\varepsilon}g_1(s)ds \Rightarrow G_0(\ubar{\alpha}+\varepsilon)> G_1(\ubar{\alpha}+\varepsilon)$. This stands in contradiction to Lemma \ref{lem:aux} in Appendix \ref{sec:Acem}, which shows that $G_0$ first order stochastically dominates $G_1$.
\end{proof}

The second part of Theorem \ref{thm:SSG_results} is proved in the following proposition.
\begin{proposition}\label{prop:no DE when g=0}
If the signal structure $(F_0,F_1,S)$ exhibits vanishing likelihood, then for every $\mu\in(0,1)$ there is no deterrence equilibrium in $\Gamma(\publicBelief)$.
\end{proposition}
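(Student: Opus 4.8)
The plan is to argue by contradiction: suppose some $\mu\in(0,1)$ admits a deterrence equilibrium $(\phi_0,\phi_1)$ of $\Gamma(\mu)$, say with Firm $1$ deterred (the case of Firm $0$ deterred is symmetric, using $g_0(\bar\alpha)=0$ which follows from vanishing likelihood and Lemma~\ref{lem:g_0=0_in_VL} applied at the upper boundary). By Lemma~\ref{lem: alpha greater than half} we then have $\lBound\ge\frac12$, and by Proposition~\ref{prop: DE price} the deterred-equilibrium prices are pinned down: Firm $0$ plays $\tau_0=2\lBound-1$ with probability one, earns $\Pi_0=2\lBound-1$, and the consumer buys from Firm $0$ almost surely. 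The strategy is to exhibit a profitable deviation for Firm~$0$ — namely, raising its price slightly above $2\lBound-1$ — and show that the revenue lost from the consumers who defect is second-order compared to the revenue gained on the rest of the market, precisely because vanishing likelihood forces the density of such defecting consumers at the boundary to be zero.

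Concretely, I would consider Firm $0$ deviating to $\tau_0=2\lBound-1+\delta$ for small $\delta>0$, while Firm~$1$ is held at (or best-responds from) its deterrence behavior. The consumer now buys from Firm~$0$ iff $p(s)>v_\mu(\tau_0,\tau_1)$ (in the full-market case; one must check which market regime obtains, but since $\lBound\ge\tfrac12$ the market stays full for $\tau_1$ near the deterrence level), and by the expression \eqref{eq:indif_th} for $v_\mu$ and its derivative \eqref{eq:llr}, the threshold $v_\mu$ moves up from $\lBound$ by an amount that is $\Theta(\delta)$ as $\delta\to0$. The mass of consumers lost is therefore $G_\mu(v_\mu)-G_\mu(\lBound)=G_\mu(\lBound+\Theta(\delta))$, which, translated back through the Bayes map \eqref{eq:bayesian update} into the $p(s)$-coordinate, is controlled by $G_\omega$ evaluated just above $\ubar\alpha$. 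Firm~$0$'s deviation payoff is
\[
\Pi_0(\tau_0,\tau_1)=\bigl(1-G_\mu(v_\mu(\tau_0,\tau_1))\bigr)(2\lBound-1+\delta),
\]
and the deviation is profitable iff the gain $\delta\cdot\bigl(1-G_\mu(v_\mu)\bigr)$ exceeds the loss $(2\lBound-1)\cdot G_\mu(v_\mu)$. Dividing by $\delta$ and letting $\delta\to0$, profitability reduces to showing
\[
1-G_\mu(\lBound)\;>\;(2\lBound-1)\,\limsup_{\delta\to0^+}\frac{G_\mu(v_\mu(\tau_0,\tau_1))}{\delta},
\]
and the right-hand side is a constant multiple of $\liminf$-type boundary derivatives of $G_0$ and $G_1$ at $\ubar\alpha$, each of which is $0$ by vanishing likelihood (using $g_1(\ubar\alpha)=0$ and Lemma~\ref{lem:g_0=0_in_VL}'s conclusion $g_0(\ubar\alpha)=0$). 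Hence the right-hand side vanishes while the left-hand side is $1-G_\mu(\lBound)>0$ (since $\lBound$ is the lower edge of the support, $G_\mu(\lBound)$ is the atom below it, which is at most $G_\mu$ at the left boundary — one must confirm this is $<1$, which holds because $\bar\alpha>\ubar\alpha$ so there is mass above $\lBound$). So the deviation is strictly profitable for all small $\delta$, contradicting equilibrium.

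The main obstacle I anticipate is the careful bookkeeping of coordinate changes and market regimes: the boundary derivatives are defined in terms of $G_\omega$ as functions of the private-belief variable $x$ near $\ubar\alpha$, whereas the profit expression involves $G_\mu\circ v_\mu$ as a function of the price $\tau_0$, so I need the chain rule through \eqref{eq:bayesian update}--\eqref{eq:llr} to transfer the ``density zero at the boundary'' statement into ``$G_\mu(v_\mu(\tau_0,\tau_1))/\delta\to0$'' — and I need to be sure the relevant derivative $\partial v_\mu/\partial\tau_0$ at $\tau_0=2\lBound-1$ is finite and nonzero so the rate is genuinely $\Theta(\delta)$ and not slower. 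A secondary subtlety is handling Firm~$1$'s response: since in the deterrence equilibrium Firm~$1$ earns zero, any deviation by Firm~$0$ that leaves a sliver of the market to Firm~$1$ is still consistent with Firm~$1$'s incentives, so it suffices to check Firm~$0$'s single-player deviation while treating Firm~$1$'s price as fixed (or showing the argument is robust to Firm~$1$'s best reply, which can only help Firm~$0$ here since it shifts $v_\mu$ down, increasing $1-G_\mu(v_\mu)$). Once these are in place, the contradiction closes the proof.
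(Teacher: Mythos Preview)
Your proposal is correct and follows essentially the same route as the paper: assume a deterrence equilibrium, pin down Firm~$0$'s price at $2\lBound-1$ via Proposition~\ref{prop: DE price}, and show that a small upward deviation is profitable because the boundary density vanishes. The paper's execution is somewhat cleaner on the two points you flag as obstacles: it reduces to checking the deviation against $\tau_1=0$ by the monotonicity of $\Pi_0(\tau_0,\tau_1)$ in $\tau_1$ (so Firm~$1$'s response is dispatched in one line), and under Assumption~\ref{asumption} it simply differentiates $\Pi_0(\tau_0,0)$ at $\tau_0=2\lBound-1$ to obtain the derivative equal to~$1$, which renders your $\limsup$/$\liminf$ bookkeeping and coordinate-change worries unnecessary (note also that $v_\mu(2\lBound-1,0)=\ubar{\alpha}$ in private-belief coordinates, so $G_\mu$ there is exactly $0$, not merely less than~$1$).
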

\begin{proof}
Without loss of generality assume that $\mu\in(\frac{1}{2},1)$ and  assume to the contrary that there exists a deterrence equilibrium in $\Gamma(\mu).$ By Lemma \ref{lem: alpha greater than half}, the only possible deterrence equilibrium is one in which Firm $1$ is deterred and by Proposition \ref{prop: DE price} it must take the form of  $(2\ubar{\alpha}_\mu-1,\phi_1).$ Therefore, $\Pi_0(2\lBound-1,\phi_1)=2\lBound-1.$
We first claim that it is sufficient to show that
\begin{equation}\label{eq:profit_de}
\Pi_0(2\ubar{\alpha}_\mu-1+\varepsilon,0)-\Pi_0(2\ubar{\alpha}_\mu-1,0)>0
\end{equation}
for some $\varepsilon>0.$
 To see this, note that
$\Pi_0(2\ubar{\alpha}_\mu-1,\phi_1)=2\ubar{\alpha}_\mu-1$ for any mixed strategy $\phi_1$ of Firm $1$. In addition, for any fixed price $\tau_0$ of Firm $0$ the payoff
$\Pi_0(\tau_0,\tau_1)$ is (weakly) decreasing in $\tau_1$. Therefore, the inequality in \eqref{eq:profit_de} implies that $\Pi_0(2\ubar{\alpha}_\mu-1+\varepsilon,\phi_1)>2\ubar{\alpha}_\mu-1$ for any mixed strategy $\phi_1$ of Firm $1$. Therefore, if $2\lBound-1+\varepsilon$ yields a profitable deviation to Firm $0$ against price $\tau_1=0$ it also yields a profitable deviation with respect to any strategy $\phi_1$ of Firm $1.$

To establish equation \eqref{eq:profit_de} note that
\begin{equation}\label{eq:direv zero1}
\begin{split}
&\Diff{\Pi_{0}(\price_{0},0)}{\price_{0}}|_{\price_0=2\lBound-1}=\\
&1-(2\lBound-1)\left(\Diff{v_{\mu}(\price_{0},0)}{\price_{0}}|_{2\lBound-1}\right)
\left(\publicBelief g_{0}(\ubar{\alpha})+(1-\publicBelief)g_{1}(\ubar{\alpha})\right).
\end{split}
\end{equation}

Since vanishing likelihood holds, by Lemma \ref{lem:g_0=0_in_VL} we have that $g_1(\ubar{\alpha})=g_0(\ubar{\alpha})=0.$
Therefore equation \eqref{eq:direv zero1} implies that
$$\Diff{\Pi_{0}(\price_{0},0)}{\price_{0}}|_{\price_0=2\lBound-1}=1.$$
Hence $\Pi_0(2\ubar{\alpha}_\mu-1+\varepsilon,0)-\Pi_0(2\ubar{\alpha}_\mu-1,0)>0$ for all sufficiently small $\varepsilon>0$.

\end{proof}
If a deterrence equilibrium does not exist then at each stage of our sequential setting the actual action of the consumer will give us additional information and the public belief will shift. Intuitively, this drives the learning result. However, it turns out that this is not enough. Herrera and H\o rner \cite{Herrera2013} show that in the herding model the fact that $\mu_t\neq \mu_{t+1}$ with probability one does not imply that asymptotic learning holds.
In order to establish asymptotic learning the following stronger result is required.
%In practice, in order to establish asymptotic learning under vanishing likelihood we need the following stronger result.
\begin{lemma}\label{lem:notf1}
If the signal structure $(F_0,F_1,S)$ exhibits the vanishing likelihood condition or signals are unbounded, then
for every $\varepsilon>0$ there exists $\delta>0$ such that if $\mu\in[\epsilon,1-\epsilon]$ and $\phi=(\phi_0,\phi_1,\sigma)$ is a SPE of $\Gamma(\mu),$
then $\mathrm{P}_{\mu,\phi}(\sigma(\tau_0,\tau_1,s)=a)\leq 1-\delta$ for any Firm $i=0,1$.
\end{lemma}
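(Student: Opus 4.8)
The plan is to prove the contrapositive-flavored statement uniformly: assuming vanishing likelihood (or unbounded signals), I will bound away from $1$ the probability that the consumer takes any fixed action $a\in\{0,1,e\}$, with a bound $\delta$ depending only on $\varepsilon$. First I would fix $\mu\in[\varepsilon,1-\varepsilon]$ and a SPE $\phi=(\phi_0,\phi_1,\sigma)$. By Proposition \ref{prop:no DE when g=0} (in the vanishing-likelihood case) and Corollary \ref{cor:unbounded NDE} (in the unbounded case), $\phi$ is a non-deterrence equilibrium, so each firm sells with positive probability; the work is to make ``positive'' into ``at least $\delta(\varepsilon)$.'' The three actions $a=0$, $a=1$, $a=e$ should be handled separately.

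For the action $a=e$ (the consumer exits): I would argue that a firm can always guarantee itself a strictly positive payoff by undercutting, so in equilibrium exit cannot happen with probability arbitrarily close to $1$. Concretely, if $\sigma(\tau_0,\tau_1,s)=e$ on a set of near-full probability, then Firm $0$ deviating to a small price $\eta>0$ captures every consumer with $p_\mu(s)>v_\mu(\eta,\tau_1)$; since $p_\mu(s)\in[\lBound,\uBound]$ with $\lBound>0$ whenever $\mu\geq\varepsilon$ (by \eqref{eq:def_bounds}), and $v_\mu(\eta,\tau_1)\to 0$ as $\eta\to 0$ uniformly over $\tau_1$ and over $\mu\in[\varepsilon,1-\varepsilon]$ in the non-full regime of \eqref{eq:indif_th}, this yields a profitable deviation bounded below in $\varepsilon$. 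For the action $a=0$ (all buy from Firm $0$): I would use the structure of non-deterrence equilibria from Lemma \ref{lem:NDE higher price}, namely $\phi_0((2\lBound-1,1))>0$ and $\Pi_1(\phi_0,\phi_1)>0$; combined with the vanishing-likelihood/unbounded hypothesis this forces the realized indifference threshold $v_\mu(\tau_0,\tau_1)$ to exceed some $r>\ubar\alpha$ with probability bounded below, so the consumer buys from Firm $1$ (equivalently, does not buy from Firm $0$) with probability at least $G_0(r)>0$ times that bound. The symmetric argument handles $a=1$.

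The key quantitative input — and the step I expect to be the main obstacle — is showing that the nonconformism probability does not degenerate as $\mu\to 1$. This is precisely where vanishing likelihood is used and where a naive argument fails: although Firm $0$ always has the incentive in \eqref{eq:direv zero1} to raise its price slightly above $2\lBound-1$ (since $g_0(\ubar\alpha)=g_1(\ubar\alpha)=0$ makes the marginal payoff equal to $1$), I must turn this local incentive into a uniform lower bound on how much mass Firm $0$ puts on prices bounded away from $2\lBound-1$, hence on $\mathrm{P}_{\mu,\phi}(v_\mu(\tau_0,\tau_1)\geq r)$. I would extract this by a compactness/continuity argument: the map sending $\mu$ and a candidate SPE to the quantity $\sup\{r:\mathrm{P}_{\mu,\phi}(v_\mu\geq r)>0\}$ is, by the equilibrium equations \eqref{eq:zero_profit_mix} and Lemma \ref{lem:llr}, controlled by continuous data on the compact set $[\varepsilon,1-\varepsilon]\times[0,1]^2$, and vanishing likelihood (via $g_0(\ubar\alpha)=0$ and continuity of $g_0,g_1$ from Assumption \ref{asumption}) prevents it from collapsing to $\ubar\alpha$ in the limit $\mu\to 1$; a similar argument near $\mu\to 0$ uses $g_0(\bar\alpha)=0$. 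In the unbounded case the same conclusion is immediate and stronger, since $\lBound=0$, $\uBound=1$, so no price can foreclose the high- or low-belief consumers at all. Taking $\delta$ to be the minimum of the three bounds obtained for $a=0,1,e$ completes the proof.
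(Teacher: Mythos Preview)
Your instinct that a compactness/limit argument is the key is correct, and you correctly identify Proposition~\ref{prop:no DE when g=0} and Corollary~\ref{cor:unbounded NDE} as the relevant inputs. However, the execution of the compactness step has a gap and one piece of the framing is off. (Minor note: the lemma concerns only $a\in\{0,1\}$, so your separate treatment of $a=e$ is unnecessary.)

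The issue is how you pass from ``no deterrence equilibrium at each $\mu$'' to a uniform $\delta(\varepsilon)$. You invoke a map on ``the compact set $[\varepsilon,1-\varepsilon]\times[0,1]^2$'', but equilibrium strategies are mixed and live in $\Delta([0,1])^2$, not $[0,1]^2$; you track the quantity $\sup\{r:\mathrm{P}_{\mu,\phi}(v_\mu\geq r)>0\}$, but a sup bounded away from $\ubar{\alpha}$ does not stop nearly all mass of $v_\mu$ from concentrating at $\ubar{\alpha}$; and you say vanishing likelihood prevents collapse ``in the limit $\mu\to 1$'' via \eqref{eq:direv zero1}, yet here $\mu$ is confined to $[\varepsilon,1-\varepsilon]$ and never approaches $1$, so that first-order analysis at the deterrence price is not what is needed. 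The paper closes the gap by a clean proof by contradiction: suppose $\mu_k\leq 1-\varepsilon$ and SPEs $\phi^k$ of $\Gamma(\mu_k)$ satisfy $\mathrm{P}_{\mu_k,\phi^k}(\sigma_k=0)\to 1$; pass to a weakly convergent subsequence $(\mu_k,\phi_0^k,\phi_1^k)\to(\mu,\phi_0,\phi_1)$. Firm~$1$'s limit profit is $0$, so the undercutting argument you sketch forces $\phi_0$ to be the Dirac mass at $2\ubar{\alpha}_\mu-1$; continuity of payoffs then shows $(\phi_0,\phi_1)$ is itself a SPE of $\Gamma(\mu)$, hence a deterrence equilibrium at an interior prior --- contradicting Proposition~\ref{prop:no DE when g=0} (or Corollary~\ref{cor:unbounded NDE}). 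Thus vanishing likelihood enters this lemma only through the nonexistence of deterrence equilibria at the limit point, not via any direct boundary analysis at $\mu=1$.
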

Note that Lemma \ref{lem:notf1} subsumes Proposition \ref{prop:no DE when g=0} in that under vanishing likelihood,  a deterrence equilibrium does not exist. This applies that there exists an upper bound on the probability that the consumer buys from any given firm. This in turn implies that if $\mu$ is bounded away from zero and one, then the distance between $\mu$ and the posterior probability, conditional on the action of the current consumer, is bounded away from zero.
\begin{proof}
We prove the lemma under the assumption that the signal structure $(F_0,F_1,S)$ exhibits vanishing likelihood. The proof for the unbounded case is similar and therefore omitted.

Assume by way of contradiction that there exists an $\epsilon>0$ and a sequence of SPE $\phi^k=(\phi^k_0,\phi^k_1,\sigma_k)$ of $\Gamma(\mu_k)$ such that
$\mu_k\leq 1-\epsilon$ and $$\lim_{k\rightarrow\infty}\mathrm{P}_{\mu_k,\phi^k}(\sigma_k(\tau_0,\tau_1,s)=0)=1.$$
In words, as $k$ increases, the prior approaches $1.$ We show that this entails that the probability of the consumer buying from Firm $0$ approaches $1$ as well.

We can clearly assume (possibly by considering subsequences) that the sequence $\{\Pi_\omega(\phi^k_0,\phi^k_1,\sigma_k)\}_{k=1}^\infty$ converges. Similarly, we can assume that $\{(\phi^k_0,\phi^k_1,\mu_k)\}_{k=1}^\infty$ converges to some $(\phi_0,\phi_1,\mu).$\footnote{The convergence of $\phi^k_\omega$ is assumed with respect to the weak topology.}
As $\lim_{k\rightarrow\infty}\mathrm{P}_{\mu_k,\phi^k}(\sigma_k(a=0))=1$, the limit profit of Firm $1$ shrinks to zero: $$\lim_{k\rightarrow\infty}\Pi_1(\phi^k_0,\phi^k_1,\sigma_k)=0.$$

It follows that the limit price of Firm $0,$ $\lim_{k\rightarrow\infty}\phi^k_0,$  is the pure deterrence price $2\ubar{\alpha}_{\mu}-1$. To see this consider the case where  $\phi_0((2\ubar{\alpha}_{\mu}-1+\eta,1))>0$ for some $\eta>0.$ In this case Firm $1$ play $\tau'_1=\frac{\eta}{2}$ in $\Gamma(\mu)$ and guarantee a positive profit against $\phi_0$. Since in this case the consumers for which
$\posterior\in [\ubar{\alpha}_{\mu},\ubar{\alpha}_{\mu}+\frac{\delta}{4})$ will strictly prefer to buy from Firm $1.$  Thus, at the constant price of $\tau'_1=\frac{\eta}{2}$ Firm $1$ can guarantee a positive expected payoff that is bounded away from zero, for all sufficiently large $k$.
%This shows that if $\tau_0>2\ubar{\alpha}_\mu-1,$ then Firm $1$ has a profitable deviation in $\Gamma(\mu_k)$ for all sufficiently large $k$.

Consider the game $\Gamma(\mu_k)$ and the strategy profile $(\phi^k_0,\phi^k_1)=(2\ubar{\alpha}_{\mu_k}-1,
\phi^k_1).$  A standard continuity consideration implies that since $\phi^k=(\phi^k_0,\phi^k_1)$ is a SPE of $\Gamma(\mu_k)$ and $\lim_{k\rightarrow\infty}(\phi^k_0,\phi^k_1,\mu_k)=(\phi_0,\phi_1,\mu),$ it holds that $(\phi_0,\phi_1)$ is a SPE of $\Gamma(\mu).$  Therefore $(\phi_0,\phi_1)=(2\ubar{\alpha}_{\mu}-1,\phi_1).$ Under the price $2\ubar{\alpha}_{\mu}-1,$ the consumer buys from Firm $0$ with probability one.

 This yields that $(\phi_0,\phi_1)$ is a deterrence equilibrium of $\Gamma(\mu),$ which stands in contradiction to Proposition \ref{prop: DE price}.
\end{proof}

We get the following corollary of Lemma \ref{lem:notf1}.
\begin{cor1}
If signals exhibit vanishing likelihood or if signals are unbounded, then for every $\varepsilon>0$ there exists some $r>\ubar{\alpha}$ and $\delta'>0$ such that if $\mu\in[\varepsilon,1-\varepsilon]$ and $\phi=(\phi_0,\phi_1)$ is a SPE of $\Gamma(\mu)$, then
$$\mathrm{P}_{\mu,\phi}(v_\mu(\tau_0,\tau_1)\geq r)>\delta'.$$
A similar condition holds for Firm $1.$
\end{cor1}
\subsection*{Bounded signals without vanishing likelihood}
The following lemma shows that the consumer's threshold signal approaches the lower bound $\ubar{\alpha}$ as $\mu$ approaches $1$ in every SPE.
\begin{lemma}\label{cor: along the path}
Let $\{\mu_k\}_{k=1}^{\infty}\subseteq(0,1)$ be a sequence of priors such that $\lim_{k\rightarrow \infty}\mu_{k}=1$. Let  $\phi^k=(\phi^k_{0},\phi^k_{1},\sigma_k)$ be a SPE  for the game $\Gamma(\mu_k)$. Then the following holds for every $\epsilon>0:$
$$lim_{k\rightarrow\infty} \mathrm{P}_{\mu_k,\phi^k}(v_{\mu_k}(\tau_{0},\tau_{1})\in [\ubar{\alpha},\ubar{\alpha}+\epsilon])=1.$$
\end{lemma}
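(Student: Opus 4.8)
The plan is to argue by contradiction and leverage the structural results already established about deterrence and non-deterrence equilibria, together with the profit bounds from Lemma \ref{lem:NDE higher price} and Proposition \ref{prop: DE price}. Suppose, for contradiction, that along some subsequence (which I relabel as the full sequence) there is $\epsilon>0$ and $\delta>0$ with $\mathrm{P}_{\mu_k,\phi^k}(v_{\mu_k}(\tau_0,\tau_1)>\ubar{\alpha}+\epsilon)\geq\delta$ for all $k$. The first step is to pass to further subsequences so that $(\phi^k_0,\phi^k_1,\mu_k)$ converges weakly to some $(\phi_0,\phi_1,1)$ and the firms' profit sequences $\Pi_\omega(\phi^k_0,\phi^k_1,\sigma_k)$ converge; this is the same compactness device used in the proof of Lemma \ref{lem:notf1}. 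Note $\ubar{\alpha}_{\mu_k}\to 0$ and $\bar{\alpha}_{\mu_k}\to 1$ as $\mu_k\to 1$, since these bounds are obtained from $\ubar{\alpha},\bar{\alpha}$ via the Bayesian update \eqref{eq:def_bounds} with $\mu_k\to 1$.

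The core of the argument is to show that the assumed ``mass of consumers with high threshold'' forces Firm $1$ to earn a profit bounded away from zero uniformly in $k$, which then contradicts the fact that as $\mu_k\to 1$ Firm $0$ can secure essentially the entire surplus. Concretely: whenever $v_{\mu_k}(\tau_0,\tau_1)>\ubar{\alpha}+\epsilon$ on a set of price realizations of probability at least $\delta$, the consumer fails to buy from Firm $0$ with probability at least $G_{\mu_k}(\ubar{\alpha}+\epsilon)\geq\mu_k G_0(\ubar{\alpha}+\epsilon)$, which is bounded away from zero (recall $G_0(\ubar{\alpha}+\epsilon)>0$ since $\ubar{\alpha}$ is the infimum of the support). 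On that event the consumer either buys from Firm $1$ or exits; I must rule out the ``exits'' branch from being the whole story. The cleanest route is: by Observation \ref{lem:prices}, $\phi_1^k$ places all mass on $[1-2\bar{\alpha}_{\mu_k},1]$, and since $\bar{\alpha}_{\mu_k}\to 1$ this interval's left endpoint tends to $-1$, so Firm $1$ is effectively unconstrained from below; hence if Firm $1$ is getting $o(1)$ profit it must (by Lemma \ref{lem:NDE higher price} or the deterrence characterization) be that in the limit Firm $0$ plays the pure deterrence price $2\ubar{\alpha}_\mu-1$ — but here $\mu=1$, so I instead argue directly that a profitable deviation $\tau_1=\eta/2$ for small $\eta$ exists for Firm $1$ for all large $k$ exactly as in the last paragraph of the proof of Lemma \ref{lem:notf1}, unless $\phi_0^k$ concentrates near $2\ubar{\alpha}_{\mu_k}-1$. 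Pinning down that $\phi_0^k\to$ (Dirac at the limiting deterrence price) and that under this price the consumer buys from Firm $0$ with probability approaching one — because the threshold $v_{\mu_k}$ under price $2\ubar{\alpha}_{\mu_k}-1$ equals $\ubar{\alpha}_{\mu_k}\to 0$ — then gives $\mathrm{P}(v_{\mu_k}(\tau_0,\tau_1)\in[\ubar{\alpha},\ubar{\alpha}+\epsilon])\to 1$, contradicting the assumption.

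The main obstacle I anticipate is handling the non-full-market (exit) possibility carefully: a priori the mass of consumers with $v_{\mu_k}>\ubar{\alpha}+\epsilon$ could correspond to consumers who \emph{exit} rather than switch to Firm $1$, in which case Firm $1$'s profit would not be bounded below and the contradiction would not be immediate. Resolving this requires showing that whenever the market is not full at the relevant prices, Firm $1$ has a strictly profitable deviation to a small positive price that captures precisely those near-threshold consumers — which is the same mechanism used in Lemma \ref{lem: alpha greater than half} and Proposition \ref{prop: DE price} (a price $\tau_1=\eta/2$ attracts every consumer with $p_\mu(s)\in[\ubar{\alpha}_\mu,\ubar{\alpha}_\mu+\eta/4)$, an event of positive probability). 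Threading this deviation argument through the limiting procedure, while keeping track of whether the limiting configuration is a full or non-full market, is where the real care is needed; everything else is the now-standard compactness-plus-continuity routine together with the equilibrium profit inequalities already in hand.
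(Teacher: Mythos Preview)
Your plan contains a computational slip and, more importantly, takes an unnecessarily roundabout route that leaves the very obstacle you flag (the exit branch) unresolved.

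The slip: you write $\ubar{\alpha}_{\mu_k}\to 0$, but from \eqref{eq:def_bounds} one has $\ubar{\alpha}_{\mu}=\frac{\mu\ubar{\alpha}}{\mu\ubar{\alpha}+(1-\mu)(1-\ubar{\alpha})}\to 1$ as $\mu\to 1$. Relatedly, $v_\mu$ takes values on the \emph{private}-belief scale $p(s)$, not on the posterior scale $p_\mu(s)$, so $v_{\mu}(2\ubar{\alpha}_{\mu}-1,0)=\ubar{\alpha}$ (a fixed constant), not $\ubar{\alpha}_{\mu}$.

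The structural issue: you try to derive the contradiction by bounding Firm~$1$'s profit from below, which forces you to argue that the ``lost'' consumers go to Firm~$1$ rather than exit---precisely the difficulty you identify and do not resolve. The paper avoids this entirely by arguing from Firm~$0$'s side. The observation you are missing is that whenever $v_{\mu_k}(\tau_0,\tau_1)>\ubar{\alpha}+\epsilon$, Firm~$0$ fails to sell with probability at least $G_{0}(\ubar{\alpha}+\epsilon)>0$, \emph{irrespective of whether those consumers buy from Firm~$1$ or exit}. Under the contradiction hypothesis this caps Firm~$0$'s equilibrium profit at $\delta(1-G_0(\ubar{\alpha}+\epsilon))+(1-\delta)=1-\delta\, G_0(\ubar{\alpha}+\epsilon)<1$. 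But Firm~$0$ can always deviate to the price $\tau_0=2\ubar{\alpha}_{\mu_k}-1$, guaranteeing payoff $2\ubar{\alpha}_{\mu_k}-1$; since signals are bounded this tends to $1$, so for large $k$ it exceeds the cap, contradicting equilibrium. No weak limits, no case split on market fullness, and no analysis of Firm~$1$'s deviations are needed.
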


\begin{proof}
Assume by way of contradiction that there exists some $\epsilon_0>0$ and $\delta>0$ for which the following holds (possibly considering a subsequence): $$lim_{k\rightarrow\infty} \mathrm{P}_{\mu_k,\phi^k}(v_{\mu_k}(\tau_{0},\tau_{1})\in [\ubar{\alpha},\ubar{\alpha}+\epsilon])<1-\delta.$$
This implies that the payoff to Firm $0$ is at most $1-\delta G_0(\ubar{\alpha}+\epsilon_0)<1$.
To see this note that with a probability of at least $\delta>0$ it holds for sufficiently large $k$ that $v_{\mu_k}(\tau_{0},\tau_{1})>\ubar{\alpha}+\epsilon$. Therefore, with probability at least
$\delta$ the profit of Firm $0$ is bounded by $1-G_0(\ubar{\alpha}+\epsilon_0)$. Therefore, the expected profit of Firm $0$ is bounded by
$$\delta(1-G_0(\ubar{\alpha}+\epsilon_0))+(1-\delta)=1-\delta G_0(\ubar{\alpha}+\epsilon_0).$$
%Let $\gamma=\frac{\ubar{\alpha}+\beta}{2}>\ubar{\alpha}$. Since the expected revenue of Firm $0$ in $\Gamma(\mu_k)$ is bounded by
%$$(1-G_0(v_{\mu_k}(\tau_{0}^{*}(\mu_k),\tau_{1}^{*}(\mu_k))))$$
%it must hold that for all sufficiently large $k$
%$$\Pi_0(\tau_{0}(\mu_k),\tau_{1}(\mu_k),\sigma(\mu_k))\leq 1-G_0(\gamma).$$
%Note that since $\gamma>\ubar{\alpha}$ it must hold that $1-G_0(\gamma)=x<1$.

Since signals are bounded and $\lim_{k\rightarrow\infty}\mu_k=1$ it must hold, for sufficiently large $k,$ that $$2\ubar{\alpha}_{\mu_k}-1>1-\delta G_0(\ubar{\alpha}+\epsilon).$$ In the game $\Gamma(\mu_k),$ consider a deviation by  Firm $0$  to the pure price
$\tau_0=2\ubar{\alpha}_{\mu_k}-1$. Firm $0$ then guarantees an expected revenue of
$$2\ubar{\alpha}_{\mu_k}-1>1-\delta G_0(\ubar{\alpha}+\epsilon),$$ which implies a contradiction.
\end{proof}

The following corollary shows that as $\mu$ approaches $1$, it holds that for any SPE of $\Gamma(\mu)$, the equilibrium price of Firm $0$ approaches $1$.
\begin{corollary}\label{cor:price0}
Let $\{\mu_k\}_{k=0}^{\infty}\subset (0,1)$ be a sequence of priors that converges to $1,$ and let $(\phi^k_0,\phi^k_1,\sigma_k)$ be a SPE  of $\Gamma(\mu_k)$ for any $k$. Then,
$$
\lim_{k\rightarrow\infty} \phi_{0}^k=1.
$$
\end{corollary}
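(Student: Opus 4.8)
I want to prove Corollary \ref{cor:price0}: if $\mu_k \to 1$ and $(\phi^k_0,\phi^k_1,\sigma_k)$ is a SPE of $\Gamma(\mu_k)$, then $\phi_0^k \to 1$ (convergence of the random price of Firm $0$, say in distribution, to the Dirac mass at $1$). The key fact I would lean on is Lemma \ref{cor: along the path}, which says that $v_{\mu_k}(\tau_0,\tau_1) \in [\ubar{\alpha},\ubar{\alpha}+\epsilon]$ with probability tending to $1$. So the strategy is: translate ``the indifference threshold is close to $\ubar{\alpha}$'' into ``the price of Firm $0$ is close to $1$,'' using the explicit formula \eqref{eq:indif_th} for $v_\mu$ and the fact that, since $\mu_k\to1$, the lower bound $\ubar{\alpha}_{\mu_k}=\lBound[\alpha][\mu_k]\to1$ (from \eqref{eq:def_bounds}), which forces the relevant prices to be large anyway via Observation \ref{lem:prices}.

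**The steps.** First I would fix $\epsilon>0$ and invoke Lemma \ref{cor: along the path}: with $\mathrm{P}_{\mu_k,\phi^k}$-probability at least $1-o(1)$, the realized price pair $(\tau_0,\tau_1)$ satisfies $v_{\mu_k}(\tau_0,\tau_1)\le\ubar{\alpha}+\epsilon$. Second, I would use the monotonicity of $v_\mu(\tau_0,\tau_1)$ in $\tau_0$ (clear from \eqref{eq:indif_th}: in the full-market case $v_\mu$ is increasing in $\tau_0$, in the non-full case likewise), so $v_{\mu_k}(\tau_0,\tau_1)\le \ubar\alpha+\epsilon$ gives a lower bound on $\tau_0$; equivalently, solving \eqref{eq:indif_th} for $\tau_0$ shows that the set of prices $\tau_0$ compatible with a threshold $\le\ubar\alpha+\epsilon$ is an interval $[\rho_k(\epsilon),1]$ (intersected with $[0,1]$), and I would compute $\liminf_k \rho_k(\epsilon)$. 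Third — and this is where $\mu_k\to1$ does the work — I would show $\rho_k(\epsilon)\to 1$ as $k\to\infty$ followed by $\epsilon\to0$: since $\ubar\alpha>0$ is fixed while $\mu_k\to1$, the threshold $\ubar\alpha_{\mu_k}$ defined by \eqref{eq:def_bounds} tends to $1$, so to keep the induced private-belief threshold near $\ubar\alpha$ (a value near $1$ in belief terms after the prior shift), the price $\tau_0$ must itself be pushed toward $1$. Concretely, in the full-market branch $v_\mu(\tau_0,\tau_1)\le\ubar\alpha+\epsilon$ with $\mu=\mu_k\to1$ forces $1+\tau_0-\tau_1$ close to $2$, i.e. $\tau_0$ close to $1$ (and $\tau_1$ close to $0$); in the non-full branch $\frac{(1-\mu)\tau_0}{\mu-(2\mu-1)\tau_0}\le\ubar\alpha+\epsilon$ is automatically satisfied for all $\tau_0$ when $\mu$ is near $1$, but then Observation \ref{lem:prices} gives $\phi_0([2\lBound[\alpha][\mu_k]-1,1])=1$ with $2\lBound[\alpha][\mu_k]-1\to1$, which directly yields $\phi_0^k\to1$. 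Finally I would combine the two branches: on the high-probability event the price is at least $\min\{\rho_k(\epsilon),\,2\ubar\alpha_{\mu_k}-1\}$, and both quantities tend to $1$; letting $\epsilon\to0$ after $k\to\infty$ gives $\phi_0^k\Rightarrow\delta_1$.

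**The main obstacle.** The genuine subtlety is handling the dichotomy between the full-market and non-full-market scenarios uniformly, since a priori a SPE of $\Gamma(\mu_k)$ could mix across both regimes, and the formula for $v_\mu$ — hence the map from threshold to price — is different in each. I would deal with this by arguing the price lower bound holds in each branch separately and then taking the minimum; the non-full branch is actually the easy case because Observation \ref{lem:prices} already pins $\tau_0 \ge 2\lBound[\alpha][\mu_k]-1 \to 1$ regardless of $\tau_1$. A second, more routine point to be careful about is what ``$\phi_0^k\to1$'' means: I would state it as weak convergence of the price distribution to the point mass at $1$ (equivalently, $\mathrm{P}_{\mu_k,\phi^k}(\tau_0 \ge 1-\eta)\to1$ for every $\eta>0$), which is exactly what the preceding estimates deliver and what is needed for the subsequent arguments in the paper.
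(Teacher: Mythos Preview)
Your approach is correct but substantially more elaborate than what the paper does. The paper's proof is one line: it cites Proposition \ref{prop: DE price} and Lemma \ref{lem:NDE higher price}, which (together with Observation \ref{lem:prices}) give that in \emph{any} SPE of $\Gamma(\mu)$ the support of $\phi_0$ is contained in $[2\ubar{\alpha}_\mu-1,1]$. Since $\ubar{\alpha}_{\mu_k}\to 1$ as $\mu_k\to 1$ by \eqref{eq:def_bounds}, the lower endpoint of this interval tends to $1$, and weak convergence $\phi_0^k\to\delta_1$ is immediate. Equivalently: Firm $0$'s equilibrium profit is at least $2\ubar{\alpha}_{\mu_k}-1$ (it can always deter), every price in the support earns exactly that profit, and profit at price $\tau_0$ is at most $\tau_0$; hence every price in the support is at least $2\ubar{\alpha}_{\mu_k}-1$.

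You actually have this argument already---it is precisely what you invoke for the ``non-full branch'' via Observation \ref{lem:prices}---but you do not notice that it applies unconditionally, with no case split on market fullness needed. Your detour through Lemma \ref{cor: along the path} and the inversion of \eqref{eq:indif_th} is valid for the full-market branch, but it is extra work: the support bound from Observation \ref{lem:prices} already delivers the conclusion in both branches simultaneously, and there is no need to translate information about the threshold $v_\mu$ back into information about $\tau_0$.
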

Corollary \ref{cor:price0} follows from Proposition \ref{prop: DE price} and Lemma \ref{lem:NDE higher price}.

% \ref{cor: along the path} is stronger than Observation \ref{cor:gamma infty} as it shows that when the prior approaches $1$, the probability that the consumer will purchase product $0$ approaches $1$.
The following lemma provides an upper limit to the support of Firm $1$ in a non-deterrence equilibrium.
%Unlike Corollary \ref{cor:price0} it does not follow immediately from Observation \ref{cor:gamma infty} and Lemma \ref{lem:continue} since $(1,\tau)$ is an SPE price vector of $\Gamma(1)$ for any $\tau\in[0,1]$.
\begin{lemma}\label{cor:price1}
If $(\phi_0,\phi_1)$ is a non-deterrence equilibrium, then $\phi_0([2\ubar{\alpha}_{\mu}-1,\uBound])=1$ and $\phi_1([1-2\uBound,1-\ubar{\alpha}_{\mu}])=1$.
\end{lemma}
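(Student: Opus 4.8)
The plan is to obtain both inclusions by adjoining, to the lower bounds already supplied by Observation~\ref{lem:prices}, a matching upper bound on each firm's support. Observation~\ref{lem:prices} gives $\phi_0([2\ubar{\alpha}_{\mu}-1,1])=1$ and $\phi_1([1-2\uBound,1])=1$, so it suffices to show that in a non-deterrence equilibrium $\phi_0$ puts no mass on $(\uBound,1]$ and $\phi_1$ puts no mass on $(1-\ubar{\alpha}_{\mu},1]$.

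First I would record the elementary consequence of the fact that posteriors are confined to $[\ubar{\alpha}_{\mu},\uBound]$: by \eqref{eq:def_bounds} we have $p_\mu(s)\in[\ubar{\alpha}_{\mu},\uBound]$ with probability one. Hence, whatever price $\tau_1$ Firm~$1$ posts, if Firm~$0$ posts a price $\tau_0>\uBound$ then $p_\mu(s)-\tau_0<0$ for almost every signal, so by the consumer's optimality rule \eqref{eq:consumer_condition} the consumer strictly prefers opting out to purchasing from Firm~$0$ and never buys from Firm~$0$; consequently $\Pi_0(\tau_0,\tau_1)=0$ for every $\tau_1$, and therefore $\Pi_0(\tau_0,\phi_1)=0$. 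By the symmetric argument, since $1-p_\mu(s)\le 1-\ubar{\alpha}_{\mu}$ almost surely, any price $\tau_1>1-\ubar{\alpha}_{\mu}$ yields $\Pi_1(\phi_0,\tau_1)=0$.

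Next I would invoke the standard support characterisation of a mixed equilibrium: in the SPE $(\phi_0,\phi_1)$, $\phi_0$-almost every $\tau_0$ maximises $\Pi_0(\cdot,\phi_1)$ and hence attains the equilibrium value $\Pi_0(\phi_0,\phi_1)$; similarly $\phi_1$-almost every $\tau_1$ attains $\Pi_1(\phi_0,\phi_1)$. Because the equilibrium is a \emph{non-deterrence} equilibrium, Lemma~\ref{lem:NDE higher price} gives $\Pi_0(\phi_0,\phi_1)>0$ and $\Pi_1(\phi_0,\phi_1)>0$. Combining this with the previous paragraph, every $\tau_0>\uBound$ gives Firm~$0$ a payoff $0<\Pi_0(\phi_0,\phi_1)$ against $\phi_1$ and so cannot belong to the support of $\phi_0$; thus $\phi_0((\uBound,1])=0$, and together with Observation~\ref{lem:prices} this yields $\phi_0([2\ubar{\alpha}_{\mu}-1,\uBound])=1$. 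The same reasoning applied to Firm~$1$ gives $\phi_1((1-\ubar{\alpha}_{\mu},1])=0$ and hence $\phi_1([1-2\uBound,1-\ubar{\alpha}_{\mu}])=1$.

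There is no serious obstacle here; the one point that must be handled with care is the appeal to Lemma~\ref{lem:NDE higher price}, i.e.\ the positivity of each firm's equilibrium payoff. This is exactly where the hypothesis is used: in a deterrence equilibrium the deterred firm earns zero and could freely randomise over ``too high'' prices, so it is precisely non-deterrence that forces the supports down to $[2\ubar{\alpha}_{\mu}-1,\uBound]$ and $[1-2\uBound,1-\ubar{\alpha}_{\mu}]$ respectively. Everything else follows directly from \eqref{eq:def_bounds}, \eqref{eq:consumer_condition}, and Observation~\ref{lem:prices}.
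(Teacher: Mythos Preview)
Your proof is correct and follows essentially the same approach as the paper: show that any price above $\uBound$ (respectively $1-\ubar{\alpha}_{\mu}$) leads the consumer to opt out and hence yields zero profit, then use strict positivity of equilibrium payoffs in a non-deterrence equilibrium to rule such prices out of the support. If anything, your write-up is tidier: you correctly invoke Lemma~\ref{lem:NDE higher price} for $\Pi_i>0$ (the paper cites Proposition~\ref{prop: DE price}, which is really about deterrence prices), and you make the appeal to Observation~\ref{lem:prices} for the lower endpoints explicit.
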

\begin{proof}
It follows from Proposition \ref{prop: DE price} that $\Pi_0(\phi_0,\phi_1)>0.$
Note further that for any price $\tau_0>\uBound$ the consumer would be strictly better off choosing $e$ than buying from Firm $0$.
Therefore, we must have $\phi_0((\uBound,1])=0$ for otherwise a profitable deviation could have been constructed for Firm $0$.
\end{proof}

Finally, we present a proof of the third part of Theorem \ref{thm:SSG_results}, which considers the case of non-vanishing likelihood. In such a case, whenever the prior is sufficiently biased in favor of one firm, there is a unique equilibrium in which the a priori unfavorable firm is deterred.

\begin{proposition}\label{prop:Unique DE}
If $g_0(\ubar{\alpha})>0$, then $\exists \bar\publicBelief\in(0,1)$ such that any SPE of $\Gamma(\publicBelief)$ is a deterrence equilibrium for all $\publicBelief>\bar\publicBelief.$
 Symmetrically, if $g_1(\bar{\alpha})>0$, then $\exists \ubar\publicBelief\in(0,1)$ such that
any SPE of $\Gamma(\publicBelief)$ is a deterrence equilibrium
for all $\publicBelief<\ubar\publicBelief.$
\end{proposition}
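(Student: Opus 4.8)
The plan is to argue by contradiction using the limiting/continuity machinery already developed in Lemmas \ref{cor: along the path} and \ref{cor:price0}. Suppose the claim fails: then there is a sequence of priors $\mu_k\to 1$ and, for each $k$, a \emph{non-deterrence} SPE $\phi^k=(\phi^k_0,\phi^k_1,\sigma_k)$ of $\Gamma(\mu_k)$. The key observation is that in a non-deterrence equilibrium Firm $1$ must earn a strictly positive profit (Lemma \ref{lem:NDE higher price}), so there is a positive probability that the consumer buys from Firm $1$; equivalently, $v_{\mu_k}(\tau_0,\tau_1)$ exceeds the private-belief threshold on a non-negligible set of signals. The contradiction will come from showing that, when $\mu_k$ is close enough to $1$ and $g_0(\ubar\alpha)>0$, Firm $0$ has a strictly profitable deviation away from whatever its equilibrium strategy prescribes — namely, to the deterrence price $2\ubar\alpha_{\mu_k}-1$, or to a price slightly below it.

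First I would pin down the equilibrium behavior of both firms in the limit. By Lemma \ref{cor: along the path}, for every $\epsilon>0$ the probability that $v_{\mu_k}(\tau_0,\tau_1)\in[\ubar\alpha,\ubar\alpha+\epsilon]$ tends to $1$; by Corollary \ref{cor:price0}, the equilibrium price of Firm $0$ tends to $1$; and by Lemma \ref{cor:price1}, Firm $1$'s price is supported in $[1-2\uBound,1-\ubar\alpha_{\mu_k}]$, which shrinks toward $0$ as $\mu_k\to1$ because $\uBound\to 1$ and $\ubar\alpha_{\mu_k}\to 1$ as well. So in the limit Firm $0$ charges a price near $1$, Firm $1$ charges a price near $0$, and the threshold is squeezed toward $\ubar\alpha$. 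The next step is the quantitative heart of the argument: estimate Firm $0$'s profit along this sequence. Because the threshold $v_{\mu_k}$ is concentrated near $\ubar\alpha$, the probability that the consumer does \emph{not} buy from Firm $0$ is, up to vanishing terms, governed by $G_\mu(v_{\mu_k})\approx G_\mu(\ubar\alpha) + g_\mu(\ubar\alpha)\cdot(v_{\mu_k}-\ubar\alpha)$, and since $g_0(\ubar\alpha)>0$ (and hence $g_\mu(\ubar\alpha)\ge\mu_k g_0(\ubar\alpha)>0$ for $\mu_k$ bounded away from $0$), any non-negligible excess of $v_{\mu_k}$ over $\ubar\alpha$ costs Firm $0$ a first-order loss in sales. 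Comparing this to the clean deterrence payoff $2\ubar\alpha_{\mu_k}-1$, which tends to $1$, one sees that non-deterrence cannot be optimal: the marginal gain from a higher price is dominated by the marginal loss of the nonconformist mass, precisely the opposite of what happens under vanishing likelihood in Proposition \ref{prop:no DE when g=0}.

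Concretely, I would mimic the derivative computation in \eqref{eq:direv zero1}: evaluate $\partial_{\tau_0}\Pi_0(\tau_0,0)$ at $\tau_0 = 2\ubar\alpha_{\mu_k}-1$ and show that for $\mu_k$ close to $1$ this derivative is strictly \emph{negative} (not $1$ as in the vanishing-likelihood case), because the term $(2\lBound-1)\big(\partial_{\tau_0}v_\mu\big)\big(\mu g_0(\ubar\alpha)+(1-\mu)g_1(\ubar\alpha)\big)$ blows up: $\lBound\to 1$, so $2\lBound-1\to 1$, while one checks from \eqref{eq:llr} and \eqref{eq:v_diff_gen1} that $\partial_{\tau_0}v_\mu\to\infty$ as the belief concentrates. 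Hence for $\mu_k$ large Firm $0$ strictly prefers to lower its price toward the deterrence price rather than keep any mass above it, so in equilibrium $\phi^k_0$ must in fact be the Dirac mass at $2\ubar\alpha_{\mu_k}-1$ — which, by Lemma \ref{lem:NDE higher price}, forces the equilibrium to be a deterrence equilibrium, contradicting our assumption. The symmetric statement for $g_1(\bar\alpha)>0$ and $\mu<\ubar\mu$ follows by exchanging the roles of the two firms and of the two endpoints $\ubar\alpha,\bar\alpha$.

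The main obstacle I anticipate is making the "first-order loss dominates" comparison fully rigorous along the \emph{mixed}-strategy sequence $\phi^k$: one must control $\Pi_0(\phi^k_0,\phi^k_1)$ uniformly, using that $v_{\mu_k}$ is concentrated near $\ubar\alpha$ in probability (Lemma \ref{cor: along the path}) together with monotonicity of $\Pi_0$ in $\tau_1$ and the first-order stochastic dominance $G_0\succeq G_1$ (Lemma \ref{lem:aux}), and then pass to the limit to land inside the hypotheses of Proposition \ref{prop: DE price}. Handling the edge effects where $\partial_{\tau_0}v_\mu$ becomes unbounded as $\mu_k\to1$ — i.e. being careful whether the relevant market regime is "full" or "not full" in \eqref{eq:indif_th}–\eqref{eq:treshold_function} — is the delicate bookkeeping step, but it is exactly parallel to the computation already performed in the proof of Proposition \ref{prop:no DE when g=0}, only with the inequality reversed.
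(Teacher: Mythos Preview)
Your overall architecture matches the paper's: argue by contradiction along a sequence $\mu_k\to 1$ of non-deterrence equilibria, use Lemmas~\ref{cor: along the path}, \ref{cor:price1} and Corollary~\ref{cor:price0} to pin down the limiting behavior, and exploit that $\partial_{\tau_0} v_\mu\to\infty$ together with $g_0(\ubar\alpha)>0$ to produce a profitable deviation for Firm~$0$. The mechanism you identify is exactly the right one.

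The gap is in \emph{where} you evaluate the derivative. You propose to compute $\partial_{\tau_0}\Pi_0(\tau_0,0)$ at the deterrence price $\tau_0=2\ubar\alpha_{\mu_k}-1$ and find it negative. But that only tells you that, against $\tau_1=0$, raising the price infinitesimally \emph{above} the deterrence price hurts Firm~$0$; it says nothing about the sign of the derivative at prices actually in the support of $\phi_0^k$, which by Lemma~\ref{lem:NDE higher price} lies strictly above $2\ubar\alpha_{\mu_k}-1$. Without a global concavity argument (which is not available), a negative derivative at the deterrence price cannot rule out that $\Pi_0(\cdot,\phi_1^k)$ dips and then rises again, so your step ``hence $\phi_0^k$ must be the Dirac mass at $2\ubar\alpha_{\mu_k}-1$'' is not justified.

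The paper closes this gap by evaluating the derivative at a different point: take $\tau_0^k$ to be the \emph{largest} price in the support of $\phi_0^k$ and compute $\partial_{\tau_0}\Pi_0(\tau_0,\phi_1^k)$ at $\tau_0=\tau_0^k$, against the actual equilibrium mixture $\phi_1^k$. The same blow-up of $\partial_{\tau_0}v_\mu$ (now driven by $(\tau_0^k-\tau_1)^2\to 1$ since $\tau_0^k\to 1$ and $\phi_1^k\to 0$) makes this derivative strictly negative for large $k$, so $\tau_0^k-\varepsilon$ strictly beats $\tau_0^k$. Because in equilibrium Firm~$0$ is indifferent across its support, $\Pi_0(\tau_0^k,\phi_1^k)=\Pi_0(\phi_0^k,\phi_1^k)$, and this yields the contradiction directly. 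Replacing your evaluation point with $\tau_0^k$ and invoking this indifference condition is the missing step; the rest of your outline is correct.
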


\begin{proof}
We prove the first part of the proposition. The proof of the second part follows from symmetric considerations.

Assume by way of contradiction that there exists a sequence of priors $\{\mu_k\}$ such that $\lim_{k\rightarrow\infty}\mu_k=1$ and a corresponding sequence of SPEs,  $\{(\phi_{0}^k,\phi_{1}^k)\}_{k=1}^\infty$, such that $\phi^k=(\phi_{0}^k,\phi_{1}^k)$ is a non-deterrence equilibrium of $\Gamma(\mu_k)$ for all values of $k$.

%It follows from Lemma ?? that $\phi_1([0,1-\ubar{\alpha}_{\mu_k}])=1$.
%To see this note that if $\tau^k_1>1-\ubar{\alpha}_{\mu_k}$, then for all signals the consumer extracts a negative payoff from $a=1$ and can profitably deviate to the action $e$ (exit and not buy from any firm). Hence, no consumer plays $a=1$. That is, if  $\tau^k_1>1-\ubar{\alpha}_{\mu_k}$, then  $(\tau_{0}^k,\tau_{1}^k)$ must be a deterrence price vector and a contradiction is reached.
%Let $\Pi_0(\phi_{0}^k,\phi_{1}^k)$ be the expected equilibrium payoff to Firm $0$.
Note that it must be the case that for almost every realized price $\tau_0$ (with respect to $\phi_0^k$) of Firm $0$
$$\Pi_0(\tau_{0},\phi_{1}^k)=\Pi_0(\phi_{0}^k,\phi_{1}^k)$$
(otherwise Firm $0$ would have a profitable deviation).

Let $\tau_0^k$ be the highest price in the support of $\phi^k_0$. It follows from the above that
\begin{equation}\label{eq:indifhh}
\Pi_0(\tau_0^k,\phi_{1}^k)=\Pi_0(\phi_{0}^k,\phi_{1}^k).
\end{equation}
%We will next show that for all sufficiently large $k$ it must be the case that $\tau^k_0=\tau^{d,k}_0$.
%We claim that this concludes the proof of the proposition. To see why note that if $(\tau^0_k,\tau_1^k)=(\tau^{d,k}_0,\tau_1^k)$, then by the above $(\tau^0_k,\tau^1_k)$ is a deterrence equilibrium which stands in contradiction to our assumption.
Since $(\phi_0^k,\phi_1^k)$ is a non-deterrence equilibrium, Lemma \ref{lem:NDE higher price} implies that $\phi_0((2\ubar{\alpha}_{\mu_k}-1,1])>0$ for all $k\geq 1$.
%The fact that $\varepsilon_k$ must be positive and not negative follows since  in SPE when $\tau_1^k\le 1-\ubar{\alpha}_{\mu_k},$ the lowest price that Firm $0$ asks is $\tau^{d,k}_0$ as this is the maximal price that guarantees that the consumer will buy from Firm $0$ with probability one.

%By taking subsequence we can assume that  $(\tau^0_k,\tau^1_k)=(\tau^{d,k}_0+\varepsilon_k,\tau^k_1)$ for some $\varepsilon_k>0$ and for every $k$.
We next show that for all sufficiently large $k$ there exists $\varepsilon>0$ such that $\Pi_0(\tau_0^k-\varepsilon,\phi_1^k)-\Pi_0(\tau_0^k,\phi_1^k)>0$.
%We use the shorthand $v(\tau_d^k,\tau_1)$ for $v_{\mu_k}(\tau_d^k,\tau_1)$ and similarly we write $v(\tau_0^k,\tau_1)$ for
%$v_{\mu_k}(\tau_0^k,\tau_1)$.

We claim first that $v_{\mu_k}(\tau_0^k,\tau_1)>\ubar{\alpha}$ for almost every realized $\tau_1$ (with respect to $\phi_1^k$). Assume that there exists a measurable subset $T\subset[0,1]$
with $\phi_1^k(T)>0$ such that $v_{\mu_k}(\tau_0^k,\tau_1)=\ubar{\alpha}$ for all $\tau_0^k.$
Since $v_{\mu_k}(\tau_0,\tau_1)$ is increasing in $\tau_0$ for every fixed $\tau_1,$ it follows from the definition
of $\tau_0^k$ that $v_{\mu_k}(\tau^k_0,\tau_1)=\ubar{\alpha}$ for $\phi_0^k$ almost all realized prices $\tau^k_0$ of Firm $0$.  Therefore, we must have that the profit to Firm $1,$ conditional on $\tau_1\in T,$  is zero. By Lemma \ref{lem:NDE higher price}, Firm $1$'s expected payoff under $\phi^k$ is strictly positive, and hence we must have a profitable deviation for Firm $1$.

Using equation \eqref{eq:zero_profit} we can write
\begin{equation}\label{eq:npo}
\begin{split}
&\Diff{\Pi_{0}(\tau_0,\phi^k_1)}{\price_{0}}|_{\price_0=\tau_0^k}=\\
&\int\big(\mu_k(1-G_{0}(v_{\mu_k}(\tau_0,\tau_1)+(1-\mu_k)(1-G_{1}(v_{\mu_k}(\tau_0,\tau_1))\big)d\phi^k_1(\tau_1)\\
&-\tau_0^k\left(\Diff{v_{\mu_k}(\tau_0,\tau_1)}{\price_{0}}|_{\tau_0^k}\right)
\left(\publicBelief_k g_{0}(v_{\mu_k}(\tau_0^k,\tau_1))+(1-\publicBelief)g_{1}(v_{\mu_k}(\tau_0^k,\tau_1))\right)d\phi^k_1(\tau_1).
\end{split}
\end{equation}
Since $\lim_{k\rightarrow\infty}\mu_k=1$, it follows from Lemma \ref{cor: along the path} that $$\lim_{k\rightarrow\infty} \mathrm{P}_{\mu_k,\phi_1^k}(v_{\mu_k}(\tau_k,\tau_1^k)-\ubar{\alpha}>\delta)=0,$$
for any $\delta>0$.

Since the signal structure $(F_0,F_1,S)$ does not exhibit the  vanishing likelihood property, it follows that $g_1(\ubar{\alpha})>0$ and, by Lemma \ref{lem:g_0=0_in_VL}, that $g_0(\ubar{\alpha})>0.$ Therefore, for some $\beta>0,$
$$\lim_{k\rightarrow\infty}\mathrm{P}_{\mu_k,\phi_1^k}(\mu_k g_{0}(v_{\mu_k}(\tau_0^k,\tau_1))+(1-\publicBelief_k)g_{1}(v_{\mu_k}(\tau_0^k,\tau_1))>\beta)=1.$$
 We further note that  $\phi_1^k([0,1-\ubar{\alpha}_{\mu_k}])=1$  by Lemma \ref{cor:price1}.

 Since $\lim_{k\rightarrow\infty}\mu_k=1$ we have that $\lim_{k\rightarrow\infty}\phi^k_1=0$.
Moreover, Corollary \ref{cor:price0} implies that $\lim_{k\rightarrow\infty}\tau^k_0=\lim_{k\rightarrow\infty}\phi_0^k=1.$ Therefore, $\lim_{k\rightarrow\infty}(\tau_{0}^k-\tau_{1}^k)^2=1$.
 Hence equation \eqref{eq:treshold_function} and equation \eqref{eq:v_diff_gen1} of Lemma \ref{lem:llr} imply that
\begin{equation}\label{eq:diffginfity}
\lim_{k\rightarrow\infty}\left(\Diff{v_{\mu_k}(\tau_0,\tau^k_1)}{\price_{0}}|_{\tau_0^k}\right)=\infty,
\end{equation}
for any choice of $\tau^k_1$ in the support of $\phi^k_1$.
Therefore, equation \eqref{eq:npo} implies that $\Diff{\Pi_{0}(\tau_0,\tau_1)}{\price_{0}}|_{\price_0=\tau_0^k}<0$ for all sufficiently large values of $k$.
Hence, in particular, for all sufficiently large values of $k$ there exists a sufficiently small $\varepsilon>0$ such that
$$\Pi_{0}(\tau^k_0-\varepsilon,\phi^k_1)-\Pi_{0}(\tau^k_0,\phi^k_1)>0.$$
Therefore equation \eqref{eq:indifhh} implies that for all sufficiently large $k,$ Firm $0$ has a profitable deviation from $\phi^k_0$. This stands in contradiction
to the assumption that $(\phi^k_0,\phi^k_1)$ is an equilibrium strategy.

\end{proof}

Theorem \ref{thm:SSG_results} consolidates Corollary \ref{cor:unbounded NDE}, Proposition \ref{prop:no DE when g=0}, and Proposition \ref{prop:Unique DE}.

%\section{Additional Proofs for the Main Model}

\section{Proofs for the Farsighted Firms}\label{sec:farsignted_proofs}

We state some lemmata that will prove useful for obtaining the results for farsighted firms.

\begin{lemma}\label{lem:td_convex}
	$2\ubar{\alpha}_\mu-1$ is a strictly convex function of $\mu$ with a bounded derivative on $[0,1]$.
\end{lemma}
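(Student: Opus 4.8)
The plan is to reduce the claim to elementary calculus on the explicit linear-fractional formula for $\ubar{\alpha}_\mu$ recorded in \eqref{eq:def_bounds}. Recall that signals are bounded, so $\ubar{\alpha}\in(0,1)$, and that
$$\ubar{\alpha}_\mu=\frac{\mu\ubar{\alpha}}{\mu\ubar{\alpha}+(1-\mu)(1-\ubar{\alpha})}.$$
Writing $D(\mu):=\mu\ubar{\alpha}+(1-\mu)(1-\ubar{\alpha})=(1-\ubar{\alpha})+\mu(2\ubar{\alpha}-1)$ for the denominator, a one-line algebraic manipulation (using $\mu\ubar{\alpha}-(1-\mu)(1-\ubar{\alpha})=\mu-(1-\ubar{\alpha})$) gives
$$2\ubar{\alpha}_\mu-1=\frac{\mu-(1-\ubar{\alpha})}{D(\mu)}.$$
On $[0,1]$ the quantity $D(\mu)$ is a convex combination of the two strictly positive numbers $\ubar{\alpha}$ and $1-\ubar{\alpha}$, hence $D(\mu)\ge\min\{\ubar{\alpha},\,1-\ubar{\alpha}\}>0$ throughout $[0,1]$; in particular $2\ubar{\alpha}_\mu-1$ is smooth on $[0,1]$.

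Next I would differentiate. Applying the quotient rule to $\frac{\mu-(1-\ubar{\alpha})}{(1-\ubar{\alpha})+\mu(2\ubar{\alpha}-1)}$, the $\mu$-linear terms in the numerator cancel and one is left with the clean expression
$$\frac{d}{d\mu}\left(2\ubar{\alpha}_\mu-1\right)=\frac{2\ubar{\alpha}(1-\ubar{\alpha})}{D(\mu)^2}.$$
Boundedness of the derivative on $[0,1]$ is then immediate: the numerator is a fixed positive constant and $D(\mu)^2\ge\min\{\ubar{\alpha},\,1-\ubar{\alpha}\}^2>0$, so $0<\frac{d}{d\mu}(2\ubar{\alpha}_\mu-1)\le 2\ubar{\alpha}(1-\ubar{\alpha})/\min\{\ubar{\alpha},\,1-\ubar{\alpha}\}^2<\infty$.

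For strict convexity it suffices to show that $\mu\mapsto 2\ubar{\alpha}(1-\ubar{\alpha})/D(\mu)^2$ is strictly increasing; since this quantity stays positive and $D(\mu)$ is affine in $\mu$, this holds precisely when $D(\mu)$ is strictly decreasing, i.e.\ when $2\ubar{\alpha}-1<0$. This is the only point requiring a word of care, and it is exactly where boundedness/informativeness of the signal structure enters: if one had $\ubar{\alpha}\ge\frac{1}{2}$, then $p(s)=\frac{f_0(s)}{f_0(s)+f_1(s)}\ge\frac{1}{2}$ would hold $\frac{F_0+F_1}{2}$-almost surely, so $f_0\ge f_1$ a.s., which together with $\int f_0\,d\frac{F_0+F_1}{2}=\int f_1\,d\frac{F_0+F_1}{2}=1$ forces $f_0=f_1$ a.s., i.e.\ $F_0=F_1$, a non-informative signal structure that is excluded here. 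Hence $\ubar{\alpha}<\frac{1}{2}$, so $D(\mu)=(1-\ubar{\alpha})+\mu(2\ubar{\alpha}-1)$ is strictly decreasing and stays positive on $[0,1]$, whence $D(\mu)^2$ is strictly decreasing and $2\ubar{\alpha}(1-\ubar{\alpha})/D(\mu)^2$ is strictly increasing; therefore $2\ubar{\alpha}_\mu-1$ is strictly convex on $[0,1]$. I do not anticipate any genuine obstacle: the statement is a routine consequence of the linear-fractional form, and the sign check $\ubar{\alpha}<\frac{1}{2}$ (which distinguishes convexity from concavity) is the only subtlety.
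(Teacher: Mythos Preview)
Your proof is correct and follows essentially the same route as the paper: write out the explicit linear-fractional formula, compute the first derivative $\frac{2\ubar{\alpha}(1-\ubar{\alpha})}{D(\mu)^2}$, read off boundedness from the positive lower bound on $D(\mu)$, and deduce strict convexity from the sign of $2\ubar{\alpha}-1$. The only minor differences are cosmetic: the paper computes the second derivative explicitly and checks its sign, whereas you argue that the first derivative is strictly increasing; and the paper simply asserts ``Recall that $\ubar{\alpha}<0.5$,'' while you supply the informativeness argument ($f_0\ge f_1$ a.s.\ with equal integrals forces $F_0=F_1$), which is a nice addition.
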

\begin{proof}
	Let
	\begin{equation}
	h(\mu)=2\ubar{\alpha}_\mu-1=2\frac{\mu\ubar{\alpha}}{\mu\ubar{\alpha}+(1-\mu)(1-\ubar{\alpha})}-1.
	\end{equation}
The first derivative of $h(\mu)$ is  $h'(\mu)=\frac{2\ubar{\alpha}(1-\ubar{\alpha})}{[\mu(2\ubar{\alpha}-1)+(1-\mu)]^2},$
which is bounded by $\frac{2(1-\ubar{\alpha})}{\ubar{\alpha}}$. This establishes the first claim of the lemma.

The second derivative of $h(\mu)$ is
	\begin{equation}\label{eq:2d}
	\frac{d^2 h(\mu)}{d \mu^2} =\frac{4(1-\ubar{\alpha})\ubar{\alpha}(2\ubar{\alpha}-1)}{(\mu(1-2\ubar{\alpha})-(1-\ubar{\alpha}))^3}.
	\end{equation}
Recall that $\ubar{\alpha}<0.5$ and so the enumerator in equation \eqref{eq:2d} is negative. In addition,
as $\mu\leq1$ we conclude that $1-\ubar{\alpha}>1-2\ubar{\alpha}\geq\mu(1-2\ubar{\alpha})$ and so  the denominator of \eqref{eq:2d} is also negative. Thus,  $\frac{d^2 h(\mu)}{d \mu^2} >0$ and so $h(\mu)$ must be strictly convex.
\end{proof}

\subsection{Proof of Theorem \ref{prop:non-myopic-VL-NDE}}
Assume by way contradiction that signals exhibit vanishing likelihood and asymptotic learning fails in some equilibrium $(\bar{\phi}_0,\bar{\phi}_1,\bar{\sigma})$.
This entails that for some constant $0<\eta<\frac{1}{2}$ one of the two following events, $\mu_\infty\in [\frac{1}{2},1-\eta]$ or $\mu_\infty\in [\eta,\frac{1}{2}]$, has positive probability.  Let us assume, without loss of generality, that it is the former event. Let $r>0$ denote its probability. More formally, if $H'\subseteq H_\infty$ is the set of all infinite histories for which
$\lim_{t\rightarrow\infty} \mu_t=\mu_\infty\in [\frac{1}{2},1-\eta]$ then $\mathbf{P}_{(\bar{\phi}_0,\bar{\phi}_1,\bar{\sigma})}(H')= r >0$.
Further note that for $\mathbf{P}_{(\bar{\phi}_0,\bar{\phi}_1,\bar{\sigma})}$-almost every infinite history $h\in H'$ it must hold that
$\lim_{t\rightarrow\infty}\Pi^\delta_1(\bar{\phi}_0,\bar{\phi}_1,\bar{\sigma}|h_t)=0$ ($h_t\in H_t$ is the truncation of $h$).

Using similar arguments to those invoked in the proof of Proposition \ref{prop: DE price} we conclude that for almost every history
$h\in H'$ it holds that
%$\lim_{t\rightarrow\infty}\bar{\phi}_0(h_t)=\lim_{t\rightarrow\infty}2\ubar{\alpha}_{\mu_t}-1$.
$\lim_{t\rightarrow\infty}|\Pi^\delta_0(\bar{\phi}_0,\bar{\phi}_1,\bar{\sigma}|h_t)-2\ubar{\alpha}_{\mu_t}-1|=0$.

For every $\epsilon>0$ let us denote by $H_\epsilon\subseteq H$ the set of all finite histories for which
$$|\Pi^\delta_0(\bar{\phi}_0,\bar{\phi}_1,\bar{\sigma}|h_t)-2\ubar{\alpha}_{\mu_t}-1|\leq\epsilon.$$
 It follows from the above that for every $\epsilon>0,$
 $$\mathbf{P}_{(\bar{\phi}_0,\bar{\phi}_1,\bar{\sigma})}(\exists t \text{ s.t. } h_t\in H_\epsilon\text{ and }\mu_t\leq 1-\frac{\eta}{2})\geq r.$$

As we assume vanishing likelihood, we can invoke Proposition \ref{prop:no DE when g=0} and conclude that there exists sufficiently small $\epsilon_0>0$ such that whenever $\mu\leq 1-\frac{\eta}{2},$ Firm $0$ has some price $\tau_0\in [0,1]$ that guarantees the following stage payoff:
\begin{equation}\label{eq:pd0}
\Pi_{0}(\tau_0,\phi_1,\publicBelief)>2\ubar{\alpha}_{\mu}-1+\epsilon_0,
\end{equation}
for every strategy $\phi_1\in\Delta([0,1])$ of Firm $1$.

We define the strategy $\hat{\phi}_0$ for Firm $0$ as follows. Let $h$ be some finite history.
If $h \in H_{\frac{\epsilon_0}{2}}$ then set $\hat{\phi}_0(h_t) = \tau_0$, where $\tau_0$ is the price that satisfies the inequality in equation \eqref{eq:pd0}.
If  $h \not \in H_{\frac{\epsilon_0}{2}}$ but has some prefix $h_t\in H_{\frac{\epsilon_0}{2}}$ then set the price at
 $2\ubar{\alpha}_{\mu_{t+1}}-1$ (where $\mu_{t+1}$ is the public belief at stage $t+1$). Note that this implies that from stage $t+1$ onward Firm $1$ is deterred and the public belief remains fixed.
 Finally, whenever no prefix of $h$ is in $H_{\frac{\epsilon_0}{2}}$ let the price be that which was chosen according to the original strategy $\bar{\phi}_0$.

The continuation payoff of Firm $0$ for any finite history in $H_{\frac{\epsilon_0}{2}}$ is $$(1-\delta)[2\ubar{\alpha}_{\mu_t}-1+\epsilon_0]+\delta E_{(\hat{\phi}_0,\bar{\phi}_1,\bar{\sigma})}[2\ubar{\alpha}_{\mu_{t+1}}-1|h_t].$$
That is, the current period deviation of Firm $0$ yields, by equation \eqref{eq:pd0}, an expected payoff of at least $2\ubar{\alpha}_{\mu_t}-1+\epsilon_0$. Thereafter the value of $\mu_{t+1}$ is realized  and in all subsequent periods $t'>t$ Firm $0$ receives a constant payoff of $2\ubar{\alpha}_{\mu_{t+1}}-1$. As the function $2\ubar{\alpha}_{\mu}-1$ is convex (by Lemma \ref{lem:td_convex}) this continuation payoff is guaranteed to satisfy the following inequality:
$$E_{(\hat{\phi}_0,\bar{\phi}_1),\bar{\sigma}}[2\ubar{\alpha}_{\mu_{t+1}}-1|h_t]\geq 2\ubar{\alpha}_{\mu_{t}}-1.$$
Comparing this with the continuation payoff from the original strategy implies that the deviation yields a profit that is at least $(1-\delta)\frac{\epsilon_0}{2}$. This stands in contradiction to the fact that $(\bar{\phi}_0,\bar{\phi}_1,\bar{\sigma})$ is a Bayesian Nash equilibrium.

\subsection{Proof of Theorem \ref{prop:non-myopic-lmpe-VL-NDE}.}

Assume that the signal structure does not satisfy the vanishing likelihood property and so either $g_0(\ubar{\alpha})>0$ or $g_0(\bar{\alpha})>0$. Let us assume without loss of generality that $g_0(\ubar{\alpha})>0$.
We first prove the following lemma.
\begin{lemma}\label{lem:threshold}
For every discount factor $\delta\in(0,1)$ there exists $\mu'>0$ such that if $\mu_0\geq\mu'$, then the following strategy is a Bayesian Nash equilibrium strategy of the repeated game with initial prior $\mu_0$ and  discount factor $\delta$:
at every history $h_t$ Firm $0$ asks a price of $2\ubar{\alpha}_{\mu_t}-1$ and Firm $1$ asks a price $0$.
\end{lemma}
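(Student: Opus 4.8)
The plan is to check that the profile in which Firm $0$ posts $2\ubar{\alpha}_{\mu_t}-1$ at every history, Firm $1$ posts $0$, and each consumer plays the (almost surely unique) myopic best reply \eqref{eq:consumer_condition} satisfies Definition~\ref{def:equilibrium} once $\mu'$ is close enough to $1$. First, provided $\mu'\geq 1-\ubar{\alpha}$ we have $\ubar{\alpha}_{\mu_0}\geq\tfrac12$, so $2\ubar{\alpha}_{\mu_0}-1\in[0,1]$ is feasible; since $p_{\mu_0}(s)\geq\ubar{\alpha}_{\mu_0}$ almost surely by \eqref{eq:def_bounds}, the consumer's surplus from Firm $0$ is $p_{\mu_0}(s)-(2\ubar{\alpha}_{\mu_0}-1)\geq 1-\ubar{\alpha}_{\mu_0}>0$ almost surely, so (using the tie-breaking in \eqref{eq:consumer_condition}) the consumer buys from Firm $0$ with probability one, the public belief stays equal to $\mu_0$ forever, and Firm $0$'s normalized payoff is $\rho(\mu_0):=2\ubar{\alpha}_{\mu_0}-1$ while Firm $1$'s is $0$. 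The consumer condition of Definition~\ref{def:equilibrium} holds by construction.

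Firm $1$'s constraint is immediate. Against Firm $0$'s price $2\ubar{\alpha}_{\mu_t}-1$ every consumer strictly prefers Firm $0$ to exiting (the surplus computation above), and prefers Firm $1$ only if $\tau_1\leq 2(\ubar{\alpha}_{\mu_t}-p_{\mu_t}(s))\leq 0$; hence at any positive price Firm $1$ sells with probability zero, its stage payoff is $0$ regardless of its price, and — because the consumer still buys from Firm $0$ and never exits, so the belief does not move — its continuation payoff is $0$ in every subgame. So no deviation by Firm $1$ is profitable.

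The substance is Firm $0$'s constraint, for which I would bound Firm $0$'s best-reply value $V(\mu)$ against Firm $1$ (fixed at $0$) and the consumers. Under any strategy of Firm $0$ the public belief is a bounded martingale, and a price $\tau_0$ induces the threshold $v=v_\mu(\tau_0,0)$ of \eqref{eq:indif_th}: with probability $1-G_\mu(v)$ the consumer buys from Firm $0$, sending the belief to some $\mu^{(0)}\geq\mu$, and with probability $G_\mu(v)$ to Firm $1$, sending it to $\mu^{(1)}\leq\mu$. The engine of the proof is that $\rho(\mu)=2\ubar{\alpha}_\mu-1$ is a super-solution of Firm $0$'s Bellman operator near $\mu=1$, i.e.\ for $\mu$ close enough to $1$, $(1-\delta)\Pi_0(\tau_0,0,\mu)+\delta\,\mathbf{E}_{\mu,\tau_0}[\rho(\mu_{t+1})]\leq\rho(\mu)$ for every $\tau_0\in[0,1]$. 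To see this, note first that by $g_0(\ubar{\alpha})>0$ (hence $g_1(\ubar{\alpha})>0$ by Lemma~\ref{lem:g_0=0_in_VL}) and the derivative computation behind Proposition~\ref{prop:Unique DE} (cf.\ \eqref{eq:direv zero1}), the deterrence price $2\ubar{\alpha}_\mu-1$ is Firm $0$'s myopic best reply against $\tau_1=0$ once $\mu$ is close to $1$; a short estimate, splitting according to whether $v-\ubar{\alpha}$ is small or bounded away from $0$, gives a stage loss $\rho(\mu)-\Pi_0(\tau_0,0,\mu)\gtrsim (v-\ubar{\alpha})$ near $\ubar{\alpha}$ and $\gtrsim 1-\rho(\mu)$ for $v$ bounded away. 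Second, since after one period the belief moves by $O(1-\mu)$, and $\rho$ is convex with bounded derivative (Lemma~\ref{lem:td_convex}), the martingale property gives $0\leq\mathbf{E}_{\mu,\tau_0}[\rho(\mu_{t+1})]-\rho(\mu)\lesssim (1-\mu)(v-\ubar{\alpha})$ near $\ubar{\alpha}$ and $\lesssim 1-\mu$ in general, so the continuation gain is dominated by the stage loss once $\mu$ is close to $1$. An optional-stopping argument applied to $\delta^{t-1}\rho(\mu_t)+(1-\delta)\sum_{s<t}\delta^{s-1}\Pi_0(\tau_0^s,0,\mu_s)$, stopped the first time the belief leaves the neighborhood $[\mu^*,1]$ on which the super-solution holds, together with Doob's maximal inequality bounding the escape probability by $O\!\big((1-\mu_0)/(\mu_0-\mu^*)\big)$, should then pin down $V(\mu_0)=\rho(\mu_0)$ for $\mu_0$ (hence $\mu'$) sufficiently close to $1$.

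The main obstacle I expect is precisely this escape event. A one-shot-deviation check is not enough: at low-belief histories the prescribed deterrence price is \emph{not} the stage best reply, so there Firm $0$ would deviate, and the strategy is not subgame perfect; one must instead argue directly that a deviation which drives the belief out of the near-monopoly region is too unlikely and too costly to pay off. It is costly because a sizeable downward jump of the belief requires a price at which Firm $0$ sells to a fraction of consumers bounded away from one, so $\Pi_0\ll\rho(\mu)$; the delicate part is converting ``unlikely $+$ costly'' into the \emph{exact} inequality $V(\mu_0)\le\rho(\mu_0)$ and choosing a single threshold $\mu'$ that works uniformly in $\delta<1$, which is where the super-solution estimate, the convexity of $\rho$, and the martingale control of the belief must be combined with care.
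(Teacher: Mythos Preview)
Your core inequality---that $\rho(\mu)=2\ubar{\alpha}_\mu-1$ dominates the one-shot deviation payoff $(1-\delta)\Pi_0(\tau,0,\mu)+\delta\,E_\tau[\rho(\mu')]$ for $\mu$ near $1$---is exactly the paper's approach. The paper establishes it by contradiction: assuming a sequence $\mu_k\to1$ with profitable one-shot deviations $\tau_k$, it reduces via the Lipschitz bound on $\rho$ (Lemma~\ref{lem:td_convex}) to the scalar inequality $(1-\delta)[\rho(\mu_k)-(1-G_{\mu_k}(v_k))\tau_k]<\delta C\mu_k(1-\mu_k)[G_1(v_k)-G_0(v_k)]$ and contradicts it using first-order expansions of $G_i$ near $\ubar\alpha$ together with $\partial v_\mu/\partial\tau_0\to\infty$. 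Where you diverge is afterward: the paper simply asserts that checking one-shot deviations ``at the first stage'' suffices, whereas you note (correctly) that the one-shot deviation principle requires unimprovability at \emph{every} history, including off-path ones where $\mu_t$ may have fallen below the super-solution threshold.

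Your escape-probability route, however, does not close that gap. Stopping the supermartingale $X_t=\delta^{t-1}\rho(\mu_t)+(1-\delta)\sum_{s<t}\delta^{s-1}\Pi_s$ at the first exit from $[\mu^*,1]$ and bounding the exit probability by Doob yields at best
\[
V(\mu_0)\ \le\ \rho(\mu_0)\ +\ \Pr(\text{exit})\cdot\sup_{\mu<\mu^*}\bigl[V(\mu)-\rho(\mu)\bigr],
\]
and the correction term is $O(1-\mu_0)$ with a constant that does not vanish. Since the available headroom $1-\rho(\mu_0)$ is also $O(1-\mu_0)$, sending $\mu_0\to1$ never eliminates the slack and you do not reach the exact $V(\mu_0)\le\rho(\mu_0)$; you name this obstacle in your last paragraph but do not overcome it. One further point: the lemma's quantifier order is $\forall\delta\,\exists\mu'$, so $\mu'$ may depend on $\delta$---uniformity in $\delta$ is not required and that concern can be dropped.
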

\begin{proof}
Firm $0$ sets a price of $2\ubar{\alpha}_{\mu_t}-1$ that necessarily drives Firm $1$ out of the market. Therefore, no matter what price Firm $1$ offers it cannot make a positive profit and so no profitable deviation exists for it. To establish the claim we need to show that if $\mu_0$ is sufficiently large, then Firm $0$ also doesn't have a profitable deviation.

To see this, we show next that for a large enough initial prior $\mu$ (we omit the subscript $0$ that refers to the initial stage), Firm $0$ does not have a one-shot profitable deviation given this strategy profile. In fact, it is enough to establish this for the first stage.

Whenever Firm $0$ increases its price to $\tau,$ and the price profile becomes $\tau=(\tau,0)$  (clearly decreasing the price is not profitable), there are two possibilities that can happen in the first stage: either the consumer buys from Firm $0$, in which case we denote new public belief $\mu_{0}(\mu,\tau)$, or the consumer buys from Firm $1,$ in which case we denote new public belief $\mu_{1}(\mu,\tau)$. With this new notation we formulate the continuation payoff from a one-shot deviation at the first stage, as a function of the new price $\tau$:

\begin{align*}
&(1-\delta)(1-G_\mu(v_\mu(\tau,0))\tau+\\ \notag &\delta[(1-G_\mu(v_\mu(\tau,0)))(2\ubar{\alpha}_{\mu_{0}(\mu,\tau)}-1)+G_\mu(v_\mu(\tau,0))(2\ubar{\alpha}_{\mu_{1}(\mu,\tau)}-1)].
\end{align*}

We need to show that for all sufficiently large $\mu$ this continuation payoff is at most the payoff prior to a deviation, $2\ubar{\alpha}_{\mu}-1,$  for every $\tau$. Equivalently, we need to show that
\begin{align}\label{eq:contin}
&(1-\delta)[2\ubar{\alpha}_{\mu}-1-(1-G_\mu(v_\mu(\tau,0)))\tau]\geq \\
\notag &\delta[(1-G_\mu(v_\mu(\tau,0)))(2\ubar{\alpha}_{\mu_{0}(\mu,\tau)}-2\ubar{\alpha}_{\mu})+G_\mu(v_\mu(\tau,0))(2\ubar{\alpha}_{\mu_{1}(\mu,\tau)}-2\ubar{\alpha}_{\mu})]
\end{align}

Using Bayes, rule we infer the following three equalities:
$$
1-G_\mu(v_\mu(\tau,0))=\mu\bigg(1-G_{0}\big(v_\mu(\tau,0)\big)\bigg)+
(1-\mu)\bigg(1-G_{1}\big(v_\mu(\tau,0)\big)\bigg),
$$
$$
\mu_{0}(\mu,\tau)=\frac{\mu (1-G_0(v_\mu(\tau,0)))}{1-G_\mu(v_\mu(\tau,0))},
$$
and
$$
\mu_{1}(\mu,\tau)=\frac{\mu G_0(v_\mu(\tau,0))}{G_\mu(v_\mu(\tau,0))}.
$$

A simple calculation yields
\begin{align}
&\mu_{0}(\mu,\tau)-\mu=\frac{\mu(1-\mu)(G_1(v_\mu(\tau,0))-G_0(v_\mu(\tau,0)))}{1-G_\mu(v_\mu(\tau,0))},\label{eq:sstep23}\\
&\mu-\mu_{1}(\mu,\tau)=\frac{\mu(1-\mu)(G_1(v_\mu(\tau,0))-G_0(v_\mu(\tau,0)))}{G_\mu(v_\mu(\tau,0))}.\label{eq:sstep24}
\end{align}

In addition, since the function $\ubar{\alpha}_{\mu}$ is convex in $\mu$ with a bounded derivative (by Lemma \ref{lem:td_convex}), it is also a Lipschitz function of $\mu$. Therefore, there exists a constant $C$ such that
\begin{align}
&0<2\ubar{\alpha}_{\mu_{0}(\mu,\tau)}-2\ubar{\alpha}_{\mu}\leq C(\mu_{0}(\mu,\tau)-\mu),\label{eq:sstep23a}\\
&0<2\ubar{\alpha}_{\mu}-2\ubar{\alpha}_{\mu_{1}(\mu,\tau)}\leq C(\mu-\mu_{1}(\mu,\tau)).\label{eq:sstep24a}
\end{align}

%$$1-G_\mu(v_\mu(\tau,0))=\mu\bigg(1-G_{0}\big(v_\mu(\tau,0)\big)\bigg)+(1-\mu)\bigg(1-G_{1}\big(v_\mu(\tau,0)\big)\bigg),$$ is the probability that the consumer buys from Firm $0$ in the current period if it sets a price of $\tau$. Note that since Firm $1$ asks for the price $0$ we must have a full market and therefore the consumer buys from Firm $1$ with the complement probability. It follows from Bayes rule that whenever Firm $0$ sets a price of $\tau$ and the consumer buys from Firm $0$, the public belief in the next period is:
%$$\mu_{0}(\mu,\tau)=\frac{\mu (1-G_0(v_\mu(\tau,0)))}{1-G_\mu(v_\mu(\tau,0))},$$
%whereas if the consumer buys from Firm $1$, then the public belief in the next period is:
%$$\mu_{1}(\mu,\tau)=\frac{\mu G_0(v_\mu(\tau,0))}{1-G_\mu(v_\mu(\tau,0))}.$$

Instead of proving inequality \eqref{eq:contin},
by equations \eqref{eq:sstep23}, \eqref{eq:sstep24}, \eqref{eq:sstep23a}, and \eqref{eq:sstep24a} it is sufficient to prove that for all sufficiently large $\mu$ it holds for every $\tau\in[0,1]$ that
\begin{align}\label{eq:step31}
&(1-\delta)[2\ubar{\alpha}_{\mu}-1+(1-G_\mu(v_\mu(\tau,0)))\tau]\geq\\
&\notag [G_1(v_\mu(\tau,0))-G_0(v_\mu(\tau,0))] \delta\mu(1-\mu)C.
\end{align}
Assume by way of contradiction that this is not the case. Then there exists a sequence of priors $\{\mu_k\}_{k=1}^\infty\subset (0,1)$ with a limit $1$ and an appropriate sequence
$\{\tau_k\}_{k=1}^\infty$ of prices such that
\begin{align}\label{eq:step312}
&(1-\delta)[2\ubar{\alpha}_{\mu_k}-1-(1-G_{\mu_k}(v_{\mu_k}(\tau_k,0)))\tau_k]<\\
\notag&[G_1(v_{\mu_k}(\tau_k,0))-G_0(v_{\mu_k}(\tau_k,0))] \delta\mu_k(1-\mu_k)C.
\end{align}

Since $v_{\mu_k}(\tau_k,0)>\ubar{\alpha}$ (for otherwise we have an equality in \eqref{eq:step312}) we can rewrite \eqref{eq:step312} as follows:
\begin{align}\label{eq:aeq1}
\frac{2\ubar{\alpha}_{\mu_k}-1-(1-G_{\mu_k}(v_{\mu_k}(\tau_k,0)))\tau_k}{G_1(v_{\mu_k}(\tau_k,0))-G_0(v_{\mu_k}(\tau_k,0))}<
\frac{\delta}{1-\delta}\mu_k(1-\mu_k)C.
\end{align}
We rearrange \eqref{eq:aeq1} and get
	\begin{equation}\label{eq:step32}
	\begin{split}
	 &\frac{2\ubar{\alpha}_{\mu_k}-1-\tau_k}{G_1(v_{\mu_k}(\tau_k,0))-G_0(v_{\mu_k}(\tau_k,0))}+\frac{G_1(v_{\mu_k}(\tau_k,0))\tau_k}{G_1(v_{\mu_k}(\tau_k))-G_0(v_{\mu_k}(\tau_k))} < \frac{\delta}{1-\delta}\mu_k(1-\mu_k)C.
	\end{split}
	\end{equation}
Clearly, since $\lim_{k\rightarrow\infty}\mu_k=1,$ the limit of the right-hand side of \eqref{eq:step32} is zero and therefore in order to establish contradiction we show that
\begin{equation}\label{eq:step33}
	\begin{split}
	 &\frac{2\ubar{\alpha}_{\mu_k}-1-\tau_k}{G_1(v_{\mu_k}(\tau_k,0))-G_0(v_{\mu_k}(\tau_k,0))}+\frac{G_1(v_{\mu_k}(\tau_k,0))\tau_k}{G_1(v_{\mu_k}(\tau_k))-G_0(v_{\mu_k}(\tau_k))}
	\end{split}
	\end{equation}
is bounded away from zero as $k$ goes to infinity.	

We claim first that $\liminf_{k\rightarrow\infty}\frac{G_1(v_{\mu_k}(\tau_k))\tau_k}{G_1(v_{\mu_k}(\tau_k))-G_0(v_{\mu_k}(\tau_k))}\geq 1$.
To do so we recall equation \eqref{eq:step312}.
Note that as  $k$ goes to infinity, $1-\mu_k$ goes to $0$ and so the right-hand side of \eqref{eq:step312} goes to $0$. In addition, $\lim_{k\rightarrow\infty}2\ubar{\alpha}_{\mu_k}-1=1$ and so equation \eqref{eq:step312} implies that
$
\lim_{k\gor\infty}(1-G_{\mu_k}(v_{\mu_k}(\tau_k,0)))\tau_k \ge 1.
$
As both multipliers are in the unit interval this necessarily implies that both approach $1$.
Therefore, $\lim_{k\rightarrow\infty}\tau_k=1$ and
$\lim_{k\gor\infty}G_{\mu_k}(v_{\mu_k}(\tau_k,0))=0$. The latter then implies that
\begin{equation}\label{eq:gtll}
\lim_{k\gor\infty}v_{\mu_k}(\tau_k,0)=\ubar{\alpha}.
\end{equation}

We note that since $v_{\mu_k}(\tau_k,0)>\ubar{\alpha},$ Lemma \ref{lem:aux} implies that $1>G_1(v_{\mu_k}(\tau_k))-G_0(v_{\mu_k}(\tau_k))>0$.
Therefore ${G_1(v_{\mu_k}(\tau_k))\tau_k}{G_1(v_{\mu_k}(\tau_k))-G_0(v_{\mu_k}(\tau_k))}\geq 1$. Furthermore, since $\lim_{k\rightarrow\infty}\tau_k=1$ it holds that
\begin{equation}\label{eq:bafi1}
\liminf_{k\rightarrow\infty}\frac{G_1(v_{\mu_k}(\tau_k))\tau_k}{G_1(v_{\mu_k}(\tau_k))-G_0(v_{\mu_k}(\tau_k))}\geq 1.
\end{equation}

Therefore in order to show that the term in equation \eqref{eq:step33} is bounded away from zero and hence leads to a contradiction, it is sufficient to show that
\begin{equation}\label{eq:bafi}
\frac{2\ubar{\alpha}_{\mu_k}-1-\tau_k}{G_1(v_{\mu_k}(\tau_k,0))-G_0(v_{\mu_k}(\tau_k,0))}
\end{equation}
has a limit $0$ as $k$ goes to infinity, which establishes a contradiction.

By a standard first-order approximation
	it follows that for every $\epsilon>0$ there exists $x_\epsilon>\ubar{\alpha}$ such that for every $x\leq x_\epsilon$,
	$$(x-\ubar{\alpha})[g_i(\ubar{\alpha})-\epsilon] \leq G_i(x)\leq (x-\ubar{\alpha})[g_i(\ubar{\alpha})+\epsilon].$$
Since $\tau_k>2\ubar{\alpha}_{\mu_k}-1$, the numerator in \eqref{eq:bafi} is negative. By Lemma \ref{lem:aux} the denominator is positive. We use the approximation to get
	\begin{equation}
	\frac{2\ubar{\alpha}_{\mu_k}-1-\tau_k}{G_1(v(\tau_k,0))-G_0(v(\tau_k,0))}>
	 \frac{2\ubar{\alpha}_{\mu_k}-1-\tau_k}{(g_1(\ubar{\alpha})-g_0(\ubar{\alpha})+2\varepsilon)(v_{\mu_k}(\tau_k,0)-\ubar{\alpha})}.
	\end{equation}
	Now recall that $v_{\mu_k}(\tau_k,0)-\ubar{\alpha}=v_{\mu_k}(\tau_k,0)-v_{\mu_k}(2\ubar{\alpha}_{\mu_k}-1,0)=v'_{\mu_k}(\tilde{\tau}_k,0)(\tau_k-2\ubar{\alpha}_{\mu}+1)$ for some $\tilde{\tau}_k\in(2\ubar{\alpha}_{\mu}-1,\tau_k)$. Therefore,
\begin{equation}
\frac{2\ubar{\alpha}_{\mu_k}-1-\tau_k}{G_1(v_{\mu_k}(\tau_k,0))-G_0(v_{\mu_k}(\tau_k,0))}>\frac{-1}{(g_1(\ubar{\alpha})-g_0(\ubar{\alpha})+2\varepsilon)v'_{\mu_k}(\tilde{\tau}_k,0)}.
\end{equation}
Since $\tilde{\tau}_k\in(2\ubar{\alpha}_{\mu}-1,\tau_k)$ we have that $\lim_{k\rightarrow\infty} \tilde{\tau}_k=1$. Since Firm $1$ offers its product at price zero, the market is full.  Therefore, Lemma \ref{lem:llr} implies that $\lim_{k\rightarrow\infty}v'_{\mu}(\tilde{\tau}_k,0)=\infty$. As a result,  \eqref{eq:bafi} approaches zero as $k$ goes to infinity.
This establishes a contradiction and concludes the proof of Lemma \ref{lem:threshold}.

\end{proof}
We now conclude the proof of Theorem \ref{prop:non-myopic-lmpe-VL-NDE}.
\begin{proof}[\textbf{Proof of Theorem \ref{prop:non-myopic-lmpe-VL-NDE}}]
Fix a discount factor $\delta\in (0,1)$. By Lemma \ref{lem:threshold} there exists a $\mu'<1$ such that the strategy where at every history $h_t$ Firm $0$ asks a price of $2\ubar{\alpha}_{\mu_t}-1$ and Firm $1$ asks a price of $0$ is a Bayesian Nash equilibrium of the game with an initial prior $\mu_0\geq \mu'$. Consider an arbitrary prior $\mu_0$. We define a game where whenever $\mu_t\geq \mu'$ Firm $0$ and Firm $1$ are constrained to play actions $2\ubar{\alpha}_{\mu_t}-1$ and $0$ respectively with probability one. It is easy to see that this game possesses a Bayesian equilibrium $(\bar{\tau}_0,\bar{\tau}_1,\bar{\sigma})$. By Lemma \ref{lem:threshold} this equilibrium is also a Bayesian Nash equilibrium of the unconstrained game. Clearly, under $(\bar{\phi}_0,\bar{\phi}_1,\bar{\sigma})$ asymptotic learning must fail. This concludes the proof of Theorem  \ref{prop:non-myopic-lmpe-VL-NDE}.

\end{proof}
	
\section{Auxilliary Lemmas}\label{sec:Acem}
\begin{lemma}\label{lem:aux}
The ratio $\frac{G_1(r)}{G_0(r)}$ is non-increasing in $r$ and $\frac{G_1(r)}{G_0(r)}>1$ for all $r\in(\ubar{\alpha},\bar{\alpha}).$
In particular, $G_0$ first-order stochastically dominates $G_1$.
\end{lemma}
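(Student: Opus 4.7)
The plan is to exploit the monotone likelihood ratio structure built into the very definition of $p(s)$. The key identity is that $p(s)<x$ if and only if $f_1(s)/f_0(s)>(1-x)/x$, which follows immediately by rearranging $p(s)=f_0(s)/(f_0(s)+f_1(s))<x$. Consequently, the set $A_x:=\{s\in S: p(s)<x\}$ partitions $S$ into a region where $f_1\ge \tfrac{1-x}{x}f_0$ pointwise, and its complement where the reverse inequality holds.

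First I would integrate the pointwise bound $f_1\ge \tfrac{1-x}{x}f_0$ over $A_x$ against $\nu:=(F_0+F_1)/2$, obtaining $G_1(x)\ge \tfrac{1-x}{x}G_0(x)$, with strict inequality whenever $G_0(x)>0$, i.e., for $x>\ubar{\alpha}$. Next, for any $y>x$, the reverse inequality $f_1\le \tfrac{1-x}{x}f_0$ holds on $A_y\setminus A_x$, and integrating there gives $G_1(y)-G_1(x)\le \tfrac{1-x}{x}\bigl(G_0(y)-G_0(x)\bigr)$. Chaining these two bounds yields
\begin{equation*}
\frac{G_1(x)}{G_0(x)}>\frac{1-x}{x}\ge \frac{G_1(y)-G_1(x)}{G_0(y)-G_0(x)},
\end{equation*}
and cross-multiplying collapses this to $G_1(x)G_0(y)>G_0(x)G_1(y)$, i.e., $\frac{G_1(x)}{G_0(x)}>\frac{G_1(y)}{G_0(y)}$. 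Thus the ratio is strictly decreasing on $(\ubar{\alpha},\bar{\alpha})$, and in particular non-increasing on $[0,1]$.

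To deduce $G_1(r)/G_0(r)>1$ for $r\in(\ubar{\alpha},\bar{\alpha})$, I would take $y=1$ in the chain above: since $G_0(1)=G_1(1)=1$, the ratio at $y=1$ equals $1$, so $G_1(r)/G_0(r)>1$. First-order stochastic dominance $G_0(r)\le G_1(r)$ on $(\ubar{\alpha},\bar{\alpha})$ follows immediately, while it is trivial for $r\le \ubar{\alpha}$ (both sides zero) and $r\ge \bar{\alpha}$ (both sides one).

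The main obstacle is just careful bookkeeping at points where $f_0$ might vanish. The hypothesis that $F_0$ and $F_1$ are mutually absolutely continuous guarantees that $\{f_0=0\}$ and $\{f_1=0\}$ coincide up to a $\nu$-null set, so the likelihood ratio comparison makes sense wherever it contributes to the integrals. Combined with $G_0(x),G_1(x)>0$ throughout $(\ubar{\alpha},1]$, this ensures every denominator in the chain of inequalities is strictly positive, and the strict inequality $G_1(x)>\tfrac{1-x}{x}G_0(x)$ propagates cleanly to the desired strict bound $G_1(r)/G_0(r)>1$.
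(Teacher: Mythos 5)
Your proof is correct and self-contained. The paper's own "proof" of Lemma \ref{lem:aux} is a one-line citation to Lemma A1 of Acemoglu et al.\ (2011), so you are in effect supplying the standard argument that the authors delegate to a reference: the observation that $p(s)<x$ is equivalent to the likelihood-ratio bound $f_1(s)/f_0(s)>(1-x)/x$ is exactly what makes $p$ a monotone-likelihood-ratio statistic, and your two integrations (one over $A_x$, one over $A_y\setminus A_x$) followed by the mediant-style cross-multiplication is the textbook derivation that MLRP implies the $G_1/G_0$ ratio is non-increasing, which in turn yields first-order stochastic dominance.

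One small point worth tightening: when you write the chain
\begin{equation*}
\frac{G_1(x)}{G_0(x)}>\frac{1-x}{x}\ge \frac{G_1(y)-G_1(x)}{G_0(y)-G_0(x)},
\end{equation*}
the right-hand ratio is only meaningful when $G_0(y)>G_0(x)$. In the degenerate case $G_0(y)=G_0(x)$ one needs to note separately that $F_0(A_y\setminus A_x)=0$ forces $F_1(A_y\setminus A_x)=0$ by mutual absolute continuity, so $G_1(y)=G_1(x)$ as well and the ratio $G_1/G_0$ is constant on $[x,y]$, which is consistent with "non-increasing." Your final step taking $y=1$ is safe because $G_0(1)=1>G_0(r)$ for $r<\bar{\alpha}$, so there the denominator is strictly positive and the strict inequality $G_1(r)>G_0(r)$ goes through. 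With that minor caveat addressed, the argument is complete and matches what the cited lemma asserts.
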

\begin{proof}
The proof follows from the more general result that appears in Lemma A1 of \cite{Acemoglu2011}.
\end{proof}
%	\begin{customlemma}{A1 in \cite{Acemoglu2011}}
%	For any private belief distributions $(G_0,G_1),$ the following relations hold.
%	\begin{itemize}
%		\item For all $r\in(0,1),$ we have
%		$$
%		\frac{dG_0}{dG_1}(r)=\frac{1-r}{r}.
%		$$
%		\item We have
%		$$G_0(r)\ge(\frac{1-r}{r})G_1(r)+\frac{r-z}{2}G_1(z)\mbox{ for all }0<z<r<1,$$
%		$$1-G_1(r)\ge(1-G_0(r))(\frac{r}{1-r})+\frac{w-r}{2}(1-G_1(w))\mbox{ for all }0<r<w<1.$$
%		\item The ration $\frac{G_0(r)}{G_1(r)}$ is non-increasing in $r$ and $\frac{G_0(r)}{G_1(r)}>1$ for all $r\in(\ubar{\alpha},\bar{\alpha}).$
%	\end{itemize}
%\end{customlemma}
\begin{cor2}
Let $(\overline{\sigma},\overline{\price}_0,\overline{\price}_1)$ be a myopic Bayesian Nash equilibrium. If asymptotic learning holds, then conditional on state $\omega\in\Omega$,
$$\lim_{t\rightarrow\infty} \mathbf{P}_{(\overline{\sigma},\overline{\price}_0,\overline{\price}_1)}(\{\sigma_{t}(\mu_t,s,\overline\tau(\mu_t))=\omega\}|\omega)=1.$$
\end{cor2}
\begin{proof}
Without loss of generality assume that the realized state is $\omega=0$.  Since asymptotic learning holds, we have that $\lim_t\mu_t=1$ almost surely. By Corollary \ref{cor: along the path} we have that  $\lim_{t\gor\infty}v_{\mu_t}(\tau_{0}^t(\mu_t),\tau_{1}^t(\mu_t))=\ubar{\alpha}$. Therefore,
\begin{align*}
& \lim_{t\rightarrow\infty} \mathbf{P}_{(\overline{\sigma},\overline{\price}_0,\overline{\price}_1)}(\{\sigma_{t}(\mu_t,s,\overline\tau(\mu_t))=0\}|\omega=0)= \lim_{t\rightarrow\infty}G_0(v_{\mu_t}(\tau_{0}^t(\mu_t),\tau_{1}^t(\mu_t)))=\\& G_0(\ubar{\alpha})=1.
\end{align*}

%Assume to the contrary that $\lim_{t\rightarrow\infty} Pr_F(\{s|\sigma^{*}(\mu_t,s,\tau^{*}(\mu_t))=\hat\omega\})=1-\delta$. This means that there is a positive measure of signals for which the consumer chooses either $1$ or $e$. If the vanishing likelihood property does not hold, then for $\varepsilon$ small enough,  by Theorem \ref{thm:SSG_results}, the only SPE in $\Gamma(\mu_t)$ is DE. By Proposition \ref{prop: DE price}, in DE the market is full, a contradiction.
%  If the vanishing likelihood property holds, then we have a reached a contradiction to Corollary \ref{cor: along the path}.
\end{proof}

\end{document}